\documentclass{article}

\usepackage[utf8]{inputenc}
\usepackage{hyperref}
\usepackage{booktabs}
\usepackage{amsfonts}
\usepackage{amsmath,amssymb,amsthm}
\usepackage{nicefrac}
\usepackage{algorithm}
\usepackage{algpseudocode}
\usepackage{graphicx}
\usepackage{subcaption}
\usepackage{multirow}
\usepackage{enumitem}
\usepackage{xcolor}
\usepackage{tikz}
\usepackage{natbib}
\usepackage{authblk}
\usepackage{bbm}
\usepackage{fancyhdr}
\usepackage{extramarks}
\usepackage{calc}  
\usepackage{ltablex}
\usepackage{commath}
\usepackage{mathtools}
\usepackage{bm}

\hypersetup{
    colorlinks=true,
    linkcolor=blue,
    filecolor=magenta,      
    urlcolor=cyan,
    citecolor = blue,
    pdftitle={Overleaf Example},
    pdfpagemode=FullScreen,
    }

\newtheorem*{assumption*}{\assumptionnumber}
\providecommand{\assumptionnumber}{}
\makeatletter
\newenvironment{assumption}[1]
 {%
  \renewcommand{\assumptionnumber}{Assumption #1}%
  \begin{assumption*}%
  \protected@edef\@currentlabel{#1}%
 }
 {%
  \end{assumption*}
 }
\makeatother

\newtheorem{prop}{Proposition}
\newtheorem{thm}{Theorem}
\newtheorem{lem}{Lemma}

\newtheorem{Ex}{Example}

\newcommand{\E}{\mathbb{E}}

\title{
    \textmd{\textbf{A Single-Sample Polylogarithmic Regret Bound for Nonstationary Online Linear Programming}}\\
    }
\author[$\dagger$]{Haoran Xu\thanks{Corresponding Author: haoran14@stanford.edu}}
\author[$\ddagger$]{Owen Shen}

\author[$\dagger$]{Peter Glynn}
\author[$\dagger$]{Yinyu Ye}
\author[$\ddagger$]{Patrick Jaillet}

\affil[$\dagger$]{Department of Management Science and Engineering, Stanford University}
\affil[$\ddagger$]{Operations Research Center, Massachusetts Institute of Technology}
\date{}

\begin{document}
\maketitle
\begin{abstract}
    We study nonstationary Online Linear Programming (OLP), where $n$ orders arrive sequentially with reward-resource consumption pairs that form a sequence of independent, but not necessarily identically distributed, random vectors. At the beginning of the planning horizon, the decision-maker is provided with a resource endowment that is sufficient to fulfill a significant portion of the requests. The decision-maker seeks to maximize the expected total reward by making immediate and irrevocable acceptance or rejection decisions for each order, subject to this resource endowment. We focus on the challenging single-sample setting, where only one sample from each of the $n$ distributions is available at the start of the planning horizon.

    We propose a novel re-solving algorithm that integrates a dynamic programming perspective with the dual-based frameworks traditionally employed in stationary environments. In the large-resource regime—where the resource endowment scales linearly with the number of orders—we prove that our algorithm achieves $O((\log n)^2)$ regret across a broad class of nonstationary distribution sequences. Our results demonstrate that polylogarithmic regret is attainable even under significant environmental shifts and minimal data availability, bridging the gap between stationary OLP and more volatile real-world resource allocation problems. 
\end{abstract}

\section{Introduction}
Resource allocation is a core problem in Operations Research. It focuses on allocating limited resources to maximize overall benefit. In many real-world applications, these decisions must be made in an ``online" fashion. The decision-maker must commit to an action immediately as each request arrives, and the decisions are made without knowing what the future holds. When the resource allocation problem is linear, Online Linear Programming (OLP) (\cite{Agrawal2014}) is a highly effective framework for solving these problems. Its applications on real-world problems such as online advertisement and revenue management have been extensively studied in the literature (\cite{li2022online}).

While standard models of OLP often assume a stable environment, real-world problems are frequently nonstationary. This means the patterns of arrivals and the value of requests can change over time due to seasonal patterns or long-term trends (\cite{Jiang2025}). Ignoring these changes can lead to poor performance, such as running out of resources too early or missing high-value opportunities. Thus, developing OLP algorithms that can adapt to these nonstationary environments is critical. These algorithms must maintain strong performance over different kinds of nonstationary environment and can be implemented when data is scarce and limited historical information is available.

In light of these challenges, this paper seeks to address nonstationary OLP problems. Specifically, we investigate the case that the reward-resource consumption pairs follow a sequence of independent but not necessarily identical distributions. We focus on a minimal data availability situation in which only one sample from each of these distributions are available at the start of the planning horizon. The main research question of this paper is: can we design a near-optimal algorithm for nonstationary OLP under this single-sample setting?
\subsection{Main Contribution}
In this work, we consider a nonstationary OLP framework involving $n$ sequentially arriving orders and a fixed resource endowment provided at the start of the planning horizon. The decision-maker must make immediate and irrevocable acceptance or rejection decisions for each order to maximize the total expected reward, subject to the available resource endowment. We allow the marginal distributions of both rewards and resource consumptions to be independent but non-identical. We assume the marginal distributions of resource consumption share a common support—which may be discrete or continuous—while the conditional distribution of rewards, given resource consumption, is required to be continuous. Notably, we study the single-sample setting: at the start of the horizon, the decision-maker has access to only one sample from each of the $n$ reward-resource consumption distributions. Furthermore, we assume a large-resource regime where the resource endowment scales linearly with the number of orders.

Adopting a dynamic programming perspective, we propose a dual-based pricing algorithm for nonstationary OLP. We define the system state as the average remaining resource. Upon the arrival of each order, the algorithm re-solves a linear program formulated using the current system state and the provided samples of both current and future orders. The optimal dual solution of this linear program yields a price that serves as a threshold, and an order is accepted if its reward exceeds this threshold. We measure the algorithm's performance using regret, defined as the difference between the expected optimal total reward of the offline problem and the expected reward earned by our algorithm. Under a set of boundedness and smoothness assumptions on the nonstationarity, we prove that our algorithm achieves $O(\log^2 n)$ regret for any distribution sequence satisfying these assumptions.

Our regret analysis centers on two pillars: establishing a uniform dual convergence result for the algorithm's prices and characterizing the evolution of the system state. We first demonstrate the existence of a near-optimal offline policy that mirrors our algorithm's structure but utilizes a constant price throughout the planning horizon. We then derive our regret bounds by proving that the prices generated by our algorithm closely track this fixed ``offline price". A key theoretical finding of this analysis is that, despite the nonstationary arrivals and rewards, the prices utilized by the algorithm remain relatively stationary over the planning horizon.

Crucially, while the prices are stable, the system states themselves evolve nonstationarily. To analyze the relationship between the prices and the system states, we introduce the concept of $\delta$-path, defined as a sequence of points in the state space over the planning horizon. We first construct a neighborhood around each point along the $\delta$-path. Our uniform dual convergence result than establishes that if the system state lies within the neighborhood of the $j$-th point of the $\delta$-path when order $j$ arrives, the mean squared error between the computed price and the offline price is $O(\frac{\log(n-j)}{n-j})$ uniformly over the neighborhood. Furthermore, we show that the system state remains within these neighborhoods until the final stages of the planning horizon. Together, these results yield the $O(\log^2 n)$ regret upper bound.
\subsection{Related Works}
Literature on Online Linear Programming (OLP) generally focuses on two primary models: the random permutation model (\cite{Agrawal2014}) and the stochastic input model (\cite{li2022online}). This paper adopts the stochastic input model. Most research under this model assumes that the rewards and the resource consumptions follow independent and identically distributed (i.i.d.) processes. Within the i.i.d. setting, two major distributional assumptions prevail. The first assumes the joint distribution of the reward and the resource consumption has finite support, which typically leads to constant regret (\cite{Chen.et.al.2023}, \cite{Vera2021}). The second assumes the conditional distribution of reward, given the resource consumption, is continuous, which generally yields logarithmic regret (\cite{li2022online}, \cite{Bray2024}). Our work focuses on the setting of continuous conditional distributions. Under this setting, when the available resources grow proportionally with the number of orders, \cite{li2022online} proposed the Action-History-Dependent Learning Algorithm (AhdLA) and established an $O(\log n \log \log n)$ regret bound. \cite{Bray2024} studied the same algorithm under slightly broader assumptions, establishing an $\Omega(\log n)$ regret lower bound and refining the dual convergence analysis to prove a tighter $O(\log n)$ upper bound. In this paper, we further generalize the problem setting by relaxing the i.i.d. requirement to consider distributions that are independent but non-identical.

AhdLA belongs to the class of dual-based ``re-solving" algorithms, which compute resource prices by solving a dual linear program at each time period. These prices determine the thresholds used to accept or reject incoming orders. While these algorithms can achieve optimal regret (\cite{Bray2024}), they are often computationally demanding. As an alternative, first-order methods compute prices via iterative gradient descent. While \cite{lisunye2023} and \cite{balseiro2020dual} establish $O(\sqrt{n})$ regret for such methods, \cite{balseiro2020dual} further demonstrates that this performance holds for nonstationary inputs that are ergodic or seasonal. \cite{Jiang2025} explored nonstationary inputs without requiring ergodic or seasonal properties, assuming instead that rewards and consumptions are independent but non-identical. By introducing a ``deviation budget" ($W_n$) to measure the accuracy of prior estimates of the distributions, they showed that their Informative Gradient Descent Algorithm achieves $O(\max\{\sqrt{n}, W_n\})$ regret. \cite{CheungLyu2025} proposes a algorithmic design framework for more general nonstationary online planning problem. When this framework is applied to the setting in \cite{Jiang2025}, fewer samples are required to construct good prior estimates of the distributions to achieve $O(\sqrt{n})$ regret.

Moving to an even more constrained data environment, \cite{Ghuge2025} investigated the ``single-sample" setting, where only one sample from each distribution is available. They proved that a $1-\epsilon$ approximation ratio is achievable, provided the resource endowment is $\tilde\Omega(1/\epsilon^6)$. Our paper also operates in the single-sample setting but leverages additional structural assumptions to derive superior bounds. We demonstrate that $O(\log^2 n)$ regret—equivalent to a $1 - O(\frac{\log^2 n}{n})$ approximation ratio—can be achieved with standard $\Theta(n)$ resources. This work further distinguishes itself from \cite{Feng2026}, who also assume a single-sample setting but restrict resource consumption to finite support and require conditional reward distributions to remain identical over time. In contrast, our model allows both the conditional reward distributions to be non-identical and the resource consumption support to be infinite, while achieving a slightly improved regret upper bound of $O(\log^2 n)$.

In addition, the concept of variation budget introduced in \cite{Besbes2015} for general stochastic optimization is used to develop conditions under which poly-algorithm can be achieved in our work. Furthermore, our work is also related to papers studying other classes of stochastic optimization problem in nonstationary environment, such as Bandit with Knapsack (\cite{Besbes2014},\cite{Liu2022},\cite{Zhang2024}), Reinforcement Learning (\cite{Lykouris2025}), and online allocation of reusable resources (\cite{Zhang2025}).

The paper is organized as follows. We introduce the problem formulation, assumptions and the algorithm in Section \ref{Section: Problem Setup and Algorithm}. The concept of $\delta$-path and the uniform dual convergence result is given in Section \ref{Section: Dual Convergence}. The analysis of the average remaining resource process is given in Section \ref{Section: RemainResource}. The regret upper bound is provided in Section \ref{Section: Regret Analysis}. Finally, we conclude our work and provide future extensions in Section \ref{Section: Conclusion}. Proofs of all the lemmas, propositions and theorems are in Section \ref{Section: Appendix}.
\section{Problem Setup and Algorithm}\label{Section: Problem Setup and Algorithm}. 
A linear resource allocation problem has the following linear integer programming (LIP) formulation.
\begin{equation}\label{eqn: OfflinePrimalLIP}
    \begin{array}{lll}
    \max & \sum_{j=1}^n u_{j,n}x_{j,n}    &  \\
    S.T  & \sum_{j=1}^n a_{j,n}x_{j,n}\leq b_{0,n}&\\
    & x_{j,n}\in \{0,1\} & \forall 1\leq j\leq n
    \end{array}
\end{equation}
where $n$ is the total number of orders, and $m$ is the number of types of resources. $b_{0,n}\in\mathbb{R}_{+}^m$ is the total amount of available resources, and  $(u_{j,n},a_{j,n})\in\mathbb{R}_{+}\times\mathbb{R}_{+}^{m}$ is the pair of the reward and the resource consumption of the $j$th order of these $n$ orders. $x_{j,n}$ is the decision variable which equals $1$ if the decision maker (DM) accepts order $j$ and equals $0$ otherwise. In the problem of OLP (\cite{Agrawal2014}), the DM needs to solve the above linear resource allocation problem when the $n$ orders arrive sequentially. Once an order arrived, the DM needs to make immediate and irrevocable decision. In this paper, we model $\{(u_{j,n},a_{j,n})\}_{j=1}^n$ as a stochastic process defined on some probability space $(\Omega,\mathcal{F},\mathbb{P})$. Let $\mathcal{F}_{0}=\{\emptyset,\Omega\}$ and $\mathcal{F}_j$ be the sigma algebra generated by $\{(u_{k,n},a_{k,n})\}_{k=1}^j$. Denote the algorithm used by the DM as $\pi$ and the decision on the $j$th order of the $n$ orders under $\pi$ as $x_{j,n}^{\pi}$. Then, an algorithm $\pi$ is feasible in OLP if $x_{j,n}^{\pi}$ is measurable with respect to $\mathcal{F}_t$ and $\sum_{j=1}^na_{j,n}x_{j,n}^{\pi}\leq b_{0,n}$ almost surely. Define the total reward earned by an algorithm $\pi$ to be
\begin{equation*}
    R_n(\pi) = \sum_{j=1}^n u_{j,n}x_{j,n}^{\pi}
\end{equation*} 
The goal of the DM is to maximize the expected total reward $\E[R_n(\pi)]$ over all feasible policy $\pi$. 

We call the following linear relaxation of (\ref{eqn: OfflinePrimalLIP}) the offline LP.
\begin{equation}\label{eqn: OfflinePrimalLP}
    \begin{array}{lll}
    \max & \sum_{j=1}^n u_{j,n}x_{j,n}    &  \\
    S.T  & \sum_{j=1}^n a_{j,n}x_{j,n}\leq b_{0,n}&\\
    & 0\leq x_{j,n}\leq 1 & \forall 1\leq j\leq n
    \end{array}
\end{equation}
and denotes its optimal objective value as $R^{off}_n$. We consider nonstationary OLP problem in the sense that the stochastic process $\{(u_{j,n},a_{j,n})\}_{j=1}^n$ is nonstationary. Let $P=\{P_n\}_{n=1}^{+\infty}$ be a sequence of distributions, where $P_n$ is the joint distribution of $\{(u_{j,n},a_{j,n})\}_{j=1}^n$ when there are $n$ orders in total. When there are $n$ orders in total, the regret of an algorithm $\pi$ under $P$ is defined to be
\begin{equation*}
    Regret_n(\pi,P) = \mathbb{E}_{P_n}\left[R^{off}_n - R_n(\pi)\right]
\end{equation*}
In this paper, we consider the nonstationary OLP problems under the ``large resource'' scenario. Let $\mathbb{R}_{+}^m$ be the set of vectors in $\mathbb{R}^m$ with all entries positive.
\begin{assumption}{1}\label{Assumption: Asymptotic Regime}
    There exists $d_0\in\mathbb{R}^m_{+}$ such that $b_{0,n}=n\cdot d_0$.
\end{assumption}
We call $d_0$ the initial average resource, and we want to study the performance of an algorithm when $n$ goes to infinity with $d_0$ fixed. When nonstationarity is considered in OLP problems, an algorithm that performs well over a reasonably large class of distributions of the nonstationary process $\{(u_{j,n},a_{j,n})\}_{j=1}^n$ is usually desired. Let $\mathcal{P}$ be the family of sequences of distributions that we are interested in. The goal of this paper is to find appropriate regularity conditions on $\mathcal{P}$ and construct an algorithm $\pi$ under which there exists a constant $C$ such that for all $P\in\mathcal{P}$, when $n$ is large enough,
\begin{equation*}
    Regret_n(\pi,P)\leq C (\log n)^2
\end{equation*} 
\subsection{Single-sample Setting}
We certainly prefer less restrictive regularity conditions so that the proposed algorithm works on larger $\mathcal{P}$; however, how strong the regularity conditions are depends on how much we know about the nonstationarity. When the distribution of $\{(u_{j,n},a_{j,n})\}_{j=1}^n$ is completely unknown for each $n$, we provide the following example shows that, even if there are only two sequences of distributions in $\mathcal{P}$ and each of the sequence of distributions is uniformly bounded, there is no feasible online algorithm that has sublinear regret for all $P\in\mathcal{P}$ 
\begin{Ex}\label{Example: LinearRegret}
    Let $\mathcal{P} = \{P_{1},P_{2}\}$, where $P_1=\{P_{1,n}\}_{n=1}^{+\infty}$ and $P_2=\{P_{2,n}\}_{n=1}^{+\infty}$. Under both $P_{1,n}$ and $P_{2,n}$, $a_{j,n}$'s are i.i.d., and $a_{1,n}=1$ almost surely. Under $P_{1,n}$, $\{u_{j,n}\}_{j=1}^n$ a sequence of i.i.d. $Uniform(0,1)$ random variables. Under $P_{2,n}$, $\{u_{j,n}\}_{j=1}^n$ is a sequence of independent random variables such that $u_{j,n}\sim Uniform(0,1)$ if $j\leq \left\lceil\frac{n}{2}\right\rceil$, and $u_{j,n}\sim Uniform(2,3)$ otherwise. The initial average resource is $d_0=\frac{1}{4}$. Then, $\max\{Regret_n(\pi,P_1),Regret_n(\pi,P_2)\}$ is $\Omega(n)$ for any feasible online algorithm $\pi$.
\end{Ex}
Thus, to achieve polylogarithmic regret for each sequences of distributions in a reasonably large distribution classes $\mathcal{P}$, some information regarding the nonstationarity is required. In this paper, we focus on the following \textit{single-sample} setting.
\begin{assumption}{2}\label{Assumption: Single-sample}
    $\{\tilde u_{j,n}, \tilde a_{j,n}\}_{j=1}^n$ is available at the beginning of the planning horizon such that (i) $\{\tilde u_{j,n}, \tilde a_{j,n}\}_{j=1}^n$ and $\{u_{j,n}, a_{j,n}\}_{j=1}^n$ are independent. (ii) $\{\tilde u_{j,n}, \tilde a_{j,n}\}_{j=1}^n$ and $\{u_{j,n}, a_{j,n}\}_{j=1}^n$ are equal in distribution.
\end{assumption}
\subsection{Algorithm}
We can take a dynamic programming perspective on the offline LP (\ref{eqn: OfflinePrimalLP}). Define
\begin{equation}\label{eqn: OfflineDP}
    V_{j,n}(b) =\left\{
        \begin{array}{ll}
            \max u_{j+1,n}x_{j+1,n} + V_{j+1,n}(b-a_{j+1,n}x_{j+1,n})\quad\mathrm{s.t\;} a_{j,n}x_{j,n}\leq b,\; 0\leq x_{j,n}\leq 1 & \qquad\mathrm{if}\;j<n \\
           0  &\qquad \mathrm{if}\;j = n 
        \end{array}
    \right.   
\end{equation}
Then, $V_{0,n}(b_{0,n})=R_n^{off}$. In addition, for all $0\leq j< n$
\begin{equation*}
    \begin{array}{llll}
        V_{j,n}(b) = &\max & \sum_{k=j+1}^n u_{k,n}x_{k,n}    &  \\
        &S.T  & \sum_{k=j+1}^n a_{k,n}x_{k,n}\leq b&\\
        && 0\leq x_{k,n}\leq 1 & \forall j+1\leq k\leq n
    \end{array}
\end{equation*}
After normalizing the objective function by dividing the total number of orders left, the dual form of the above linear program is
\begin{equation*}
    \begin{array}{llll}
        \min & \frac{b^Tp}{n-j} + \frac{1}{n-j}\sum_{k=j+1}^n y_{k,n}    &  \\
        S.T  & u_{k,n} - {a_{k,n}}^Tp\leq y_{k,n} & \forall j+1\leq k\leq n\\
        & p\geq 0,\; y_{k,n}\geq 0& \forall j+1\leq k\leq n
    \end{array}
\end{equation*}
Let $p_{j,n}\left(\frac{b}{n-j}\right)$ be the optimal solution of this dual problem, and let $x^{DP}_{j+1,n}(b)$ be given by evaluating the optimality equation (\ref{eqn: OfflineDP}). By complementary slackness,
\begin{equation*}
    x^{DP}_{j+1,n}(b) = \left\{
    \begin{array}{ll}
       1  &  \mathrm{if}\; u_{j+1,n}>{a_{j+1,n}}^Tp_{j,n}\left(\frac{b}{n-j}\right)\\
       0  &  \mathrm{if}\; u_{j+1,n}<{a_{j+1,n}}^Tp_{j,n}\left(\frac{b}{n-j}\right)
    \end{array}
    \right.
\end{equation*}
When the case $u_{j+1,n}={a_{j+1,n}}^Tp_{j,n}\left(\frac{b}{n-j}\right)$ happens rarely, the above dynamic programming perspective motivates a natural dual-based algorithm $\pi$. Before the arrival of order $j+1$, let $b_{j,n}$ be the amount of remaining resources and we compute an estimator $p^{\pi}_{j,n}$ of the dynamic programming price $p_{j,n}\left(\frac{b_{j,n}}{n-j}\right)$. When order $j+1$ arrives, we then accept it if $u_{j,n}>{a_{j,n}}^Tp^{\pi}_{j,n}$. To construct a good estimator of $p_{j,n}\left(\frac{b_{j,n}}{n-j}\right)$, we rewrite the dual LP above as
\begin{equation*}
    \min_{p\geq 0} \frac{b^Tp}{n-j} + \frac{1}{n-j}\sum_{k=j+1}^n \left(u_{j,n} - {a_{j,n}}^Tp\right)^+
\end{equation*}
Then, it can be viewed as a sample average approximation of the following problem with a single sample for the distributions of the reward and the resource consumption of each of the remaining order.
\begin{equation*}
    \min_{p\geq 0} \frac{b^Tp}{n-j} + \frac{1}{n-j}\sum_{k=j+1}^n \E_{P_n}\left[\left(u_{j,n} - {a_{j,n}}^Tp\right)^+\right]
\end{equation*}
Although we do not know $P_n$, we can use the samples given in Assumption \ref{Assumption: Single-sample} to form another single-sample sample average approximation problem, which provides us an estimator that can be used in the aforementioned dual-based algorithm. Details of the algorithm are given in Algorithm \ref{alg: Simulater}. 
\begin{algorithm}[H]
    \caption{}\label{alg: Simulater}
    \begin{algorithmic}[1]
        \State Input the initial average resource $d_0$ and a random sample $\left\{(\tilde u_{j,n},\tilde a_{j,n})\right\}_{j=1}^n$.   
        \State Set $b_{0,n}=nd_0$
        \For{$j=1,\cdots,n$}
            \State \begin{equation*}
                p^{\pi}_{j-1,n} = \arg\min_{p\geq 0} {b_{j-1,n}}^Tp + \sum_{k=j}^{n}\left(\tilde u_{k,n}-{\tilde a_{k,n}}^Tp\right)^+
            \end{equation*}
            \State $x_{j,n}^{\pi}=\mathbbm{1}\left\{u_{j,n}>{a_{j,n}}^{T}p_{j-1,n}^{\pi}\right\}\mathbbm{1}\left\{b_{j-1,n}\geq a_{j,n}\right\}$
            \State $b_{j,n}=b_{j-1,n}-a_{j,n}x_{j,n}^{\pi}$
        \EndFor
    \end{algorithmic}
\end{algorithm}
Algorithm \ref{alg: Simulater} can be viewed as a generalization of the Action-History-Dependent Learning Algorithm (AhdLA) studied in \cite{li2022online} and \cite{Bray2024} to the nonstationary setting under the single-sample assumption. In addition, from the computation perspective, the computation complexity of Algorithm \ref{alg: Simulater} is the same as AhdLA.
\subsection{Boundedness and Smoothness Assumptions on Distributions Classes}
When we analyze the regret of Algorithm \ref{alg: Simulater}, we focus on the distribution classes satisfying the following boundedness and smoothness assumptions.
\begin{assumption}{3}\label{Assumption: Nonstationarity}
    There exists positive real numbers $\alpha$, $\underline{a}$, $\lambda$, $\bar L$, $\bar u$, $\bar\mu$, $\underline{\mu}$, and $\Xi\subseteq [0,\alpha]^m$ such that for any $P\in\mathcal{P}$,
    \begin{enumerate}[label=(\alph*)]
        \item $\left\{\left(a_{j,n},u_{j,n}\right)\right\}_{j=1}^n$ is a sequence of independent random pair in $\mathbb{R}_{+}\times\mathbb{R}^{m}_{+}$ for all $n$.
        \item $\Xi$ is the support of $a_{j,n}$, the smallest entry of $\E[a_{j,n}]$ is greater than $\underline{a}$, and the smallest eigenvalue of $\mathbb{E}_{P_n}\left[a_{j,n}{a_{j,n}}^T\right]$ is greater than $\lambda$ for all $j$ and $n$.
        \item There exists a positive continuously differentiable function $v:[0,1]\times[0,\alpha]^m\rightarrow [0,\bar \mu]$ such that $\norm{\nabla v}_2\leq \bar L$ and the restriction of $v\left(\frac{j}{n},\cdot\right)$ to $\Xi$ is either the probability mass function or the probability density function of $a_{j,n}$ for all $j$ and $n$.
        \item There exists a continuously differentiable function $f(t,a,u):[0,1]\times[0,\alpha]^m\times[0,\bar u]\rightarrow [\underline{\mu},\bar\mu]$ such that $\norm{\nabla f}_2\leq \bar L$ and $f\left(\frac{j}{n},a,\cdot\right)$ is the conditional density of $u_{j,n}$ given $\left\{a_{j,n}=a\right\}$ for all $a\in\Xi$, $j$, and $n$.
    \end{enumerate}
\end{assumption}
\noindent We will do regret analysis by fixing a sequence of distributions $P\in\mathcal{P}$, and the number of orders is included in the notation of the rewards and the resource consumptions. Thus, to simplify the notation, we will use $\E[\cdot]$ and $P\{\cdot\}$ to denote expectation and probability instead of $\E_{P_n}[\cdot]$ and $P_n\{\cdot\}$.  

Assumption \ref{Assumption: Nonstationarity} (a) relaxes the standard i.i.d. assumptions in the literature to allow us consider nonstationary rewards and resource consumptions. We emphasize that the joint distribution of the reward and the resource consumption of the $j$th order may change with the total number of orders $n$ under Assumption \ref{Assumption: Nonstationarity} (a). Assumption \ref{Assumption: Nonstationarity} (b), (c), (d) are regularity conditions on the boundedness and smoothness of the distribution of $\{u_{j,n}, a_{j,n}\}_{j=1}^n$ required by our regret analysis. We introduce several useful Lemmas and Propositions implied by these assumptions. Define
\begin{equation*}
    g^*_{j,n}(p,d) = d^Tp + \frac{1}{n-j}\sum_{k=j+1}^n\E\left[\left(u_{k,n}-{a_{k,n}}^Tp\right)^+\right]
\end{equation*}
and for a fixed $d$, define
\begin{equation*}
    p^*_{j,n}(d)\in\arg\min_{p\geq 0} g^*_{j,n}(p,d).
\end{equation*}
For simplicity, we write $g^*_{0,n}(p,d)$ as $g^*_n(p,d)$ and $p^*_{0,n}(d)$ as $p^*_n(d)$. As $g^*_{j,n}(\cdot,\cdot)$'s are functions only depend on the distributions of the rewards and the resource consumptions, we call the optimization problems defining $p_{j,n}^*(\cdot)$'s the \textit{population problems} and call $p_{j,n}^*(\cdot)$'s the \textit{population prices}. For now, we denote $p^*_{j,n}(d)$ as an arbitrary element in the optimal solution set and provide a boundedness result in the following Lemma. A uniqueness result will be given later. Let $d^{min}$ be the smallest entry of the vector $d$.  
\begin{lem}\label{Lemma: BoundedPrice-I}
    Under Assumption \ref{Assumption: Nonstationarity}, for any $P\in\mathcal{P}$, if $d^{min}>0$, then $\norm{p_{j,n}^*(d)}_1\leq\frac{\bar u}{d^{min}}$ for all $j$ and $n$.
\end{lem}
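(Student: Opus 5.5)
The plan is to compare the objective value at any minimizer $p^*=p^*_{j,n}(d)$ with its value at $p=0$. By optimality,
\begin{equation*}
    g^*_{j,n}(p^*,d)\leq g^*_{j,n}(0,d) = \frac{1}{n-j}\sum_{k=j+1}^n \E\left[(u_{k,n})^+\right]\leq \bar u .
\end{equation*}
The last inequality holds because, by Assumption \ref{Assumption: Nonstationarity} (d), the conditional density of $u_{k,n}$ is supported on $[0,\bar u]$. Hence $0\leq u_{k,n}\leq\bar u$ almost surely, and each term $\E[(u_{k,n})^+]$ is at most $\bar u$.

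Next we lower bound the objective at $p^*$. Each expectation term $\E[(u_{k,n}-a_{k,n}^Tp^*)^+]$ is nonnegative, so
\begin{equation*}
    g^*_{j,n}(p^*,d)\geq d^Tp^*.
\end{equation*}
Since $p^*\geq 0$ and every entry of $d$ is at least $d^{min}>0$, we have $d^Tp^*\geq d^{min}\sum_i p^*_i = d^{min}\norm{p^*}_1$.

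Combining the two displays gives $d^{min}\norm{p^*}_1\leq \bar u$, that is, $\norm{p^*_{j,n}(d)}_1\leq \bar u/d^{min}$. The argument applies to every element of the optimal solution set, so it does not matter which minimizer is chosen, and none of the constants depend on $j$ or $n$.

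There is no real obstacle here. The only point needing care is the justification that $u_{k,n}\in[0,\bar u]$ almost surely, which comes from the domain of $f$ in Assumption \ref{Assumption: Nonstationarity} (d). For the statement to be meaningful, existence of a minimizer is also implicitly assumed. If one wants to argue existence, it follows from the same bound: the objective is continuous, and it exceeds $\bar u\geq g^*_{j,n}(0,d)$ whenever $\norm{p}_1>\bar u/d^{min}$, so the minimization can be restricted to a compact set.
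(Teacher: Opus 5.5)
Your proposal is correct and matches the paper's proof, which uses the same chain $d^{min}e^Tp^*_{j,n}(d)\leq d^Tp^*_{j,n}(d)\leq g^*_{j,n}(p^*_{j,n}(d),d)\leq g^*_{j,n}(0,d)\leq\bar u$. Your extra remarks, that $u_{k,n}\in[0,\bar u]$ follows from the domain of $f$ and that a minimizer exists by restricting to a compact set, are sound details the paper leaves implicit.
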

Assumption \ref{Assumption: Nonstationarity} also guarantees some convergence properties of the population problems as $n$ goes to infinity. Let $\left\{(u_t,a_t)\right\}_{t\in[0,1]}$ be a collection of random variables such that $\left(u_{\frac{j}{n}},a_{\frac{j}{n}}\right)\stackrel{d}{=}(u_{j,n},a_{j,n})$ for all $j$ and $n$. Define
\begin{equation*}
    g_{\infty}^*(p,d) = d^Tp+\int_{0}^{1}\mathbbm{E}\left[(u_s-{a_s}^Tp)^+\right]ds
\end{equation*}
The following lemma shows a pointwise convergence property of the objective functions of the population problems defining $p^*_n(d)$.
\begin{lem}\label{Lemma: Riemann Integral}
    Under Assumption \ref{Assumption: Nonstationarity}, for any $P\in\mathcal{P}$, $\lim_{n\rightarrow+\infty}g^*_{n}(p,d_0)=g_{\infty}^*(p,d_0)$ for all $p\in\mathbb{R}^m$.
\end{lem}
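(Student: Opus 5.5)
The plan is to write $g^*_n(p,d_0)=d_0^Tp+\frac1n\sum_{j=1}^n h(\frac jn,p)$, where
\[
h(t,p)=\E\left[(u_t-{a_t}^Tp)^+\right]=\int_{\Xi} v(t,a)\int_0^{\bar u} (u-a^Tp)^+ f(t,a,u)\,du\,d\nu(a).
\]
Here $\nu$ is counting measure when $a_{j,n}$ is discrete and Lebesgue measure on $\Xi$ when it is continuous. By Assumption \ref{Assumption: Nonstationarity} (c),(d), and since $(u_{j/n},a_{j/n})\stackrel{d}{=}(u_{j,n},a_{j,n})$, we have $\E[(u_{j,n}-{a_{j,n}}^Tp)^+]=h(\frac jn,p)$ for every $j$ and $n$. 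Thus $\frac1n\sum_{j=1}^n h(\frac jn,p)$ is exactly a right-endpoint Riemann sum of $t\mapsto h(t,p)$ on $[0,1]$, and the lemma reduces to showing that $t\mapsto h(t,p)$ is Riemann integrable for each fixed $p$. Integrability also makes $g^*_\infty$ well defined.

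I would actually prove more: that $h(\cdot,p)$ is Lipschitz in $t$. For fixed $p$, the integrand $\phi(t,a)=v(t,a)\int_0^{\bar u}(u-a^Tp)^+f(t,a,u)\,du$ has the following properties.
\begin{itemize}
\item The factor $(u-a^Tp)^+$ is bounded by $\bar u+\alpha\norm{p}_1$ on $[0,\bar u]\times[0,\alpha]^m$, and it does not depend on $t$.
\item Both $v$ and $f$ are bounded by $\bar\mu$ and are $\bar L$-Lipschitz in $t$, since $\norm{\nabla v}_2,\norm{\nabla f}_2\le\bar L$.
\end{itemize}
A product rule estimate then gives
\[
|\phi(t,a)-\phi(s,a)|\le C_p|t-s|,\qquad C_p=2\bar L\bar\mu\,\bar u(\bar u+\alpha\norm{p}_1),
\]
uniformly in $a\in\Xi$. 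Integrating over $\Xi$ with respect to $\nu$ gives $|h(t,p)-h(s,p)|\le C_p\,\nu(\Xi)\,|t-s|$.

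The standard right-endpoint Riemann sum bound then yields
\[
\left|\frac1n\sum_{j=1}^n h\!\left(\tfrac jn,p\right)-\int_0^1h(s,p)\,ds\right|\le\frac{C_p\,\nu(\Xi)}{2n}\to0,
\]
which proves the lemma, with an $O(1/n)$ rate as a bonus. Negative entries of $p$ cause no trouble, because the bound on $(u-a^Tp)^+$ only uses $\norm{p}_1$.

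The main obstacle is $\nu(\Xi)<\infty$. In the continuous case this holds because $\Xi\subseteq[0,\alpha]^m$ has Lebesgue measure at most $\alpha^m$. In the discrete case $\Xi$ could be countably infinite, so I would avoid the bound $\nu(\Xi)<\infty$ entirely. Instead I would use that $\sum_{a\in\Xi}v(t,a)=1$ for every $t$ and that each $v(\cdot,a)$ is continuous. Write
\[
h(t,p)=\sum_{a\in\Xi} v(t,a)\,\psi(t,a),
\]
where $\psi(t,a)$ is the inner $u$-integral, which is bounded and Lipschitz in $t$ uniformly in $a$. Since the $v(t,\cdot)$ are probability mass functions that vary continuously in $t$ pointwise, Scheffé's lemma gives continuity of $t\mapsto v(t,\cdot)$ in $\ell^1$. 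Together with the uniform bound on $\psi$ and its uniform Lipschitz continuity in $t$, this shows $h(\cdot,p)$ is continuous on $[0,1]$, hence Riemann integrable. That still gives convergence of the Riemann sums, though without an explicit rate.
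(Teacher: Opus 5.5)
Your proof is correct and follows the paper's route: both show that $t\mapsto\E[(u_t-{a_t}^Tp)^+]$ is Lipschitz on $[0,1]$ via a product-rule bound on $v(t,a)f(t,a,u)$ (using $\bar\mu$, $\bar L$, $(u-a^Tp)^+\le\bar u+\alpha\norm{p}_1$ and the measure bound $\alpha^m$ on $\Xi$) and then invoke convergence of Riemann sums, while your explicit $O(1/n)$ rate and your treatment of the mass-function case, which the paper's proof leaves unfinished, go beyond it. One caveat in that discrete case: the assumption gives $\sum_{a\in\Xi}v(t,a)=1$ only at grid points $t=j/n$, and this extends to all $t$ by continuity when $\Xi$ is finite (automatic if $v$ is strictly positive, since $v\ge c>0$ on the compact domain forces $\Xi$ to have at most $1/c$ points), whereas for countably infinite $\Xi$ your Scheff\'e step needs it at every $t$, which pointwise continuity alone does not give (Fatou yields only $\le 1$), so there you are relying on the same implicit convention the paper uses in treating $(u_t,a_t)$ as distributed by $v(t,\cdot)$ and $f(t,\cdot,\cdot)$ for every $t$.
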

We can view $g^*_{\infty}(\cdot,\cdot)$ as the objective function of the limiting population problem. The pointwise convergence result given in Lemma \ref{Lemma: Riemann Integral} allows us to make the non-degeneracy assumption in the following way. 
\begin{assumption}{4}\label{Assumption: p_star}
    There exists $\epsilon_{\mathcal{P}}$ such that for any $P\in\mathcal{P}$, there exists $p^*\in\arg\min_{p\geq 0} g^*_{\infty}(p,d_0)$ such that $(\inf_{a\in \Xi} a^Tp^*, \sup_{a\in \Xi} a^Tp^*)\subseteq(\epsilon_{\mathcal{P}},\bar u-\epsilon_\mathcal{P})$.
\end{assumption}
Note that Assumption \ref{Assumption: p_star} implicitly assumes that $p^*>0$. Together with Assumption \ref{Assumption: Nonstationarity}, we have the following Proposition that summarizes useful technical results used in the regret analysis.
\begin{prop}\label{Proposition: Boundedness and Smoothness}
    Under Assumptions \ref{Assumption: Asymptotic Regime}, \ref{Assumption: Nonstationarity} and \ref{Assumption: p_star}, there exists positive real numbers $\kappa<1$, $\epsilon$, $\epsilon_p$, $\underline{\sigma}$, $\bar\sigma$, $L$ such that for all $P\in\mathcal{P}$.
    \begin{enumerate}[label=(\alph*)]
        \item $\E\left[\left(u_{j,n}-{a_{j,n}}^Tp\right)^+\right]$ is three times continuously differentiable and the absolute value of its third derivatives are bounded above by $L$ on $B_{2\epsilon}(p^*)\equiv\Omega_p\subseteq\mathbb{R}^m_{+}$ for all $j$ and $n$.
        \item On $\Omega_p$, the eigenvalues of the Hessian of $\E\left[\left(u_{j,n}-{a_{j,n}}^Tp\right)^+\right]$ are in $[0,\bar\sigma]$ and the eigenvalues of the Hessian of $g^*_{j,n}(p,d_0)$ are in $[\underline{\sigma},\bar\sigma]$ for all $j$ and $n$.
        \item There exists $N_0$ such that, when $n\geq N_0$, $p_n^*(d_0)$ is uniquely determined, $B_{\epsilon}(p_n^*(d_0))\subseteq\Omega_p$, and each entry of $\E[a_{j,n}\mathbbm{1}\{u_{j,n}>{a_{j,n}}^Tp_n^*(d_0)\}]$ is greater than $\epsilon_p$ if $j\geq\kappa n$.
        \item Let $\underline{s}=\inf_{a\in\Xi,p\in\Omega_p}a^Tp$, $\bar{s}=\sup_{a\in\Xi,p\in\Omega_p}a^Tp$, and
        \begin{equation*}
        L_{j,n} = \frac{\bar\mu\alpha^{m+2}\sum_{k=j+2}^{n}\left(\sup_{a\in\Xi}\sup_{s\in [\underline{s},\bar s]}\abs{f\left(\frac{k}{n},a,s\right)-f\left(\frac{j+1}{n},a,s\right)}+\sup_{a\in\Xi}\abs{v\left(\frac{k}{n},a\right)-v\left(\frac{j+1}{n},a\right)}\right)}{\underline{\sigma}(n-j)(n-j-1)}
    \end{equation*}
        then for all $n\in\mathbb{N}_{+}$,
            \begin{equation*}
                \
                \prod_{j=0}^{n-2}\left(1+L_{j,n}\right)\leq e^{\frac{\mu\bar L\alpha^{m+2}}{\underline{\sigma}}}
            \end{equation*}
    \end{enumerate}
\end{prop}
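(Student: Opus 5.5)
The plan is to write $h_{j,n}(p)=\E[(u_{j,n}-a_{j,n}^Tp)^+]$ explicitly through the densities of Assumption \ref{Assumption: Nonstationarity}:
\begin{equation*}
h_{j,n}(p)=\int_\Xi v\left(\tfrac{j}{n},a\right)\int_{a^Tp}^{\bar u}(u-a^Tp)\,f\left(\tfrac{j}{n},a,u\right)du\,da ,
\end{equation*}
with the outer integral replaced by a sum when $\Xi$ is discrete. All four parts then follow from differentiating under the integral sign together with the uniform bounds $\underline\mu\le f\le\bar\mu$, $\norm{\nabla f}_2,\norm{\nabla v}_2\le\bar L$, $\Xi\subseteq[0,\alpha]^m$ and $\lambda_{\min}(\E[aa^T])\ge\lambda$.

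\textbf{Parts (a) and (b).} Differentiating under the integral gives
\begin{equation*}
\nabla h_{j,n}(p)=-\E\left[a_{j,n}\mathbbm{1}\{u_{j,n}>a_{j,n}^Tp\}\right],\qquad \nabla^2h_{j,n}(p)=\E\left[a_{j,n}a_{j,n}^Tf\left(\tfrac{j}{n},a_{j,n},a_{j,n}^Tp\right)\right].
\end{equation*}
\begin{itemize}
\item The third derivative is $\E[a_ia_ja_k\,\partial_uf(\cdot,a,a^Tp)]$. Its absolute value is at most $\alpha^3\bar L$, which gives (a) with $L=\alpha^3\bar L$.
\item Continuity of these derivatives requires $a^Tp\in(0,\bar u)$ for all $a\in\Xi$, so that the boundary of the support of $u$ is never hit. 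Choose $\epsilon$ so that $2\epsilon\alpha\sqrt m<\epsilon_{\mathcal P}$. Assumption \ref{Assumption: p_star} then gives $a^Tp\in(\epsilon_{\mathcal P}/2,\bar u-\epsilon_{\mathcal P}/2)$ on $\Omega_p=B_{2\epsilon}(p^*)$.
\item Shrinking $\epsilon$ further makes $\Omega_p\subseteq\mathbb{R}^m_+$, since $p^*>0$.
\item For the Hessian of $h_{j,n}$, the eigenvalues lie in $[0,\bar\mu m\alpha^2]$.
\item For $g^*_{j,n}$ the Hessian is an average of these, so its eigenvalues are at most $\bar\sigma:=\bar\mu m\alpha^2$. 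They are at least $\underline\mu\lambda=:\underline\sigma$, because $\E[aa^Tf]\succeq\underline\mu\E[aa^T]$.
\end{itemize}

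\textbf{Part (c).} I use Lemma \ref{Lemma: Riemann Integral}, upgraded to uniform convergence on a compact set. The functions $g^*_n(\cdot,d_0)$ are equi-Lipschitz, with $\norm{\nabla h}\le\alpha\sqrt m$, so pointwise convergence becomes uniform convergence on the bounded set from Lemma \ref{Lemma: BoundedPrice-I}. The limit $g^*_\infty$ inherits $\underline\sigma$-strong convexity on $\Omega_p$, so $p^*$ is its unique minimizer there.

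Next I would argue that $p^*_n(d_0)$ lies in $B_{\epsilon}(p^*)$ for large $n$. Uniform convergence gives $g^*_n\le g^*_\infty(p^*)+o(1)$ at $p^*$, while $g^*_n\ge g^*_\infty(p^*)+\underline\sigma\epsilon^2/2-o(1)$ on the sphere $\partial B_\epsilon(p^*)$. Convexity of $g^*_n$ then rules out any minimizer outside $B_\epsilon(p^*)$. This yields $B_\epsilon(p^*_n)\subseteq B_{2\epsilon}(p^*)=\Omega_p$, and strong convexity on $\Omega_p$ gives uniqueness.

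For the last claim of (c), I would use
\begin{equation*}
\E\left[a\mathbbm{1}\{u>a^Tp\}\right]=\E\left[a\,P(u>a^Tp\mid a)\right]\ge\underline\mu\,\tfrac{\epsilon_{\mathcal P}}{2}\,\E[a]\ge\underline\mu\,\tfrac{\epsilon_{\mathcal P}}{2}\,\underline a ,
\end{equation*}
which holds because $a^Tp\le\bar u-\epsilon_{\mathcal P}/2$. Since this bound holds for every $j$, any $\kappa<1$ works.

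\textbf{Part (d).} The summand is at most $\bar L(k-j-1)/n$ in each of its two terms, by the mean value theorem in $t$. Hence
\begin{equation*}
L_{j,n}\le\frac{\bar\mu\alpha^{m+2}\,2\bar L}{\underline\sigma\,n}\cdot\frac{\sum_{k=j+2}^{n}(k-j-1)}{(n-j)(n-j-1)}=\frac{\bar\mu\alpha^{m+2}\bar L}{\underline\sigma\,n},
\end{equation*}
using $\sum_{k=j+2}^{n}(k-j-1)=(n-j)(n-j-1)/2$. Summing over the $n-1$ values of $j$ and using $1+x\le e^x$ bounds the product by $e^{\bar\mu\bar L\alpha^{m+2}/\underline\sigma}$, reading the statement's $\mu$ as $\bar\mu$.

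\textbf{Main obstacle.} The delicate step is (c), specifically the localization of $p^*_n$ near $p^*$. Strong convexity is only available on $\Omega_p$, so the argument must be the sphere argument above rather than a global one. The other point needing care is that $N_0$ be uniform over $P\in\mathcal P$. For this I would replace Lemma \ref{Lemma: Riemann Integral} with an explicit Riemann-sum error bound of order $\bar L/n$, which comes from the Lipschitz smoothness of $f$ and $v$ in $t$.
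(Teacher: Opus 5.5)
Your proposal is correct and follows the paper's proof essentially step for step. Parts (a)--(b) come from differentiation under the integral sign, with $L=\alpha^3\bar L$ and $\underline\sigma=\underline\mu\lambda$ via $\E[aa^Tf]\succeq\underline\mu\E[aa^T]$. Part (c) uses uniform convergence of $g^*_n(\cdot,d_0)$ to $g^*_\infty(\cdot,d_0)$, together with strong convexity of the limit on $\Omega_p$ and convexity of $g^*_n$, to localize and make unique $p^*_n(d_0)$. Part (d) is the Lipschitz-in-$t$ bound with $\sum_{k=j+2}^n(k-j-1)=(n-j)(n-j-1)/2$.

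Your deviations are all sound refinements:
\begin{itemize}
\item you use equi-Lipschitz continuity rather than convexity to upgrade pointwise to uniform convergence;
\item you use a direct sphere argument instead of the paper's subsequence contradiction;
\item you give an explicit $O(1/n)$ Riemann-sum rate, which makes $N_0$ uniform in $P$;
\item you have the correct trace bound $\bar\sigma=\bar\mu m\alpha^2$, where the paper writes $m\alpha\bar\mu$.
\end{itemize}
One caveat is shared with the paper. A single $\epsilon$ giving $\Omega_p\subseteq\mathbb{R}^m_+$ for every $P$ tacitly requires $\min_i p^*_i$ to be bounded away from zero uniformly over $\mathcal P$. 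Assumption~\ref{Assumption: p_star} does not literally state this.
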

\noindent 
Proposition \ref{Proposition: Boundedness and Smoothness} (a), (b) and (c) gives the smoothness and strong convexity of $g_{j,n}(p,d)$ for $p$ in a neighborhood of $p_n^*(d_0)$, and Proposition \ref{Proposition: Boundedness and Smoothness} (d) can be viewed as a variation budget requirement. We can view Proposition \ref{Proposition: Boundedness and Smoothness} as more general assumptions to make the regret bound in this paper hold. In fact, our regret analysis holds if Assumption \ref{Assumption: Asymptotic Regime}, Assumption \ref{Assumption: Single-sample}, independence and boundedness of the distributions in Assumption \ref{Assumption: Nonstationarity} and Proposition \ref{Proposition: Boundedness and Smoothness} hold. Thus, although the $\mathcal{P}$ in Example \ref{Example: LinearRegret} does not exactly satisfies Assumption \ref{Assumption: Nonstationarity}, the regret bound in this paper is still true for this $\mathcal{P}$ because Proposition \ref{Proposition: Boundedness and Smoothness} holds under it.

\section{Dual Convergence}\label{Section: Dual Convergence}
In this section, we focus on the error of using $p^{\pi}_{j,n}$ to estimate $p_{j,n}\left(\frac{b_{j,n}}{n-j}\right)$. Define
\begin{equation*}
    g_{j,n}(p,d) = d^Tp + \frac{1}{n-j}\sum_{k=j+1}^n\left(u_{k,n}-{a_{k,n}}^Tp\right)^+
\end{equation*}
Then, $p_{j,n}(d)\in\arg\min_{p\geq 0} g_{j,n}(p;d)$. Note that both $p^{\pi}_{j,n}$ and $p_{j,n}(d)$ may not be unique; however, the selection of these prices from the corresponding optimal solution sets does not affect the regret analysis. Since $p^{\pi}_{j,n}$ and $p_{j,n}(\frac{b_{j,n}}{n-j})$ have the same distribution, and they are both a sample average approximation of the population price $p^*_{j,n}\left(\frac{b_{j,n}}{n-j}\right)$, we choose to focus on the mean squared error between $p_{j,n}(\frac{b_{j,n}}{n-j})$ and $p^*_{j,n}\left(\frac{b_{j,n}}{n-j}\right)$. Similar to the analysis of the i.i.d case as in \cite{Bray2024}, to deal with the randomness of $b_{j,n}$, our goal in this section is to prove a uniform dual convergence result by finding an upper bound of
\begin{equation}\label{eqn: UniformMSE}
    \E\left[\sup_{d\in\Omega_{j,n}}\norm{p_{j,n}(d)-p^*_{j,n}(d)}_2^2\right]
\end{equation}
An interesting question to ask is: how should we select the set $\Omega_{j,n}$? If it is selected too small, then $\frac{b_{j,n}}{n-j}$ cannot fall into it with high probability which will make an upper bound of (\ref{eqn: UniformMSE}) useless. If it is selected too large, (\ref{eqn: UniformMSE}) will be too large to lead us to a polylogarithmic regret upper bound. We answer this question by introducing the concept of \textit{$\delta$-path} in the following subsection.  
\subsection{$\delta$-path}
Define $d^{\pi}_{j,n} = b^{\pi}_{j,n}/(n-j)$. Let $\{x^{off}_{j,n}\}_{j=1}^n$ be the optimal solution of the offline LP (\ref{eqn: OfflinePrimalLP}) and define $d^{off}_{j,n} = (b_{0,n} - \sum_{k=1}^{j}a_{k,n}x^{off}_{k,n})/(n-j)$. If Algorithm \ref{alg: Simulater} has low regret, it is intuitive that the two sequences $\{d^{\pi}_{j,n}\}_{j=1}^n$ and $\{d^{off}_{j,n}\}_{j=1}^n$ are close with high probability. Thus, we may want $\Omega_{j,n}$ to be a neighborhood of $d^{off}_{j,n}$. However, $d^{off}_{j,n}$ is random which makes analyze the mean squared error over its neighborhood difficult. We instead construct a deterministic approximation of $d^{off}_{j,n}$ and let $\Omega_{j,n}$ be the neighborhood of the deterministic approximation. 

Recall that $d_0$ is the initial average resource given in Assumption \ref{Assumption: Asymptotic Regime}. By complementary slackness 
\begin{equation*}
    x^{off}_{j,n} = \left\{
    \begin{array}{ll}
       1  &  \mathrm{if}\; u_{j,n}>{a_{j,n}}^Tp_{0,n}(d_0)\\
       0  &  \mathrm{if}\; u_{j,n}<{a_{j,n}}^Tp_{0,n}(d_0)
    \end{array}
    \right.
\end{equation*}
Assumption \ref{Assumption: Nonstationarity} implies that the conditional distribution of $u_{j,n}$ given $a_{j,n}$ is continuous for all $j$ and $n$. Thus, $\{u_{j,n} = {a_{j,n}}^Tp_{0,n}(d_0)\}$ happens for at most $m$ orders. Thus, when $n>>m$, $\tilde d^{off}_{j,n} = (b_{0,n} - \sum_{k=1}^{j}a_{k,n}\mathbbm{1}\{u_{k,n}>{a_{k,n}}^Tp_{0,n}(d_0)\})/(n-j)$ is almost the same as $d^{off}_{j,n}$. We then replace $p_{0,n}(d_0)$ with the population price $p_n^*(d_0)$ and take expectation to get the following definition of $\delta_{j,n}$. Define
\begin{equation*}
    \delta_{j,n}(d)=\frac{nd-\sum_{k=1}^j\mathbb{E}\left[a_{k,n}\mathbbm{1}\{u_{k,n}>{a_{k,n}}^Tp_n^*(d)\}\right]}{n-j}.
\end{equation*} 
$\{\delta_{j,n}(d_0)\}_{j=1}^n$ is then our deterministic approximation of $\{d^{off}_{j,n}\}_{j=1}^n$, and we will make $\Omega_{j,n}$ a neighborhood of $\delta_{j,n}(d_0)$. The question remained is how large $\Omega_{j,n}$ should be. We answer this question by proving some useful properties of $\{\delta_{j,n}\}_{j=1}^n$. 

We start from a more detailed discussion on the relationship between $\delta$-path and population price. Let $\Omega_p$ be given in Proposition \ref{Proposition: Boundedness and Smoothness}. For each $j$, $n$ and $p\in\Omega_p$, denote $G^*_{j,n}(p)$ as the Hessian of $g^*_{j,n}(\cdot,d)$ at $p$ and define
\begin{equation*}
    \bar{\delta}_{j,n}(p)=\frac{1}{n-j}\sum_{k=j+1}^n\E\left[a_{k,n}\mathbbm{1}\left\{u_{k,n}>{a_{k,n}}^Tp\right\}\right]
\end{equation*}
Define $\bar{\Omega}_{j,n} = \left\{\bar{\delta}_{j,n}(p):p\in\Omega_p\right\}$ to be the range of $\bar{\delta}_{j,n}$. For simplicity, we write $\bar{\delta}_{0,n}$ as $\bar{\delta}_n$ and $\bar{\Omega}_{0,n}$ as $\bar{\Omega}_{n}$. The following Lemma relates the population prices and the $\delta$-paths through $\bar\delta_{j,n}$'s.
\begin{lem}\label{Lemma: Unique_Equal_Population_Price}
    Under Assumptions \ref{Assumption: Asymptotic Regime}, \ref{Assumption: Nonstationarity} and \ref{Assumption: p_star}, for any $P\in\mathcal{P}$, when $n$ is large enough, for all $0\leq j<n$,
    \begin{enumerate}[label=(\alph*)]
        \item $p_{j,n}^*(\delta_{j,n}(d))\in\Omega_p$ is uniquely determined and $p_{j,n}^*(\delta_{j,n}(d))=p_n^*(d)$ for all $d\in \bar{\Omega}_n$.
        \item $\bar{\delta}_{j,n}:\Omega_p\rightarrow\bar{\Omega}_{j,n}$ is invertible and an open mapping, $\delta_{j,n}(d_0)\in \bar{\Omega}_{j,n}$, and $\bar\delta_{j,n}^{-1}(d) = p_{j,n}^*(d)$ for all $d\in \bar{\Omega}_{j,n}$.
        \item $p_{j,n}^*(\cdot)$ is continuously differentiable on $\bar\Omega_{j,n}$, and its Jacobian matrix at $d$ is $-G^*_{j,n}(p_{j,n}^*(d))^{-1}$ for all $d\in\bar\Omega_{j,n}$
    \end{enumerate}
\end{lem}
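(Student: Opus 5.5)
The argument rests on the first-order characterization of the population problem together with the strong convexity from Proposition \ref{Proposition: Boundedness and Smoothness}. Differentiating under the expectation, which Assumption \ref{Assumption: Nonstationarity}(d) permits because the conditional density of $u_{k,n}$ is bounded and continuous, gives
\begin{equation*}
\nabla_p \E\left[(u_{k,n}-{a_{k,n}}^Tp)^+\right] = -\E\left[a_{k,n}\mathbbm{1}\{u_{k,n}>{a_{k,n}}^Tp\}\right].
\end{equation*}
Hence $\nabla_p g^*_{j,n}(p,d) = d - \bar\delta_{j,n}(p)$. The Hessian of $g^*_{j,n}$ does not depend on $d$, and on $\Omega_p$ it equals the Jacobian of $-\bar\delta_{j,n}$ up to sign. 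Specifically, $\nabla \bar\delta_{j,n}(p) = -G^*_{j,n}(p)$, and by Proposition \ref{Proposition: Boundedness and Smoothness}(b) its eigenvalues lie in $[\underline\sigma,\bar\sigma]$.

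\textbf{Part (b).} I would first show that $\bar\delta_{j,n}$ is injective on $\Omega_p$. The set $\Omega_p$ is a ball, hence convex, and $g^*_{j,n}(\cdot,0)$ is $\underline\sigma$-strongly convex on it. This gives strong monotonicity:
\begin{equation*}
(\bar\delta_{j,n}(p)-\bar\delta_{j,n}(q))^T(q-p)\geq \underline\sigma\norm{p-q}^2,
\end{equation*}
so $\bar\delta_{j,n}$ is injective. Its Jacobian is nonsingular and continuous, which follows from Proposition \ref{Proposition: Boundedness and Smoothness}(a). Working on the open ball, the inverse function theorem then gives that $\bar\delta_{j,n}$ is an open map with a $C^1$ inverse.

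To identify the inverse with $p^*_{j,n}$, fix $d=\bar\delta_{j,n}(p)$ with $p\in\Omega_p\subseteq\mathbb{R}^m_{+}$, taking $p$ interior to the positive orthant (the ball should be chosen so that this holds, using Assumption \ref{Assumption: p_star}). Then $\nabla_p g^*_{j,n}(p,d)=0$, so $p$ is a stationary point of the convex function $g^*_{j,n}(\cdot,d)$ over $p\geq 0$, and therefore a minimizer. For uniqueness, suppose $q$ is another minimizer. By convexity every point of the segment $[p,q]$ is a minimizer. Points of the segment near $p$ lie in $\Omega_p$, where strong convexity forbids a flat direction, so $q=p$. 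This yields $\bar\delta_{j,n}^{-1}=p^*_{j,n}$ on $\bar\Omega_{j,n}$.

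\textbf{Part (c).} This follows from the inverse function theorem. Differentiating $\bar\delta_{j,n}(p^*_{j,n}(d))=d$ gives the Jacobian of $p^*_{j,n}$ as $(\nabla\bar\delta_{j,n})^{-1}=-G^*_{j,n}(p^*_{j,n}(d))^{-1}$.

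\textbf{Part (a) and the membership claim in (b).} Take $d\in\bar\Omega_n$ and write $d=\bar\delta_n(p)$ with $p=p^*_n(d)\in\Omega_p$, using (b) at $j=0$. By the definition of $\delta_{j,n}$,
\begin{equation*}
(n-j)\delta_{j,n}(d) = nd - \sum_{k\le j}\E\left[a_{k,n}\mathbbm{1}\{u_{k,n}>{a_{k,n}}^Tp\}\right] = \sum_{k>j}\E\left[a_{k,n}\mathbbm{1}\{u_{k,n}>{a_{k,n}}^Tp\}\right].
\end{equation*}
The second equality holds because $nd=n\bar\delta_n(p)$ is the sum over all $k$. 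Thus $\delta_{j,n}(d)=\bar\delta_{j,n}(p)\in\bar\Omega_{j,n}$, and (b) gives $p^*_{j,n}(\delta_{j,n}(d))=p=p^*_n(d)$, with uniqueness.

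For $\delta_{j,n}(d_0)\in\bar\Omega_{j,n}$, it suffices to have $d_0\in\bar\Omega_n$. For $n\geq N_0$, Proposition \ref{Proposition: Boundedness and Smoothness}(c) places $p^*_n(d_0)$ in the interior of $\Omega_p$ and makes it unique. Since $p^*_n(d_0)>0$ there, the KKT conditions force $\nabla_p g^*_n=0$, that is, $d_0=\bar\delta_n(p^*_n(d_0))$.

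\textbf{Main obstacle.} The delicate step is the passage from a local statement to a global one in (b): showing that a stationary point inside $\Omega_p$ is the unique global minimizer over $p\geq0$, even though strong convexity holds only on $\Omega_p$. The argument above handles this by combining global convexity of $g^*_{j,n}$ with the segment argument, together with positivity of $\Omega_p$ to rule out boundary KKT multipliers. I expect this step to need the most care.
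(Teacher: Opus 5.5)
Your proposal is correct and follows the same route as the paper's proof. Both identify $\nabla_p g^*_{j,n}(p,d)=d-\bar\delta_{j,n}(p)$, obtain (a) from the first-order condition together with the identity $\delta_{j,n}(d)=\bar\delta_{j,n}(p_n^*(d))$, and use the inverse function theorem for (b) and (c). Your version also fills in steps the paper passes over:
\begin{itemize}
\item strong monotonicity, which gives global injectivity of $\bar\delta_{j,n}$ (the inverse function theorem alone gives only local invertibility);
\item the segment argument, which gives uniqueness of the minimizer over $p\geq 0$;
\item the KKT step, which shows $d_0\in\bar\Omega_n$.
\end{itemize}
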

\noindent Lemma \ref{Lemma: Unique_Equal_Population_Price} shows that $\bar\Omega_{j,n}$ is a neighborhood of $\delta_{j,n}(d_0)$. In addition, Proposition \ref{Proposition: Boundedness and Smoothness} and Lemma \ref{Lemma: Unique_Equal_Population_Price} together show that the population prices $p_{j,n}$'s have good smoothness properties over $\bar\Omega_{j,n}$, and $g^*_{j,n}$'s have good smoothness properties over the neighborhood of the population prices evaluated on $\bar\Omega_{j,n}$. These smoothness properties allow us to simplify the uniform dual convergence problem to an easier empirical process problem. However, to achieve this simplification, we need to further narrow down $\bar\Omega_{j,n}$ to get the neighborhood $\Omega_{j,n}$ required in the uniform convergence result.
\subsection{Uniform Convergence}
We prove the uniform convergence result by generalizing the method in \cite{Bray2024} to the nonstationary case. Consider the following subset of $\bar\Omega_{j,n}$. Let $\epsilon$ be given in Proposition \ref{Proposition: Boundedness and Smoothness}, define
\begin{equation*}
    \Omega_{j,n} = \left\{\bar\delta_{j,n}(p):p\in B_{\frac{\epsilon}{2}}(p_n^*(d_0))\right\}.
\end{equation*}
Lemma \ref{Lemma: Unique_Equal_Population_Price} still holds if $\Omega_p$ is replaced by $B_{\frac{\epsilon}{2}}(p_n^*(d_0))$ and $\bar\Omega_{j,n}$ is replaced by $\Omega_{j,n}$. Thus, $\Omega_{j,n}$ is still a neighborhood of $\delta_{j,n}(d_0)$. We start from an upper bound on the tail probability.
\begin{prop}\label{Proposition: TailProbDualConv}
    Under Assumptions \ref{Assumption: Asymptotic Regime}, \ref{Assumption: Nonstationarity} and \ref{Assumption: p_star}, there exists $C_1$, $C_2$ and $\nu<\frac{\epsilon}{2}$ such that for all $P\in\mathcal{P}$, when $n$ is large enough, 
    \begin{equation*}
        \mathbb{P}\left\{\sup_{d\in\Omega_{j,n}}\norm{p_{j,n}(d)-p^*_{j,n}(d)}_2>\nu\right\}\leq C_1 \exp\left(-C_2\nu^2(n-j)\right) \quad\forall 0\leq j<n
    \end{equation*}
\end{prop}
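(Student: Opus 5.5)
Write $N=n-j$ and $\Delta_{j,n}(p)=g_{j,n}(p,d)-g^*_{j,n}(p,d)=\frac{1}{N}\sum_{k=j+1}^n\big[(u_{k,n}-a_{k,n}^Tp)^+-\E(u_{k,n}-a_{k,n}^Tp)^+\big]$, and note that $\Delta_{j,n}$ does not depend on $d$. The plan is to reduce the uniform statement to a single uniform deviation bound for $\Delta_{j,n}$, and then to apply strong convexity pointwise in $d$.

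\textbf{Step 1 (localization via strong convexity).} Fix $d\in\Omega_{j,n}$. By Lemma \ref{Lemma: Unique_Equal_Population_Price} (with $B_{\epsilon/2}(p_n^*(d_0))$ in place of $\Omega_p$), $p^*_{j,n}(d)=\bar\delta^{-1}_{j,n}(d)\in B_{\epsilon/2}(p_n^*(d_0))$. Hence $B_{\nu}(p^*_{j,n}(d))\subseteq B_{\epsilon}(p_n^*(d_0))\subseteq\Omega_p$ for $\nu<\epsilon/2$, using Proposition \ref{Proposition: Boundedness and Smoothness}(c).
\begin{itemize}
\item By Proposition \ref{Proposition: Boundedness and Smoothness}(b), $g^*_{j,n}(\cdot,d)$ has Hessian $\succeq\underline\sigma I$ on this ball. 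The Hessian does not depend on $d$, since $d$ enters linearly.
\item Since $p^*_{j,n}(d)$ is an interior minimizer, $g^*_{j,n}(p,d)-g^*_{j,n}(p^*_{j,n}(d),d)\geq\frac{\underline\sigma}{2}\nu^2$ on the sphere $\norm{p-p^*_{j,n}(d)}_2=\nu$.
\item Suppose $\sup_{p\in\Omega_p}|\Delta_{j,n}(p)-\Delta_{j,n}(p_n^*(d_0))|<\frac{\underline\sigma}{4}\nu^2$. Then on that sphere $g_{j,n}(p,d)>g_{j,n}(p^*_{j,n}(d),d)$.
\item Because $g_{j,n}(\cdot,d)$ is convex, every minimizer over $p\geq0$ then lies in the open ball of radius $\nu$. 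The standard argument is that a minimizer outside the ball, joined by a segment to the centre, would contradict convexity.
\end{itemize}
This holds simultaneously for all $d\in\Omega_{j,n}$. Therefore the probability in question is at most
\[
\mathbb{P}\Big\{\sup_{p\in\Omega_p}\abs{\Delta_{j,n}(p)-\Delta_{j,n}(p_n^*(d_0))}\geq\tfrac{\underline\sigma}{4}\nu^2\Big\}.
\]
One caveat: a threshold of order $\nu^2$ would only give a tail $\exp(-c\nu^4N)$. To obtain $\nu^2 N$, I would refine the argument to compare increments on the sphere relative to the centre. Define
\[
Z(p,q)=\Delta_{j,n}(p)-\Delta_{j,n}(q),\qquad \norm{p-q}\leq\nu.
\]
Since $(\cdot)^+$ is Lipschitz and $\norm{a_{k,n}}\leq\sqrt m\alpha$, the summands of $Z(p,q)$ are bounded by $O(\nu)$. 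It then suffices to bound $\mathbb{P}\{\sup_{p,q\in\Omega_p,\norm{p-q}\le\nu}|Z(p,q)|\ge c\nu^2\}$.

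\textbf{Step 2 (concentration of the localized empirical process).} This is the main obstacle. The $N$ summands are independent but not identically distributed, and they are uniformly bounded, since $u\leq\bar u$ and $a\in[0,\alpha]^m$.
\begin{itemize}
\item \emph{Expectation.} I would bound $\E\sup|Z(p,q)|$ by symmetrization and Dudley's entropy integral for the class $\{(u-a^Tp)^+-(u-a^Tq)^+\}$. This class is Lipschitz in parameters in $\mathbb{R}^m$ with envelope $O(\nu)$, so the expectation is $O(\nu\sqrt{m/N})$.
\item \emph{Deviation.} McDiarmid's inequality (or Talagrand/Bousquet) then gives deviation $t$ with probability $\exp(-cNt^2/\nu^2)$, because each summand changes the supremum by at most $O(\nu/N)$.
\item \emph{Combination.} Take $t=c\nu^2/2$. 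As long as $\nu\sqrt{m/N}\ll\nu^2$, i.e.\ $\nu^2N$ is large, this yields $\exp(-c'\nu^4N/\nu^2)=\exp(-C_2\nu^2N)$.
\item \emph{Small $\nu^2N$.} When $\nu^2N$ is bounded, the bound holds trivially by enlarging $C_1$.
\end{itemize}
The delicate points are three: handling the full ball $\Omega_p$ via a covering of centres, making sure the constants depend only on the quantities in Assumption \ref{Assumption: Nonstationarity} (so they are uniform over $P\in\mathcal{P}$ and $j$), and fixing $\nu<\epsilon/2$ small at the outset so that Step 1 applies.
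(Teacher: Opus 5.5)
Your argument is correct, but it takes a different route from the paper.

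\textbf{The paper's route.} The paper works with subgradients rather than objective values:
\begin{enumerate}
\item Lemma~\ref{Lemma: LocateOfflinePrice} probes $\nabla g_{j,n}$ at the $2m$ points $p^*_{j,n}(d)\pm\eta\,\omega^i_{j,n}$ along the Hessian eigenvectors.
\item The third-derivative bound in Proposition~\ref{Proposition: Boundedness and Smoothness}(a) controls the Taylor residual of $\nabla g^*_{j,n}$ at those points.
\item This reduces the problem to the uniform deviation of $\frac{1}{n-j}\sum_{k}a_{k,n}\mathbbm{1}\{u_{k,n}>a_{k,n}^Tp\}$ from $\bar\delta_{j,n}(p)$ over $\Omega_p$, which is Lemma~\ref{Lemma: EmpiricalProcess}(a).
\item That lemma needs a grid plus a monotone bracketing argument, because the indicators are discontinuous in $p$.
\end{enumerate}

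\textbf{Your route.} You use the classical quadratic-growth argument on objective values:
\begin{enumerate}
\item Your $\Delta_{j,n}$ does not depend on $d$, so the supremum over $d\in\Omega_{j,n}$ costs nothing.
\item Strong convexity gives growth $\frac{\underline\sigma}{2}\nu^2$ on the sphere.
\item Convexity of $g_{j,n}(\cdot,d)$ then traps every empirical minimizer inside the ball.
\end{enumerate}

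\textbf{What each buys.} Your route needs only the lower eigenvalue bound and a Lipschitz function class: no third derivatives, no eigenvector probing, no bracketing. The paper's route has a deviation threshold linear in $\nu$. Its gradient-deviation machinery is also reused, via Lemma~\ref{Lemma: EmpiricalProcess}(b), in the mean-squared-error bound of Theorem~\ref{Theorem: DualConvergence}.

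\textbf{A simplification.} Your caveat and the localized refinement in Step 2 are not needed for this statement. Since $\nu$ is existentially quantified and then fixed, a bound $C(\nu)\exp(-c\nu^4(n-j))$ already has the required form $C_1\exp(-C_2\nu^2(n-j))$, with $C_2=c\nu^2$. The paper's own $C_1$ likewise depends on $\nu$ through its grid size. After Step 1, the following finish the proof with constants uniform over $P\in\mathcal{P}$:
\begin{itemize}
\item a finite net of $\Omega_p$;
\item Hoeffding's inequality at the net points;
\item the $\sqrt{m}\alpha$-Lipschitz property of $p\mapsto(u-a^Tp)^+$.
\end{itemize}
If you keep the refinement, note one point about Dudley's integral. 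With your Lipschitz parametrization, the centres range over a set of fixed diameter. The bound you get is then $O\bigl(\nu\sqrt{m\log(1/\nu)/(n-j)}\bigr)$ rather than $O\bigl(\nu\sqrt{m/(n-j)}\bigr)$. This extra factor is harmless.
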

Proposition \ref{Proposition: TailProbDualConv} shows that the probability of $p_{j,n}(d)$ falling outside a small neighborhood of $p_{j,n}^*(d)$ is exponentially small uniformly over $\Omega_{j,n}$. In addition, $p_{j,n}$ is bounded by the following lemma,
\begin{lem}\label{Lemma: BoundedPrice}
    Under Assumption \ref{Assumption: Asymptotic Regime}, \ref{Assumption: Nonstationarity} and \ref{Assumption: p_star}, there exists $\underline{d}>0$ such that for all $P\in\mathcal{P}$, when $n$ is large enough, for all $0\leq j<n$ and $d\in\Omega_{j,n}$, each entry of $d$ is in $(\underline{d},\alpha]$ and $\norm{p_{j,n}^*(d)}_2\leq\frac{\bar u}{\underline{d}}$ almost surely.
\end{lem}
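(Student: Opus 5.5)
Every $d\in\Omega_{j,n}$ has the form $d=\bar\delta_{j,n}(p)$ with $p\in B_{\frac{\epsilon}{2}}(p_n^*(d_0))$. So I would get the entrywise bounds on $d$ directly from this representation and then apply Lemma \ref{Lemma: BoundedPrice-I} to bound the price.

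\emph{Upper bound on $d$.} Since $a_{k,n}\in\Xi\subseteq[0,\alpha]^m$ almost surely, each entry of $\E[a_{k,n}\mathbbm{1}\{u_{k,n}>a_{k,n}^Tp\}]$ lies in $[0,\alpha]$. Hence each entry of the average $\bar\delta_{j,n}(p)$ is at most $\alpha$.

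\emph{Lower bound on $d$.} Fix $p\in B_{\frac{\epsilon}{2}}(p_n^*(d_0))$. By Proposition \ref{Proposition: Boundedness and Smoothness}(c), for $n\ge N_0$ we have $p\in\Omega_p=B_{2\epsilon}(p^*)$. I would then argue as follows.
\begin{itemize}
\item Define $\bar s=\sup_{a\in\Xi,p'\in\Omega_p}a^Tp'$. By Assumption \ref{Assumption: p_star}, $\sup_{a\in\Xi}a^Tp^*<\bar u-\epsilon_{\mathcal P}$. Shrinking $\epsilon$ if necessary (say $2\epsilon\sqrt m\,\alpha\le\epsilon_{\mathcal P}/2$), we get $\bar s\le\bar u-\epsilon_{\mathcal P}/2$.
\item By Assumption \ref{Assumption: Nonstationarity}(d), the conditional density of $u_{k,n}$ given $a_{k,n}=a$ is at least $\underline\mu$ on $[0,\bar u]$. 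Therefore
\begin{equation*}
P\{u_{k,n}>a^Tp\mid a_{k,n}=a\}\ge \underline\mu(\bar u-a^Tp)\ge \underline\mu\epsilon_{\mathcal P}/2 .
\end{equation*}
\item Taking expectations, each entry satisfies $\E[a_{k,n}\mathbbm{1}\{u_{k,n}>a_{k,n}^Tp\}]\ge \frac{\underline\mu\epsilon_{\mathcal P}}{2}\E[a_{k,n}]\ge \frac{\underline\mu\epsilon_{\mathcal P}\underline a}{2}$, using the bound $\E[a_{k,n}]\ge\underline a$ entrywise from Assumption \ref{Assumption: Nonstationarity}(b).
\item Averaging over $k=j+1,\dots,n$ preserves this, so the choice $\underline d=\underline\mu\epsilon_{\mathcal P}\underline a/4$ gives a strict lower bound.
\end{itemize}
This choice of $\underline d$ is uniform in $j$, $n$ and $P\in\mathcal P$, because all the constants come from the assumptions.

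\emph{Bound on the price.} Since $d^{min}>\underline d>0$, Lemma \ref{Lemma: BoundedPrice-I} applies to any minimizer $p_{j,n}^*(d)$ and gives $\norm{p^*_{j,n}(d)}_2\le\norm{p^*_{j,n}(d)}_1\le \bar u/\underline d$. The same argument applies verbatim to the sample price $p_{j,n}(d)$, since it minimizes the empirical analogue $g_{j,n}(\cdot,d)$. Indeed, comparing $g_{j,n}(p,d)$ with $g_{j,n}(0,d)$ gives $d^Tp\le\frac1{n-j}\sum_k u_{k,n}\le\bar u$, so $\norm{p}_1\le\bar u/\underline d$ almost surely. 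This covers whichever price the ``almost surely'' qualifier refers to.

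The main obstacle is the lower bound on the acceptance probability. It requires that prices in $\Omega_p$ keep $a^Tp$ bounded away from $\bar u$ uniformly over $a\in\Xi$, which in turn needs the radius $2\epsilon$ in Proposition \ref{Proposition: Boundedness and Smoothness} to be small relative to $\epsilon_{\mathcal P}$. If that proposition does not already guarantee this, I would instead use the bound $\bar s<\bar u$ implicit in the definition of $\Omega_p$ in part (d). Failing that, I would invoke the quantity $\epsilon_p$ from part (c), which lower-bounds the entries at $p_n^*(d_0)$ for $k\ge\kappa n$, and extend it to the $\frac{\epsilon}{2}$-ball by Lipschitz continuity of $\bar\delta$ (its Jacobian is bounded via part (b)).
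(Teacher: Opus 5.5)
Your proof is correct, but you get the lower bound on the entries of $d$ by a different route from the paper.

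\textbf{The paper's route.} The paper never bounds $\bar\delta_{j,n}(p)$ from below pointwise. It bounds only the single point $\delta_{j,n}(d_0)=\bar\delta_{j,n}(p_n^*(d_0))$ (Lemma~\ref{Lemma: Unique_Equal_Population_Price}(a)), from below by $\frac{1-\kappa}{2}\epsilon_p$. For this it uses Proposition~\ref{Proposition: Boundedness and Smoothness}(c) and averages over the late orders $k\geq\kappa n$. It then transfers the bound to all of $B_{\frac{\epsilon}{2}}(p_n^*(d_0))$ by the Mean Value Theorem, using the Jacobian bound $\bar\sigma$ from part (b). To close the argument it shrinks $\epsilon$ so that $\frac{\bar\sigma\epsilon}{2}<\frac{1-\kappa}{2}\epsilon_p$. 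This is essentially the fallback in your last paragraph.

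\textbf{Your route.} You first shrink $\epsilon$ so that $a^Tp\leq\bar u-\epsilon_{\mathcal P}/2$ on $\Omega_p$. You then rerun, at every $p$ in the ball, the acceptance-probability argument that the paper uses only at $p_n^*(d_0)$ inside the proof of part (c). This avoids the horizon averaging, the identity for $\delta_{j,n}(d_0)$, and the perturbation step. It also gives the explicit constant $\underline d=\underline\mu\,\underline a\,\epsilon_{\mathcal P}/4$.

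\textbf{What each route needs.} Your route requires every order, early or late, to be accepted with probability bounded below. That comes from the primitive density floor $\underline\mu$ on all of $[0,\bar u]$. The paper's route uses only the conclusions of Proposition~\ref{Proposition: Boundedness and Smoothness}, which the paper advertises as a sufficient condition in its own right. It tolerates early orders essentially never being accepted at the offline price, which is why part (c) only asks for $j\geq\kappa n$. This is exactly the situation under $P_2$ in Example~\ref{Example: LinearRegret}: the early rewards lie in $(0,1)$ while the offline price is near $2.5$, so your per-order $\underline\mu$-step fails even though the averaged $\bar\delta_{j,n}$ stays bounded away from zero.

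\textbf{The price bound.} This part is handled identically in both proofs. The paper's proof actually bounds the sample price $p_{j,n}(d)$, and that is the bound used later in Theorem~\ref{Theorem: DualConvergence}. You covered both the sample price and $p^*_{j,n}(d)$, so nothing is missing there.
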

Thus, we only need to focus on the case $\sup_{d\in\Omega_{j,n}}\norm{p_{j,n}(d)-p^*_{j,n}(d)}_2\leq \nu$, which allow us to use the smoothness properties given by Lemma \ref{Lemma: Unique_Equal_Population_Price} and Proposition \ref{Proposition: Boundedness and Smoothness}. The uniform dual convergnece result is stated in the following theorem.
\begin{thm}\label{Theorem: DualConvergence}
Under Assumption \ref{Assumption: Asymptotic Regime}, \ref{Assumption: Nonstationarity}, \ref{Assumption: p_star}, there exists a constant $C_{Dual}$ such that for all $P\in\mathcal{P}$, when $n$ is large enough,
\begin{equation*}
    \E\left[\sup_{d\in\Omega_{j,n}}\norm{p_{j,n}(d)-p^*_{j,n}(d)}_2^2\right]\leq \frac{C_{Dual}\log(n-j)}{n-j}
\end{equation*}
\end{thm}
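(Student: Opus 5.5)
We split the expectation according to the event $A_{j,n}=\{\sup_{d\in\Omega_{j,n}}\norm{p_{j,n}(d)-p^*_{j,n}(d)}_2\leq\nu\}$, with $\nu$ from Proposition \ref{Proposition: TailProbDualConv}.

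\textbf{The complement of $A_{j,n}$.} Lemma \ref{Lemma: BoundedPrice} bounds $\norm{p^*_{j,n}(d)}_2$ by $\bar u/\underline d$. The same $\ell_1$ argument as in Lemma \ref{Lemma: BoundedPrice-I} should apply to the empirical problem, since $d^{min}>\underline d$ on $\Omega_{j,n}$. It gives $\norm{p_{j,n}(d)}_2\le \bar u/\underline d$ (after bounding rewards by $\bar u$). Hence the squared error is at most $4\bar u^2/\underline d^2$. Proposition \ref{Proposition: TailProbDualConv} then bounds the contribution of $A_{j,n}^c$ by $C\exp(-C_2\nu^2(n-j))$, which is $O(1/(n-j))$.

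\textbf{On $A_{j,n}$: a quadratic-growth comparison.} Fix $d\in\Omega_{j,n}$ and write $p^*=p^*_{j,n}(d)$ and $\hat p=p_{j,n}(d)$. By Lemma \ref{Lemma: Unique_Equal_Population_Price}, $p^*\in B_{\epsilon/2}(p^*_n(d_0))$, so $\hat p\in B_{\epsilon}(p^*_n(d_0))\subseteq\Omega_p$. On $\Omega_p$, Proposition \ref{Proposition: Boundedness and Smoothness}(b) gives strong convexity with modulus $\underline\sigma$. Together with first-order optimality of $p^*$ (which is interior, since $\Omega_p\subseteq\mathbb{R}^m_+$), this yields
\begin{equation*}
\tfrac{\underline\sigma}{2}\norm{\hat p-p^*}^2\le g^*_{j,n}(\hat p,d)-g^*_{j,n}(p^*,d).
\end{equation*}
Since $\hat p$ minimizes $g_{j,n}(\cdot,d)$, the right-hand side is at most $-[(g_{j,n}-g^*_{j,n})(\hat p,d)-(g_{j,n}-g^*_{j,n})(p^*,d)]$. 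The $d^Tp$ terms cancel in this difference. What remains is the centered empirical process
\begin{equation*}
Z_{j,n}(p,q)=\frac1{n-j}\sum_{k=j+1}^n\Big(\phi_k(p)-\phi_k(q)-\E[\phi_k(p)-\phi_k(q)]\Big),\qquad \phi_k(p)=(u_{k,n}-a_{k,n}^Tp)^+ .
\end{equation*}
This gives the deterministic inequality
\begin{equation*}
\sup_{d\in\Omega_{j,n}}\norm{\hat p-p^*}^2\le \frac{2}{\underline\sigma}\sup_{p,q\in B_\epsilon(p^*_n(d_0))}\frac{\abs{Z_{j,n}(p,q)}}{\norm{p-q}}\cdot\norm{\hat p-p^*}.
\end{equation*}
This does not depend on $d$ except through the pair of points. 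The supremum over $d$ is thus reduced to an empirical process indexed by pairs in a fixed ball.

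\textbf{Controlling the empirical process (main obstacle).} A naive Lipschitz bound on $Z_{j,n}$ only gives error $O(1/\sqrt{n-j})$ for $\norm{\hat p-p^*}$, i.e.\ MSE $O(1/(n-j))$, but it needs a local modulus. Following \cite{Bray2024}, we would use a peeling/chaining argument. The summand $\phi_k(p)-\phi_k(q)$ is bounded by $\alpha\sqrt m\norm{p-q}$. Its variance is at most $\alpha^2 m\norm{p-q}^2$. The index class is a VC-type class of hinge functions in $m$ dimensions, and independence (not identical distribution) suffices for Bernstein/Talagrand-type concentration. We would cover the ball by a grid of mesh $1/(n-j)$ and apply Bernstein with a union bound over the $O((n-j)^{2m})$ pairs. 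On shells $\norm{p-q}\in(2^{-i-1},2^{-i}]$ this should show that, with probability $1-(n-j)^{-1}$, $\abs{Z_{j,n}(p,q)}\le C\norm{p-q}\sqrt{\log(n-j)/(n-j)}+C\log(n-j)/(n-j)$ uniformly. Discretization error is handled by Lipschitz continuity of $\phi_k$.

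\textbf{Conclusion.} Plugging this into the inequality above gives, on that high-probability event, a quadratic inequality in $x=\sup_d\norm{\hat p-p^*}$ whose solution is $x^2\le C\log(n-j)/(n-j)$. On the exceptional event we use the bounded-price bound times $(n-j)^{-1}$. Summing the three pieces gives the claimed constant $C_{Dual}$. Uniformity over $P\in\mathcal P$ holds because every constant ($\underline\sigma,\alpha,\bar u,\underline d,\nu,C_1,C_2$) comes from Assumption \ref{Assumption: Nonstationarity}, Proposition \ref{Proposition: Boundedness and Smoothness} and Proposition \ref{Proposition: TailProbDualConv}, none of which depend on $P$.
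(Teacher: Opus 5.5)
Your argument is correct, but it follows a different route from the paper.

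\textbf{The paper's route.} It works at first order. With the midpoint $\hat p_{j,n}(d)=\frac{1}{2}(p_{j,n}(d)+p^*_{j,n}(d))$, Lemma~\ref{Lemma: Convexity} and the local strong monotonicity of $\nabla g^*_{j,n}(\cdot;d)$ give the \emph{linear} bound $\norm{p_{j,n}(d)-p^*_{j,n}(d)}_2\leq\frac{4}{\underline\sigma}\sup_{p\in\Omega_p}\norm{\frac{1}{n-j}\sum_{k=j+1}^n a_{k,n}\mathbbm{1}\{u_{k,n}>a_{k,n}^Tp\}-\bar\delta_{j,n}(p)}_2$. So no localization is needed. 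The cost is that this is a discontinuous indicator process, which Lemma~\ref{Lemma: EmpiricalProcess} handles with a monotone bracketing grid plus Hoeffding.

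\textbf{Your route.} Your zeroth-order basic inequality yields a \emph{quadratic} inequality, so it needs a local modulus of your $Z_{j,n}(p,q)$. In exchange, $Z_{j,n}$ is Lipschitz and each summand has range proportional to $\norm{p-q}_2$. Hence one grid of mesh $1/(n-j)$, Hoeffding for each pair, and a union bound over $O((n-j)^{2m})$ pairs directly give $\abs{Z_{j,n}(p,q)}\leq C\norm{p-q}_2\sqrt{\log(n-j)/(n-j)}+C/(n-j)$. The discretization error is trivial, and peeling, VC arguments and Talagrand are unnecessary.

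\textbf{Two small corrections.} First, plug this bound into the un-normalized inequality $\frac{\underline\sigma}{2}\norm{\hat p-p^*}_2^2\leq\abs{Z_{j,n}(\hat p,p^*)}$, not into the ratio form you displayed. There the $C/(n-j)$ term divided by $\norm{p-q}_2$ is unbounded. Second, your ``naive'' remark is garbled: an unlocalized bound on $\sup\abs{Z_{j,n}}$ only yields squared error of order $\sqrt{\log(n-j)/(n-j)}$.

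\textbf{A shortcut.} The two routes are closer than they look. Integrating along the segment from $q$ to $p$, which stays in $\Omega_p$, gives $\abs{Z_{j,n}(p,q)}\leq\norm{p-q}_2\sup_{r\in\Omega_p}\norm{\frac{1}{n-j}\sum_{k=j+1}^n a_{k,n}\mathbbm{1}\{u_{k,n}>a_{k,n}^Tr\}-\bar\delta_{j,n}(r)}_2$. So your basic inequality together with Lemma~\ref{Lemma: EmpiricalProcess}(b) finishes the proof at once. It even gives constant $4/\underline\sigma^2$ in place of the paper's $16/\underline\sigma^2$.
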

The proof idea of Proposition \ref{Proposition: TailProbDualConv} and Theorem \ref{Theorem: DualConvergence} is to upper bound $\sup_{d\in\Omega_{j,n}}\norm{p_{j,n}(d)-p^*_{j,n}(d)}_2^2$ by the supremum of the difference between the gradient of $g^*_{j,n}$ and its empirical version over $\Omega_p$. Complete proofs are provided in the appendix.
\section{Average Remaining resource}\label{Section: RemainResource}
Under Assumption \ref{Assumption: Single-sample}, the uniform dual convergence result in Theorem \ref{Theorem: DualConvergence} also holds for the prices $p_{j,n}^{\pi}$ used in Algorithm \ref{alg: Simulater}. However, the uniform dual convergence can be used only when the average remaining resource $d_{j,n} \equiv b_{j,n}/(n-j)$ is in $\Omega_{j,n}$. When $d_{j,n}$ is not in $\Omega_{j,n}$ or we run out of resources, we have very little control over the quality of the decisions made by Algorithm \ref{alg: Simulater}. To prove the regret upper bound, we need to analyze how many times we cannot apply the uniform dual convergence result when we implement Algorithm \ref{alg: Simulater}. Let $b_{j,n}(i)$ be the $i$th entry of $b_{j,n}$. Define
\begin{equation*}
    \bar\tau_n = \inf\{j\geq 1:\exists i\;\mathrm{s.t.}\;b_{j,n}(i)<\alpha\;\mathrm{or}\;d_{j,n}\notin\Omega_{j,n}\}\wedge n
\end{equation*}
Let $\{\mathcal{F}_{j,n}\}_{j=0}^n$ be the natural filtration of $\{(u_{j,n},a_{j,n})\}_{j=1}^n$, i.e., $\mathcal{F}_{0,n}=\{\emptyset,\Omega\}$ and $\mathcal{F}_{j,n}=\sigma(\{(u_{k,n},a_{k,n})\}_{k=1}^j)$ for $1\leq j\leq n$. Then, $\bar\tau_n$ is a stopping time adapted to $\{\mathcal{F}_{j,n}\}_{j=0}^n$. Our goal of this section is to show $\E[n-\bar\tau_n]$ is small. The following Lemma provides a characterization of $\Omega_{j,n}$ which allow us to constructing stopping times that are easier to analyze and smaller than or equal to $\bar\tau_n$.
\begin{lem}\label{Lemma: UniformCylinder}
    Under Assumptions \ref{Assumption: Asymptotic Regime}, \ref{Assumption: Nonstationarity} and \ref{Assumption: p_star}, there exists $\epsilon_d>0$ such that for all $P\in\mathcal{P}$, when $n$ is large enough,
    \begin{equation*}
        B_{\epsilon_d}(\delta_{j,n}(d_0))\subseteq\Omega_{j,n}\quad\forall 0\leq j<n
    \end{equation*}
\end{lem}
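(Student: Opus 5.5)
The plan is to show that $\bar\delta_{j,n}$ maps the ball $B_{\epsilon/2}(p_n^*(d_0))$ onto a set that contains a ball of uniform radius around $\bar\delta_{j,n}(p_n^*(d_0))$. We then identify that center with $\delta_{j,n}(d_0)$. By Lemma \ref{Lemma: Unique_Equal_Population_Price}(a),(b), when $n$ is large, $p^*_{j,n}(\delta_{j,n}(d_0)) = p_n^*(d_0)$. Since $\bar\delta_{j,n}^{-1}=p^*_{j,n}$ on $\bar\Omega_{j,n}$, this gives $\delta_{j,n}(d_0)=\bar\delta_{j,n}(p_n^*(d_0))$. It is also directly visible from the first-order condition: $\delta_{j,n}(d_0)$ equals the average of $\E[a_{k,n}\mathbbm{1}\{u_{k,n}>a_{k,n}^Tp\}]$ over the remaining orders at $p=p_n^*(d_0)$. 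This uses $nd_0=\sum_{k=1}^n\E[a_{k,n}\mathbbm{1}\{\cdot\}]$, which holds because $p_n^*(d_0)>0$ lies in the interior of the feasible set. So it suffices to find $\epsilon_d$, independent of $j$, $n$ and $P$, such that $B_{\epsilon_d}(\bar\delta_{j,n}(p_0))\subseteq \bar\delta_{j,n}(B_{\epsilon/2}(p_0))$, where $p_0=p_n^*(d_0)$.

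The key structural fact is that $\bar\delta_{j,n}(p)=-\nabla_p\big(\frac{1}{n-j}\sum_{k>j}\E[(u_{k,n}-a_{k,n}^Tp)^+]\big)$. Hence its Jacobian is $-G^*_{j,n}(p)$, and by Proposition \ref{Proposition: Boundedness and Smoothness}(b) this matrix has eigenvalues in $[\underline\sigma,\bar\sigma]$ on $\Omega_p\supseteq B_{\epsilon}(p_0)$ (using part (c)). So $\bar\delta_{j,n}$ is, up to sign, the gradient of a $\underline\sigma$-strongly convex function on the convex ball $B_{\epsilon/2}(p_0)$. This gives strong monotonicity:
\begin{equation*}
\big(\bar\delta_{j,n}(p)-\bar\delta_{j,n}(q)\big)^T(q-p)\ge \underline\sigma\norm{p-q}_2^2, \qquad\text{hence}\qquad \norm{\bar\delta_{j,n}(p)-\bar\delta_{j,n}(q)}_2\ge\underline\sigma\norm{p-q}_2 .
\end{equation*}

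Now fix $d$ with $\norm{d-\bar\delta_{j,n}(p_0)}_2<\epsilon_d:=\underline\sigma\epsilon/4$ and solve $\bar\delta_{j,n}(p)=d$ inside the ball. Consider $h(p)=d^Tp+\frac{1}{n-j}\sum_{k>j}\E[(u_{k,n}-a_{k,n}^Tp)^+]$, which is $\underline\sigma$-strongly convex on $\bar B:=\overline{B_{\epsilon/2}(p_0)}$. Let $p$ be its minimizer over $\bar B$. The optimality inequality $\nabla h(p)^T(p_0-p)\ge0$ gives $(d-\bar\delta_{j,n}(p))^T(p_0-p)\ge 0$. Combining this with strong monotonicity yields
\begin{equation*}
\underline\sigma\norm{p-p_0}_2^2\le (d-\bar\delta_{j,n}(p_0))^T(p_0-p)\le \epsilon_d\norm{p-p_0}_2 ,
\end{equation*}
so $\norm{p-p_0}_2\le \epsilon/4$. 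Thus $p$ lies in the open ball, $\nabla h(p)=0$, and $d=\bar\delta_{j,n}(p)\in\Omega_{j,n}$. Equivalently, one could cite a quantitative inverse function theorem, but this variational argument avoids it.

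The main obstacle is uniformity in $j$, $n$ and $P$. We need $B_\epsilon(p_0)\subseteq\Omega_p$ and the eigenvalue bounds to hold for every $j$ simultaneously once $n\ge N_0$, with constants independent of $P$. This is exactly what Proposition \ref{Proposition: Boundedness and Smoothness}(b),(c) supply, so the resulting $\epsilon_d=\underline\sigma\epsilon/4$ is uniform. A minor point is that $B_{\epsilon/2}(p_0)\subseteq\mathbb{R}^m_+$ must hold so the constraint $p\ge0$ is inactive. This follows because $\Omega_p\subseteq\mathbb{R}^m_+$.
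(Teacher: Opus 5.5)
Your proof is correct, but it reaches the conclusion by a different route than the paper.

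The paper first proves an auxiliary Lipschitz lemma for the global population price $p^*_{j,n}(d)$, the minimizer over $p\ge 0$. It proceeds in four steps:
\begin{itemize}
\item It bounds the objective gap $g^*_{j,n}(p^*_{j,n}(d),\delta_{j,n}(d_0))-g^*_{j,n}(p^*_{j,n}(\delta_{j,n}(d_0)),\delta_{j,n}(d_0))$ by a multiple of $\norm{d-\delta_{j,n}(d_0)}_2$. This uses the a priori price bound of Lemma \ref{Lemma: BoundedPrice-I} together with the entrywise lower bound $\delta_{j,n}(d_0)>\frac{1-\kappa}{2}\epsilon_p e$ from Proposition \ref{Proposition: Boundedness and Smoothness}(c).
\item It uses a convex level-set argument to trap $p^*_{j,n}(d)$ inside $\Omega_p$.
\item It obtains $\norm{p^*_{j,n}(d)-p_n^*(d_0)}_2\le\norm{d-\delta_{j,n}(d_0)}_2/\underline\sigma$ from strong convexity.
\item It shrinks $\epsilon_d$ so that $p^*_{j,n}(d)\in B_{\frac{\epsilon}{2}}(p_n^*(d_0))$, where the first-order condition gives $d=\bar\delta_{j,n}(p^*_{j,n}(d))\in\Omega_{j,n}$.
\end{itemize}

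You instead minimize over the closed ball $\overline{B_{\epsilon/2}(p_0)}$, with $p_0=p_n^*(d_0)$, from the outset. Existence and localization of the preimage then come for free from compactness, the variational inequality and strong monotonicity. You need neither the price bound nor the positivity of $\delta_{j,n}(d_0)$. You also get the explicit radius $\epsilon_d=\underline\sigma\epsilon/4$, with the same local Lipschitz constant $1/\underline\sigma$.

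What the paper's route buys is the direct identification of the preimage with the global price $p^*_{j,n}(d)$ for every $d$ in the ball. That fact is used again later, e.g.\ in the proof of Theorem \ref{Theorem: LogStoppingTime}. In your framework it costs one extra line. Either invoke Lemma \ref{Lemma: Unique_Equal_Population_Price}(b) once $d\in\Omega_{j,n}$ is known, or observe that a stationary point of the convex function $g^*_{j,n}(\cdot,d)$ lying in $\Omega_p\subseteq\mathbb{R}^m_+$ is its unique minimizer over $p\ge0$, by strict convexity near that point.

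That identification is also the only place where your remark about the constraint $p\ge 0$ being inactive is actually needed. The lemma itself follows without it.
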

\noindent Let $\underline{d}$ be given in Lemma \ref{Lemma: BoundedPrice}. Then, Lemma \ref{Lemma: BoundedPrice} and \ref{Lemma: UniformCylinder} imply that if  
\begin{equation}\label{Definition: stopping_2}
    \iota \geq \left\lceil\frac{\alpha}{\underline{d}}\right\rceil\;\;\mathrm{and}\;\;0<\epsilon_{j,n}<\epsilon_d\;\forall 0\leq j \leq n-  \iota-1
\end{equation}
then 
\begin{equation*}
    \bar\tau_n\geq\tau_n\equiv \inf\left\{1\leq j\leq n-\iota-1:d_{j,n}\notin B_{\epsilon_{j,n}}(\delta_{j,n}(d_0))\right\}\wedge n-\iota\quad\mathrm{a.s.}
\end{equation*}
In the rest of this section, we focus on how to select appropriate $\iota$ and $\{\epsilon_{j,n}\}_{j=1}^n$ so that $\E[n-\tau_n]$ is $O((\log n))$. Let $e\in\mathbb{R}^m$ be the vector that every entry is $1$. Define
\begin{equation*}
    d'_{j,n} = d_{j,n}\mathbbm{1}\{j\leq \tau_n\} + 2\alpha e\mathbbm{1}\{j> \tau_n\}
\end{equation*}
In other words, $d'_{j,n}$ is the average remaining resource up to the stopping time $\tau_n$ and is fixed to a relatively large value $2\alpha e$ after $\tau_n$. In addition, let $X_{j,n} = d'_{j,n}-\delta_{j,n}(d_0)$, which is $d'_{j,n}$ centered by $\delta_{j,n}(d_0)$. Then, the definition of $\tau_n$ together with Lemma \ref{Lemma: BoundedPrice} implies that $P(\tau_n\leq j)\leq P(\norm{X_{j,n}}_2>\epsilon_{j,n})$ for all $j < n-\iota$. Thus, to get an upper bound of $\E[n-\tau_n]$, we can focus on the tail probability of $X_{j,n}$. Define
\begin{equation*}
    \begin{aligned}
        Z_{j,n} = \mathbb{E}\left[X_{j+1,n}\bigg\rvert\mathcal{F}_{j,n}\right]-X_{j,n}
    \end{aligned}
\end{equation*}
Then $M_{j,n} = X_{j,n} - \sum_{k=0}^{j-1}Z_{k,n}$ forms a martingale adapted to $\{\mathcal{F}_{j,n}\}_{j=0}^n$. Note that $d'_{j,n}$, $X_{j,n}$, $Z_{j,n}$ and $M_{j,n}$ all depend on $\iota$ and $\epsilon_{j,n}$'s selected for the stopping time $\tau_n$, but we do not include them into the notation for simplicity. The following lemma gives an upper bound on the tail probability of $M_{j,n}$.
\begin{lem}\label{Lemma: TailProbMartingale}
    Under Assumption \ref{Assumption: Nonstationarity}, for any $P\in\mathcal{P}$, if $\iota$ and $\{\epsilon_{j,n}\}_{j=0}^{n-\iota-1}$ satisfy (\ref{Definition: stopping_2}), then
    \begin{equation*}
        \begin{aligned}
            P\left\{\norm{M_{j,n}}_2\geq\eta\right\}\leq 2m\exp\left(-\frac{\eta^2(n-j-1)}{(8\alpha+2\epsilon_d)^2}\right)
        \end{aligned}
    \end{equation*}
    for all $n\in\mathbb{N}_+$, $0\leq j\leq n$ and $\eta>0$.
\end{lem}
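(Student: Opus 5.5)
Since $M_{j,n}$ is a vector martingale with $M_{0,n}=X_{0,n}$, the plan is to apply the Azuma--Hoeffding inequality coordinatewise and then take a union bound over the $m$ coordinates. Because $d'_{0,n}=d_0=\delta_{0,n}(d_0)$, we have $X_{0,n}=0$ and hence $M_{0,n}=0$. The increments are
\begin{equation*}
D_{k,n}=M_{k+1,n}-M_{k,n}=X_{k+1,n}-\E\left[X_{k+1,n}\rvert\mathcal{F}_{k,n}\right],
\end{equation*}
a centered version of $X_{k+1,n}$. It therefore suffices to show that, for each coordinate $i$, $D_{k,n}(i)$ lies almost surely in an interval of length at most $c_k=\frac{8\alpha+2\epsilon_d}{n-k-1}$, or some comparable bound. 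We then sum the squares: $\sum_{k=0}^{j-1}c_k^2\le (8\alpha+2\epsilon_d)^2\sum_{k=0}^{j-1}(n-k-1)^{-2}\le \frac{(8\alpha+2\epsilon_d)^2}{n-j-1}$. The last inequality holds because $\sum_{k<j}\frac{1}{(n-k-1)^2}\le\int_{n-j-1}^{\infty}x^{-2}dx$, after an index shift (a factor of $2$ may need to be absorbed). Azuma then gives
\begin{equation*}
P\{\abs{M_{j,n}(i)}\ge \eta/\sqrt m\}\le 2\exp\left(-\frac{\eta^2(n-j-1)}{c\,m(8\alpha+2\epsilon_d)^2}\right).
\end{equation*}
To avoid losing a factor of $m$ here, I would instead use $\norm{M}_2\ge\eta\Rightarrow$ some coordinate satisfies $\abs{M(i)}\ge\eta/\sqrt m$, or absorb the constant into the increment bound. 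The bookkeeping of constants is where I expect some adjustment to match the stated $(8\alpha+2\epsilon_d)^2$ exactly.

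The key step is the increment bound, which splits into cases according to $\tau_n$; note that $\{\tau_n\le k\}\in\mathcal{F}_{k,n}$. On $\{\tau_n\le k\}$ we have $\tau_n<k+1$, so $d'_{k+1,n}=2\alpha e$ is deterministic. Then $X_{k+1,n}$ is $\mathcal{F}_{k,n}$-measurable and $D_{k,n}=0$.

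On $\{\tau_n>k\}$, $d'_{k,n}=d_{k,n}\in B_{\epsilon_{k,n}}(\delta_{k,n}(d_0))$, so each coordinate of $d_{k,n}$ is bounded: by Lemma \ref{Lemma: BoundedPrice} it lies in $(\underline d,\alpha]$, and in any case it is within $\epsilon_d$ of $\delta_{k,n}(d_0)$. At time $k+1$ there are two possibilities. If $\tau_n>k+1$, then
\begin{equation*}
d_{k+1,n}=\frac{(n-k)d_{k,n}-a_{k+1,n}x^\pi_{k+1,n}}{n-k-1}.
\end{equation*}
Alternatively, $\tau_n=k+1$, in which case $d'_{k+1,n}$ equals either $d_{k+1,n}$ (still, since $j\le\tau_n$) or $2\alpha e$ when $k+1=n-\iota$ is reached. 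In each case, after subtracting the deterministic $\delta_{k+1,n}(d_0)$ (which does not affect centered increments), $X_{k+1,n}$ equals an $\mathcal{F}_{k,n}$-measurable quantity plus a term whose conditional range is controlled. Concretely, $d_{k+1,n}-\frac{n-k}{n-k-1}d_{k,n}=-\frac{a_{k+1,n}x^\pi_{k+1,n}}{n-k-1}$ has range at most $\frac{\alpha}{n-k-1}$.

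The only possible jump is the switch from $d_{k+1,n}$ to $2\alpha e$. This can happen only at the deterministic index $n-\iota$, and is then measurable, or as part of the stopping event. I would handle this by noting that the overall range of $d'_{k+1,n}(i)$ given $\mathcal{F}_{k,n}$ is contained in an interval determined by $d_{k,n}$ together with the values $d_{k+1,n}$ and $2\alpha e$. That interval's length must be shown to be at most $(8\alpha+2\epsilon_d)/(n-k-1)$. This requires the fact that a deviation from $\delta$ exceeding $\epsilon_{k+1,n}$ cannot occur unless $k$ is close to $n$, so $n-k-1\le\iota$, together with $\iota\ge\alpha/\underline d$ and $\abs{2\alpha-d(i)}\le 2\alpha+\epsilon_d$.

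This case of the jump to $2\alpha e$ at the stopping time is the main obstacle. I would try first to show that it occurs only when $n-k-1$ is of order $\iota$, so that the jump of size $O(\alpha+\epsilon_d)$ is still $O(1/(n-k-1))$ up to the constant $8$.
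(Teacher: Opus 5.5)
Your skeleton matches the paper's proof: Azuma--Hoeffding coordinatewise plus a union bound over the $m$ coordinates, increments that vanish on $\{\tau_n\le k\}$, and increments of order $1/(n-k-1)$ on $\{\tau_n>k\}$. The vanishing holds because $\{\tau_n\le k\}\in\mathcal{F}_{k,n}$ and $X_{k+1,n}=2\alpha e-\delta_{k+1,n}(d_0)$ is deterministic there; this is the paper's display \eqref{eqn: martingale_frozen}. The gap is the step you call the main obstacle. It does not exist, and the workaround you propose for it would fail.

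By definition $d'_{j,n}=d_{j,n}\mathbbm{1}\{j\le\tau_n\}+2\alpha e\,\mathbbm{1}\{j>\tau_n\}$. On $\{\tau_n>k\}$ you have $k+1\le\tau_n$, so $d'_{k+1,n}=d_{k+1,n}$ whatever the reason for stopping, including $\tau_n=k+1=n-\iota$. Also, $d_{k+1,n}=\frac{(n-k)d_{k,n}-a_{k+1,n}x^\pi_{k+1,n}}{n-k-1}$ is just the resource dynamics and holds whether or not $\tau_n>k+1$. The switch to $2\alpha e$ happens at index $\tau_n+1$, i.e.\ in the step $k=\tau_n$. That step lies in the case $\{\tau_n\le k\}$, which you already handled: the $O(1)$ jump is predictable and is absorbed into the drift $Z_{\tau_n,n}$, contributing nothing to $M$.

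Your proposed fix, showing the jump occurs only when $n-k-1$ is of order $\iota$, cannot work. Stopping at times with $n-\tau_n\gg\iota$ has positive probability in general. Bounding how likely that is, is exactly what this lemma is used for in Lemma \ref{Lemma: O(1)StoppingTime}, so assuming it away is circular. Even when $n-k-1\le\iota$, a jump of order $\alpha$ is only bounded by $2\alpha\iota/(n-k-1)$, which is not uniform, since $\iota_n$ grows like $\log n$ in Theorem \ref{Theorem: LogStoppingTime}. Drop that part: your two cases $\{\tau_n\le k\}$ and $\{\tau_n>k\}$ are exhaustive, and both are settled by what you wrote.

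With that removed, your increment bound is cleaner than the paper's. On $\{\tau_n>k\}$, the only random part of $X_{k+1,n}(i)$ given $\mathcal{F}_{k,n}$ is $-a_{k+1,n}(i)x^\pi_{k+1,n}/(n-k-1)\in[-\alpha/(n-k-1),0]$. So $D_{k,n}(i)$ ranges over an interval of length $\alpha/(n-k-1)$. The paper instead bounds $|X_{k+1,n}(i)-X_{k,n}(i)|$ by $(4\alpha+\epsilon_d)/(n-k-1)$, using $d_{k,n}(i)\le\alpha+\epsilon_d$ and the drift of $\delta_{k,n}(d_0)$, and doubles it for the centering. The bound $\sum_{k<j}(n-k-1)^{-2}\le 1/(n-j-1)$ holds with no extra factor. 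Azuma--Hoeffding and the union bound then give $2m\exp(-2\eta^2(n-j-1)/(m\alpha^2))$.

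Do not chase the constant $(8\alpha+2\epsilon_d)^2$ exactly. The paper's own Azuma step is loose here too: it omits the factor $2$ in Azuma's exponent and the factor $m$ from passing from $\|M_{j,n}\|_2\ge\eta$ to $|M_{j,n}(i)|\ge\eta/\sqrt m$ for some $i$. Your remark that this passage avoids a factor $m$ has it backwards; it is where the factor comes from. In any case, Lemma \ref{Lemma: O(1)StoppingTime} only needs a bound $2m\exp(-c\,\eta^2(n-j-1))$ with $c$ uniform over $\mathcal{P}$.

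One caveat you share with the paper: $d_{k,n}$ and $\{\tau_n\le k\}$ depend on the pre-drawn sample through the prices $p^\pi_{l,n}$. They are $\mathcal{F}_{k,n}$-measurable only if each $\mathcal{F}_{k,n}$ is enlarged to contain $\{(\tilde u_{l,n},\tilde a_{l,n})\}_{l=1}^n$. With that enlargement, the argument for this lemma goes through unchanged.
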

The fast decay rate of the tail probability of $M_{j,n}$'s given in Lemma \ref{Lemma: TailProbMartingale} suggests that, if we can select $\epsilon_{j,n}$'s such that $P(\norm{X_{j,n}}_2>\epsilon_{j,n})\leq P(\norm{M_{j,n}}_2\geq \eta)$ for all $0\leq j<n-\iota$ under some small enough $\iota$ and $\eta$, then we can achieve the desired upper bound of $\E[n-\bar\tau_n]$. To find $\iota$, $\eta$ and construct $\epsilon_{j,n}$'s that satisfy this condition, we characterize $Z_{j,n}$'s in the following way. Define
\begin{equation*}
    Z_n(j,d) = \E\left[d_{j+1,n}-\delta_{j+1,n}(d_0)\bigg\rvert d_{j,n}=d\right]-(d-\delta_{j,n}(d_0))
\end{equation*}
By the independence property given in Assumption \ref{Assumption: Single-sample} and Assumption, \ref{Assumption: Nonstationarity}
\begin{equation}\label{eqn: ZRelation}
    Z_{j,n}\mathbbm{1}\{j<\tau_n\}=Z_n(j,d_{j,n})\mathbbm{1}\{j<\tau_n\}\quad\mathrm{a.s.}
\end{equation}
In the setting that $\{u_{j,n},a_{j,n}\}_{j=1}^n$ are i.i.d. and satisfy Assumptions \ref{Assumption: Nonstationarity} and \ref{Assumption: p_star}, $\delta_{j,n}(d) = d$ for all $d\in B_{\epsilon_d}(d_0)$, then 
\begin{equation*}
    \begin{aligned}
        Z_{j,n}\mathbbm{1}\{j<\tau_n\} &= \frac{d_{j,n}-\E\left[a_{j+1,n}\mathbbm{1}\left\{u_{j+1,n}>{a_{j+1,n}}^Tp_{j,n}^{\pi}\right\}\bigg\rvert d_{j,n}\right]}{n-j-1}\mathbbm{1}\{j<\tau_n\}\\
        &= \frac{\E\left[a_{j+1,n}\left(\mathbbm{1}\left\{u_{j+1,n}>{a_{j+1,n}}^Tp_{j,n}^*(d_{j,n})\right\}-\mathbbm{1}\left\{u_{j+1,n}>{a_{j+1,n}}^Tp_{j,n}^{\pi}\right\}\right)\bigg\rvert d_{j,n}\right]}{n-j-1}\mathbbm{1}\{j<\tau_n\}
    \end{aligned}
\end{equation*}
Thus, in the i.i.d. case, the uniform dual convergences results imply $Z_{j,n}$'s are small enough so that there is no big difference between $X_{j,n}$ and $M_{j,n}$. Then, we can directly take $\epsilon_{j,n} = \epsilon_d$, which will make our analysis of the stopping time almost the same as the analysis in the literature of the i.i.d. case (ex. \cite{Bray2024}). However, in the nonstationary setting, $\{\delta_{j,n}(d)\}_{j=1}^n$ is no longer a constant sequence, and $X_{j,n}$ and $M_{j,n}$ are different significantly. We need to handle this difference by appropriately constructing $\epsilon_{j,n}$. The following theorem summarizes our approach and provide the main result of this section.
\begin{thm}\label{Theorem: LogStoppingTime}
    Under Assumptions \ref{Assumption: Asymptotic Regime}, \ref{Assumption: Single-sample}, \ref{Assumption: Nonstationarity}, \ref{Assumption: p_star}, there exist constants $\eta$, $C_{Res}$ and $\{\iota_n\}_{n=1}^{+\infty}$ such that for all $P\in\mathcal{P}$, when $n$ is large enough, if $\epsilon_{0,n}=\eta$, and
    \begin{equation*}
        \epsilon_{j+1,n} = \epsilon_{j,n}+\max_{d\in B_{\epsilon_{j,n}}(\delta_{j,n}(d_0))}\norm{Z_n(j,d)}_2\quad 0\leq j<n-\iota_n
    \end{equation*} 
    then
    \begin{equation*}
        \bar\tau_n\geq\tau_n\equiv \inf\left\{1\leq j\leq n-\iota_n-1:d_{j,n}\notin B_{\epsilon_{j,n}}(\delta_{j,n}(d_0))\right\}\wedge n-\iota_n\quad\mathrm{a.s.}
    \end{equation*}
    and
    \begin{equation*}
        \E[n-\bar\tau_n]\leq C_{Res}\log n.    
    \end{equation*}
\end{thm}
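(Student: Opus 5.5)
We prove Theorem \ref{Theorem: LogStoppingTime} in three steps. First we show that the recursively defined radii stay uniformly below $\epsilon_d$, so that $\tau_n\le\bar\tau_n$ holds by Lemmas \ref{Lemma: BoundedPrice} and \ref{Lemma: UniformCylinder}. Then we show that leaving the tube forces a large martingale deviation. Finally we sum the tail bounds of Lemma \ref{Lemma: TailProbMartingale}.

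\textbf{Step 1: the radii stay small.} For $d\in B_{\epsilon_{j,n}}(\delta_{j,n}(d_0))$, write $d_{j+1,n}$ explicitly as $(d(n-j)-a_{j+1,n}x^\pi_{j+1,n})/(n-j-1)$. Using $\delta_{j+1,n}(d_0)=(\delta_{j,n}(d_0)(n-j)-\E[a_{j+1,n}\mathbbm{1}\{u_{j+1,n}>a_{j+1,n}^Tp_n^*(d_0)\}])/(n-j-1)$, we decompose
\begin{equation*}
Z_n(j,d)=\frac{\E\bigl[a_{j+1,n}\bigl(\mathbbm{1}\{u>a^Tp_n^*(d_0)\}-\mathbbm{1}\{u>a^Tp^\pi_{j,n}\}\bigr)\bigr]-(d-\delta_{j,n}(d_0))}{n-j-1}\cdot(-1)^{\cdot}
\end{equation*}
(up to sign conventions), and split the numerator into three terms.
\begin{itemize}[leftmargin=*]
\item A \emph{drift term}, obtained by replacing $p^\pi_{j,n}$ with $p^*_{j,n}(d)$. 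By Lemma \ref{Lemma: Unique_Equal_Population_Price}(a), $p^*_{j,n}(\delta_{j,n}(d_0))=p^*_n(d_0)$. The map $d\mapsto\E[a_{j+1,n}\mathbbm{1}\{u_{j+1,n}>a_{j+1,n}^Tp^*_{j,n}(d)\}]-d$ vanishes to first order in the i.i.d.\ case. In general its Jacobian is $-\nabla^2\E[(u_{j+1}-a_{j+1}^Tp)^+]\,G^*_{j,n}{}^{-1}-I$ (by Lemma \ref{Lemma: Unique_Equal_Population_Price}(c)). This differs from $-2I$ plus a contraction only by the difference between the Hessian for order $j+1$ and the averaged Hessian $G^*_{j,n}$.
\item That difference is controlled by the variation quantities appearing in $L_{j,n}$. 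The goal is therefore a recursion of the form
\begin{equation*}
\epsilon_{j+1,n}\le\epsilon_{j,n}(1+L_{j,n})+\frac{C\log(n-j)}{(n-j)^{3/2}}\cdot(\text{or similar}),
\end{equation*}
where the homogeneous part (the identity plus the contraction) contributes no growth.
\item A \emph{pricing-error term}, from $p^\pi_{j,n}$ versus $p^*_{j,n}(d)$. We bound it by the density bound $\bar\mu$ times $\E\|p^\pi_{j,n}-p^*_{j,n}(d)\|$. Theorem \ref{Theorem: DualConvergence} together with Assumption \ref{Assumption: Single-sample} (independence of the sample from the realized orders) gives $O(\sqrt{\log(n-j)/(n-j)})$, divided by $n-j-1$.
\end{itemize}
The bounded-product condition in Proposition \ref{Proposition: Boundedness and Smoothness}(d) then gives $\epsilon_{j,n}\le e^{C}\bigl(\eta+\sum_k C\sqrt{\log(n-k)}/(n-k)^{3/2}\bigr)$ for $j<n-\iota_n$. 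Choosing $\eta$ small and $\iota_n=C'\log n$ (or a large constant) makes this at most $\epsilon_d/2$, and we also take $\iota_n\ge\lceil\alpha/\underline d\rceil$. Condition (\ref{Definition: stopping_2}) then holds and yields $\bar\tau_n\ge\tau_n$.

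\textbf{Step 2: leaving the tube forces a large martingale deviation.} By (\ref{eqn: ZRelation}) and the recursion, on $\{j<\tau_n\}$ we have $\|Z_{j,n}\|\le\epsilon_{j+1,n}-\epsilon_{j,n}$. Hence $\|\sum_{k<j}Z_{k,n}\|\le\epsilon_{j,n}-\eta$. After $\tau_n$, $X_{j,n}$ is frozen to $2\alpha e-\delta_{j,n}(d_0)$, and $Z$ is handled similarly. Therefore $\{\tau_n\le j\}\subseteq\{\|M_{k,n}\|_2\ge\eta\text{ for some }k\le j\}$; more simply, at the exit time $\|X\|>\epsilon$ forces $\|M\|>\eta$. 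Lemma \ref{Lemma: TailProbMartingale} with a union bound then gives
\begin{equation*}
P(\tau_n\le j)\le\sum_{k\le j}2m\exp\bigl(-c\eta^2(n-k-1)\bigr)\le C\exp\bigl(-c\eta^2(n-j)\bigr).
\end{equation*}

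\textbf{Step 3: summing.} We have $\E[n-\tau_n]=\sum_j P(\tau_n\le j)+\iota_n$. Taking the tail bound as $1$ for $j>n-C''\log n$ and using the geometric bound for smaller $j$ gives $O(\log n)$.

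\textbf{Main obstacle.} The hard part is Step 1. We must show that the drift of $X$ under the population price is self-correcting, or at least non-expansive up to the variation terms, despite nonstationarity. This requires linearizing $\E[a_{j+1}\mathbbm{1}\{u>a^Tp^*_{j,n}(d)\}]$ around $\delta_{j,n}(d_0)$ using Proposition \ref{Proposition: Boundedness and Smoothness}(a),(b). The first thing to try is to show that the resulting matrix factor is at most $1+L_{j,n}$ in operator norm.
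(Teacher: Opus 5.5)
Your architecture matches the paper's. You use a deterministic comparison sequence $e_{j+1,n}=(1+L_{j,n})e_{j,n}+C\sqrt{\log n}\,(n-j-1)^{-3/2}$, kept below $\epsilon_d$ via Proposition~\ref{Proposition: Boundedness and Smoothness}(d) and $\iota_n\asymp\log n$. You then argue that leaving the tube forces $\norm{M_{\tau_n,n}}_2\ge\eta$, and sum the Azuma tails. The paper instead uses that $M$ is frozen after $\tau_n$ to get $P(\tau_n\le j)\le P(\norm{M_{j,n}}_2\ge\eta/2)$ directly, but your union bound also works.

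\textbf{The gap.} It is the step you flag as the main obstacle. You never establish $\max_{d\in B_{\epsilon_{j,n}}(\delta_{j,n}(d_0))}\norm{Z_n(j,d)}_2\le L_{j,n}\epsilon_{j,n}+(\text{pricing error})$, and the structure you sketch for it is wrong in a way that matters. Write $A_{k,n}(p)=a_{k,n}\mathbbm{1}\{u_{k,n}>a_{k,n}^Tp\}$ and let $\tilde p_{j,n}(d)$ be the sample-based price at state $d$. Computing directly,
$Z_n(j,d)=\frac{1}{n-j-1}\bigl[(d-\delta_{j,n}(d_0))+\E[A_{j+1,n}(p_n^*(d_0))-A_{j+1,n}(\tilde p_{j,n}(d))]\bigr]$,
so $d-\delta_{j,n}(d_0)$ enters with a plus sign. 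Let $H_k(p)=\nabla_p^2\E[(u_{k,n}-a_{k,n}^Tp)^+]$. Then $\nabla_p\E[A_{j+1,n}(p)]=-H_{j+1}(p)$, and $p^*_{j,n}$ has Jacobian $-(G^*_{j,n})^{-1}$. Hence the Jacobian of $d\mapsto\E[A_{j+1,n}(p^*_{j,n}(d))]-d$ is $H_{j+1}(G^*_{j,n})^{-1}-I=-(G^*_{j,n}-H_{j+1})(G^*_{j,n})^{-1}$, not $-H_{j+1}(G^*_{j,n})^{-1}-I$. Your own remark that this map vanishes in the i.i.d.\ case already contradicts your $-2I$.

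There is no $-2I$ and no contraction: the two first-order contributions cancel, exactly so when the orders are identically distributed. This matters because the recursion in the statement adds $\max_d\norm{Z_n(j,d)}_2$ to $\epsilon_{j,n}$. Since norms are added, mean reversion cannot be exploited. A drift of the size your Jacobian predicts, about $2\epsilon_{j,n}/(n-j-1)$, would give $\epsilon_{n-\iota_n,n}$ of order $\eta(n/\iota_n)^2$ and break $\epsilon_{j,n}<\epsilon_d$.

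\textbf{The missing idea.} The population part of $Z_n(j,d)$ is a pure nonstationarity term. The paper obtains this without linearizing in $d$. It uses $d=\bar\delta_{j,n}(p^*_{j,n}(d))$, $p^*_{j,n}(\delta_{j,n}(d_0))=p_n^*(d_0)$ and $(n-j)\bar\delta_{j,n}=\E[A_{j+1,n}]+(n-j-1)\bar\delta_{j+1,n}$. With $p_1=p^*_{j,n}(d)$ and $p_2=p_n^*(d_0)$, it rewrites that part as $\frac{1}{n-j}\bigl[(\bar\delta_{j+1,n}(p_1)-\bar\delta_{j+1,n}(p_2))-\E[A_{j+1,n}(p_1)-A_{j+1,n}(p_2)]\bigr]$. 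This is an average over $k\ge j+2$ of order-$k$ minus order-$(j+1)$ differences. Each is bounded through $\int_\Xi a\int_{a^Tp_1}^{a^Tp_2}f\,ds\,v\,da$ by the $f$- and $v$-variation times $\norm{p_1-p_2}_2\le\norm{d-\delta_{j,n}(d_0)}_2/\underline{\sigma}$. That is exactly $L_{j,n}\norm{d-\delta_{j,n}(d_0)}_2$.

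Your Jacobian route also works once the sign is fixed, since $G^*_{j,n}-H_{j+1}=\frac{1}{n-j}\sum_{k=j+2}^n(H_k-H_{j+1})$ is controlled by the same variation terms. It does, however, need the integral form of the mean value theorem, with $p^*_{j,n}$ staying in $\Omega_p$ along the whole segment. This is why the paper first fixes $\eta,\iota_n$ so that the deterministic $e_{n-\iota_n-1,n}<\epsilon_d$, and only then proves $\epsilon_{j,n}\le e_{j,n}$ by induction. That ordering ensures every ball on which the drift bound is applied lies inside $B_{\epsilon_d}(\delta_{j,n}(d_0))$.
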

A key step in the proof of Theorem \ref{Theorem: LogStoppingTime} is to show that there exists $\eta$ small enough such that $\epsilon_{j,n}<\epsilon_d$ for all $j$ and $n$. This step is based on the fact that Assumption \ref{Assumption: Nonstationarity} guarantees that the distributions of the rewards and the resource consumptions do not change too rapidly over the planning horizon. Proposition \ref{Proposition: Boundedness and Smoothness}(d) gives the precise mathematical description of this fact. Thus, we can view Proposition \ref{Proposition: Boundedness and Smoothness}(d) as a variation budget requirement for achieving single-sample polylogarithmic regret upper bound for nonstationary OLP.
\section{Regret Analysis}\label{Section: Regret Analysis}
Theorem \ref{Theorem: DualConvergence} and Theorem \ref{Theorem: LogStoppingTime} help us deal with most of the challenging parts of the regret analysis of nonstationary OLP. In the following, we use the idea of myopic regret (\cite{Bray2024}) to decompose the regret and then apply the two theorems derived above to prove the polylogarithmic regret upper bound. Define
\begin{equation}\label{eqn: RemainingOptimalOffline}
    R_{j,n}^{off}(d) = \max\left\{\sum_{k=j+1}^nu_{k,n}x_{k,n}:\sum_{k=j+1}^n{a_{k,n}}x_{k,n}\leq (n-j)d,\;x_{k,n}\in [0,1]\;\forall j<k\leq n\right\},
\end{equation}
$R_{j,n}^{off}(d)$ is the optimal total reward earned if we allocate $(n-j)d$ resources to the last $n-j$ orders in an offline fashion. $p_{j,n}(d)$ is then an optimal dual solution corresponding to the resource constraint of (\ref{eqn: RemainingOptimalOffline}). In addition, define 
\begin{equation*}
    R_{j,n}^\pi = \sum_{k=j+1}^nu_{k,n}x_{k,n}^{\pi},   
\end{equation*}
which is the total reward earned by Algorithm \ref{alg: Simulater} on the last $n-j$ orders. Fix $P\in\mathcal{P}$, recall $d_{j,n}=\frac{b_{j,n}}{n-j}$ which is the average remaining resource under Algorithm \ref{alg: Simulater}.
\begin{equation*}
    \begin{aligned}
        &Regret_n(\pi,P)\\
        =& \mathbb{E}\left[R_{0,n}^{off}(d_0)-R_{0,n}^\pi\right]\\
        =& \mathbb{E}\left[\max_{x_{1,n}\in [0,1]\;{a_{1,n}}x_{1,n}\leq nd_0} u_{1,n}x_{1,n} + R_{1,n}^{off}\left(\frac{nd_0-{a_{1,n}}x_{1,n}}{n-1}\right)-u_{1,n}x_{1,n}^{\pi}-R_{1,n}^\pi\right]\\
        =& \mathbb{E}\left[\max_{x_{1,n}\in [0,1]\;{a_{1,n}}x_{1,n}\leq nd_0} u_{1,n}x_{1,n} + R_{1,n}^{off}\left(\frac{nd_0-{a_{1,n}}x_{1,n}}{n-1}\right)-R_{1,n}^{off}(d_{1,n})-u_{1,n}x_{1,n}^{\pi}\right]+\mathbb{E}\left[R_{1,n}^{off}(d_{1,n})-R_{1,n}^\pi\right]
    \end{aligned}
\end{equation*}
By LP Duality Theory, $R_{j,n}^{off}(d) = \min_{p\geq0}\;(n-j)\cdot g_{j,n}(p,d)$. Since $p_{1,n}(d_{1,n})$ is a feasible solution for evaluating $R_{1,n}^{off}\left(\frac{nd_0-{a_{1,n}}x_{1,n}}{n-1}\right)$ for any $x_{1,n}$, 
\begin{equation*}
    \begin{aligned}
        R_{1,n}^{off}\left(\frac{nd_0-{a_{1,n}}x_{1,n}}{n-1}\right)-R_{1:n}^{off}(d_{1,n})&\leq \left(nd_0-{a_{1,n}}x_{1,n}-(n-1)d_{1,n}\right)^Tp_{1,n}(d_{1,n})\\
        &=(x_{1,n}^{\pi}-x_{1,n}){a_{1,n}}^Tp_{1,n}(d_{1,n})
    \end{aligned}
\end{equation*}
Then,
\begin{equation*}
    \begin{aligned}
        Regret_n(\pi,P)\leq &\mathbb{E}\left[\max_{x_{1,n}\in [0,1]} \left(u_{1,n}-{a_{1,n}}^Tp_{1,n}(d_{1,n})\right)\left(x_{1,n}-x_{1,n}^{\pi}\right)\right]+\mathbb{E}\left[R_{1,n}^{off}(d_{1,n})-R_{1,n}(\pi)\right]\\
        =&\mathbb{E}\left[\left(u_{1,n}-{a_{1,n}}^Tp_{1,n}(d_{1,n})\right)\left(\mathbbm{1}\left\{u_{1,n}>{a_{1,n}}^Tp_{1,n}(d_{1,n})\right\}-x_{1,n}^{\pi}\right)\right]+\mathbb{E}\left[R_{1,n}^{off}(d_{1,n})-R_{1,n}^\pi\right]
    \end{aligned}
\end{equation*}
Repeat the above argument for each $j$, we have the following general regret upper bound. Let $\bar\tau_n$ be defined in Section \ref{Section: RemainResource}.
\begin{equation*}
    \begin{aligned}
        Regret_n(\pi,P)\leq&\sum_{j=1}^{n-1}\mathbb{E}\left[\left(u_{j,n}-{a_{j,n}}^Tp_{j,n}(d_{j,n})\right)\left(\mathbbm{1}\left\{u_{j,n}>{a_{j,n}}^Tp_{j,n}(d_{j,n})\right\}-x_{j,n}^{\pi}\right)\right]+\mathbb{E}\left[R_{n-1,n}^{off}(d_{n-1,n})-R_{n-1,n}^\pi\right]\\
        \leq &\sum_{j=1}^{n-1}\E\left[\left(u_{j,n}-{a_{j,n}}^Tp_{j,n}(d_{j,n})\right)\left(\mathbbm{1}\left\{u_{j,n}>{a_{j,n}}^Tp_{j,n}(d_{j,n})\right\}-\mathbbm{1}\{u_{j,n}>{a_{j,n}}^Tp_{j-1,n}^{\pi}\}\right)\mathbbm{1}\{j<\bar\tau_n\}\right]\\
        &+\left(\bar{u}+\frac{m\alpha\bar u}{\underline{d}}\right)\E[n-\tau_n+1] + 2\bar u
    \end{aligned}
\end{equation*}
where the last line is by Lemma \ref{Lemma: BoundedPrice}. The following lemma provides an upper bound on the myopic regret.
\begin{lem}\label{Lemma: MyopicRegret}
    Under Assumptions \ref{Assumption: Asymptotic Regime}, \ref{Assumption: Single-sample}, \ref{Assumption: Nonstationarity}, \ref{Assumption: p_star}, there exists a constant $C_{Myopic}$ and $\tilde\iota$ such that for all $P\in\mathcal{P}$, when $n$ is large enough,
    \begin{equation*}
        \E\left[\left(u_{j,n}-{a_{j,n}}^Tp_{j,n}(d_{j,n})\right)\left(\mathbbm{1}\left\{u_{j,n}>{a_{j,n}}^Tp_{j,n}(d_{j,n})\right\}-\mathbbm{1}\{u_{j,n}>{a_{j,n}}^Tp_{j-1,n}^{\pi}\}\right)\mathbbm{1}\{j<\bar\tau_n\}\right]\leq\frac{C_{Mypoic}\log n}{n-j}
    \end{equation*}
    for all $1\leq j<n-\tilde\iota$.
\end{lem}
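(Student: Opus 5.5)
Throughout, write $\hat p = p^{\pi}_{j-1,n}$, $p = p_{j,n}(d_{j,n})$ and $p^* = p^*_{j-1,n}(d_{j-1,n})$. The plan is the standard quadratic bound on the myopic loss. For fixed $a$ and any two prices $p,\hat p$,
\begin{equation*}
(u-a^Tp)\left(\mathbbm{1}\{u>a^Tp\}-\mathbbm{1}\{u>a^T\hat p\}\right)\geq 0,
\end{equation*}
and it is nonzero only when $u$ lies between $a^Tp$ and $a^T\hat p$. In that case $|u-a^Tp|\leq |a^T(p-\hat p)|$.

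Conditional on $a_{j,n}$, $p$, $\hat p$, the density of $u_{j,n}$ is bounded by $\bar\mu$ (Assumption \ref{Assumption: Nonstationarity}(d)). So the conditional expectation of the myopic term is at most $\bar\mu\,(a_{j,n}^T(p-\hat p))^2/2 \leq \frac{\bar\mu m\alpha^2}{2}\norm{p-\hat p}_2^2$. The one subtlety is that $p_{j,n}(d_{j,n})$ depends on $u_{j,n}$, since $d_{j,n}$ and the future orders are involved. To handle this I would replace $p$ by a $\mathcal F_{j-1,n}$-measurable proxy. The natural proxy is $p^*$, or $p_{j,n}^*(d_{j,n})$ controlled through continuity. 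The cleanest route is to bound the term by
\begin{equation*}
\E\left[|u-a^Tp|\,\mathbbm 1\{u \text{ between } a^Tp,a^T\hat p\}\right]\leq \E\left[|a^T(p-\hat p)|\,\mathbbm 1\{|u-a^Tp^*|\leq |a^T(p-p^*)|+|a^T(\hat p-p^*)|\}\right].
\end{equation*}
Then apply Cauchy--Schwarz together with the density bound, applied to the event $|u-a^Tp^*|\le r$ with $r$ a $\mathcal{F}$-supremum quantity, to get an $O\left(\E[\sup\norm{\cdot}^2]\right)$ bound up to a $\log$ factor.

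On $\{j<\bar\tau_n\}$ we have $d_{j-1,n}\in\Omega_{j-1,n}$ and $d_{j,n}\in\Omega_{j,n}$. Hence the triangle inequality gives
\begin{equation*}
\norm{p-\hat p}_2\leq \sup_{d\in\Omega_{j,n}}\norm{p_{j,n}(d)-p^*_{j,n}(d)}_2+\sup_{d\in\Omega_{j-1,n}}\norm{\tilde p_{j-1,n}(d)-p^*_{j-1,n}(d)}_2+\norm{p^*_{j,n}(d_{j,n})-p^*_{j-1,n}(d_{j-1,n})}_2.
\end{equation*}
Here $\tilde p$ denotes the sample-based price; by Assumption \ref{Assumption: Single-sample} it has the same law as $p_{j-1,n}(\cdot)$, so Theorem \ref{Theorem: DualConvergence} applies to it. The first two terms therefore have second moments $O(\log(n-j)/(n-j))$ by Theorem \ref{Theorem: DualConvergence}, because the suprema are over deterministic sets and thus dominate whatever random $d$ we plug in.

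For the third term, Lemma \ref{Lemma: Unique_Equal_Population_Price}(b) gives $p^*_{j,n}(d)=\bar\delta_{j,n}^{-1}(d)$. Compare the two inverse maps: $\bar\delta_{j,n}$ and $\bar\delta_{j-1,n}$ differ by $O(1/(n-j))$ uniformly on $\Omega_p$, since they differ only in one summand and the normalization. Combine this with the Jacobian bound $\norm{G^{*-1}}\le 1/\underline\sigma$ from Lemma \ref{Lemma: Unique_Equal_Population_Price}(c) and Proposition \ref{Proposition: Boundedness and Smoothness}(b), and with $\norm{d_{j,n}-d_{j-1,n}}\leq 2\alpha/(n-j)$. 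The result is that the third term is $O(1/(n-j))$ deterministically, so its square is negligible.

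The case where the sample-based price leaves $\Omega_p$ is handled with Proposition \ref{Proposition: TailProbDualConv}, using the boundedness in Lemma \ref{Lemma: BoundedPrice}. There the contribution is at most $C\exp(-C_2\nu^2(n-j))$, which is $O(1/(n-j))$ once $n-j>\tilde\iota$; this is how $\tilde\iota$ is chosen.

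Summing the pieces gives $C\log(n-j)/(n-j)\le C_{Myopic}\log n/(n-j)$. The main obstacle I expect is the measurability issue above: $p_{j,n}(d_{j,n})$ uses $u_{j,n}$ through $d_{j,n}$. If the density argument with the proxy $p^*$ is too lossy, the fallback is the cruder bound $|a^T(p-\hat p)|\cdot\mathbbm 1\{\cdot\}\le$ the product form, followed by Cauchy--Schwarz. That fallback still yields the squared error times a constant, because the indicator's probability is controlled by $\bar\mu$ times the same radius.
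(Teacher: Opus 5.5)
Your proposal is correct and is essentially the paper's argument. The three-term triangle inequality through $p^*_{j-1,n}(d_{j-1,n})$ is Lemma~\ref{Lemma: Locatep_jn} plus the analogous bound for $p^{\pi}_{j-1,n}=\tilde p_{j-1,n}(d_{j-1,n})$, and the finish relies, as in the paper, on the radius built from the two suprema depending only on orders $j+1,\dots,n$ and on the samples, hence being independent of $(u_{j,n},a_{j,n})$ given $\mathcal{F}_{j-1,n}$. Your one deviation is to bound the band probability with the global density bound $\bar\mu$, where the paper uses one-sided bracketing by $p^*_{j-1,n}(d_{j-1,n})\mp\xi$ and the Hessian bound $\bar\sigma$; this removes the need for the event $\mathcal{E}_{j,n}$ and the separate tail step.

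Make sure you condition on that radius $r$ before applying the density bound, which gives the term at most $2\bar\mu\E[r^2]$. The unconditional Cauchy--Schwarz in your stated fallback only gives $O\bigl((\log n/(n-j))^{3/4}\bigr)$, which is too weak to sum to polylogarithmic regret.
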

Lemma \ref{Lemma: MyopicRegret} is implied by the uniform dual convergence result in Theorem \ref{Theorem: DualConvergence} and the smoothness properties in Proposition \ref{Proposition: Boundedness and Smoothness}. Together with Theorem \ref{Theorem: LogStoppingTime}, we get the polylogarithmic regret upper bound summarized in the following Theorem.
\begin{thm}\label{Theorem: RegretBound}
    Under Assumptions \ref{Assumption: Asymptotic Regime}, \ref{Assumption: Single-sample}, \ref{Assumption: Nonstationarity}, \ref{Assumption: p_star}, there exists a constant $C$ such that for all $P\in\mathcal{P}$, when $n$ is large enough,
    \begin{equation*}
        Regret_n(\pi,P)\leq C (\log n)^2
    \end{equation*}
\end{thm}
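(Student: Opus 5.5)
The plan is to combine the general regret decomposition derived just before Lemma \ref{Lemma: MyopicRegret} with Lemma \ref{Lemma: MyopicRegret} and Theorem \ref{Theorem: LogStoppingTime}. Fix $P\in\mathcal{P}$ and take $n$ large enough that all the cited results apply. The decomposition gives
\begin{equation*}
Regret_n(\pi,P)\leq \sum_{j=1}^{n-1} A_{j,n} + \left(\bar u+\frac{m\alpha\bar u}{\underline{d}}\right)\E[n-\tau_n+1]+2\bar u ,
\end{equation*}
where $A_{j,n}$ denotes the myopic-regret term restricted to $\{j<\bar\tau_n\}$.

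The first step is to split the sum at $n-\tilde\iota$, with $\tilde\iota$ the constant from Lemma \ref{Lemma: MyopicRegret}.
\begin{itemize}
\item For $1\leq j<n-\tilde\iota$, Lemma \ref{Lemma: MyopicRegret} gives $A_{j,n}\leq C_{Myopic}\log n/(n-j)$. Summing over these $j$ gives at most $C_{Myopic}\log n\sum_{k=1}^{n}1/k\leq C_{Myopic}\log n(1+\log n)$, which is $O((\log n)^2)$.
\item For the remaining at most $\tilde\iota+1$ indices, bound each term crudely. On $\{j<\bar\tau_n\}$ we have $d_{j,n}\in\Omega_{j,n}$, so Lemma \ref{Lemma: BoundedPrice} bounds $\norm{p_{j,n}(d_{j,n})}_2$ by a constant. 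Hence $|u_{j,n}-a_{j,n}^Tp_{j,n}(d_{j,n})|\leq \bar u+\sqrt m\,\alpha\bar u/\underline d$. These terms contribute only $O(1)$.
\end{itemize}

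The second step handles the stopping-time term. Theorem \ref{Theorem: LogStoppingTime} supplies $\eta$ and $\iota_n$ such that $\tau_n\leq\bar\tau_n$ almost surely, so the displayed decomposition holds with this choice of $\tau_n$. The theorem bounds $\E[n-\bar\tau_n]\leq C_{Res}\log n$, but the decomposition needs a bound on $\E[n-\tau_n]$, which is the larger quantity, so the theorem does not finish the job directly. I would re-trace the theorem's argument: it controls $\E[n-\tau_n]\leq\iota_n+\sum_{j<n-\iota_n}P(\norm{X_{j,n}}_2>\epsilon_{j,n})$ through Lemma \ref{Lemma: TailProbMartingale}, and this yields the same $O(\log n)$ bound for $\tau_n$ itself, provided $\iota_n=O(\log n)$. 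Alternatively, I would redo the decomposition with $\bar\tau_n$ in place of $\tau_n$, since the bound from Lemma \ref{Lemma: BoundedPrice} only requires $j<\bar\tau_n$. Either route gives a contribution of $O(\log n)$.

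Adding the pieces gives $Regret_n(\pi,P)\leq C(\log n)^2$. The constant $C$ is built only from $C_{Myopic}$, $C_{Res}$, $\tilde\iota$, $\bar u$, $\alpha$, $m$ and $\underline d$, none of which depend on $P$, so the bound is uniform over $\mathcal{P}$.

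The main obstacle is making sure the stopping-time term is genuinely controlled by the $O(\log n)$ bound, given the mismatch between $\tau_n$ and $\bar\tau_n$ and the requirement $\iota_n=O(\log n)$. I would settle this by checking the construction of $\iota_n$ and the martingale tail summation in Theorem \ref{Theorem: LogStoppingTime}. The remaining steps are routine summation.
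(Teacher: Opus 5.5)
Your proposal is correct and follows the paper's argument. You plug Lemma \ref{Lemma: MyopicRegret} into the myopic-regret decomposition and sum $C_{Myopic}\log n/(n-j)$ over $j$ to get the $O((\log n)^2)$ term, which adds a crude constant bound on the last $\tilde\iota$ indices that the paper leaves implicit, and you control the stopping-time term by $O(\log n)$. Your concern about $\tau_n$ versus $\bar\tau_n$ is resolved exactly by your first route: the proof of Theorem \ref{Theorem: LogStoppingTime}, via Lemma \ref{Lemma: O(1)StoppingTime} with $\iota_n=\lceil C_4\log n\rceil$, bounds $\E[n-\tau_n]$ itself by $C_{Res}\log n$.
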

\subsection{Future extension: prove $O(\log n)$ regret upper bound}\label{Section: Improve_to_logn}
If we can improve the rate of uniform dual convergence in Theorem \ref{Theorem: DualConvergence} from $O\left(\frac{\log(n-j)}{n-j}\right)$ to $O(\frac{1}{n-j})$, we can improve the regret upper bound from $O((\log n)^2)$ to $O(\log n)$. To improve the uniform dual convergence rate, we only need to show
\begin{equation}\label{eqn: desiredempiricalprocess}
    \mathbb{E}\left[\sup_{p\in\Omega_p}\norm{\frac{1}{n-j}\sum_{k=j+1}^na_{k,n}\mathbbm{1}\{u_{k,n}>{a_{k,n}}^Tp\}-\bar\delta_{k,n}(p)}^2_2\right]=O\left(\frac{1}{n-j}\right)
\end{equation}
\cite{Bray2024} proves (\ref{eqn: desiredempiricalprocess}) in the i.i.d. case by known results in the theory of empirical process. However, when the distributions of the rewards and the resource consumptions are independent but not identical, the theory of empirical process cannot be directly applied to show (\ref{eqn: desiredempiricalprocess}). An immediate next step of this work is to derive conditions under which (\ref{eqn: desiredempiricalprocess}) holds for the nonidentical case, which will then allow us to improve the regret of Algorithm \ref{alg: Simulater} to $O(\log n)$. 
\section{Conclusion}\label{Section: Conclusion}
This paper investigates the nonstationary Online Linear Programming (OLP) problem, where reward-resource consumption pairs are drawn from a sequence of independent but non-identical distributions. We operate within the challenging single-sample setting, assuming the decision-maker has access to only one sample from each distribution at the start of the planning horizon. Based on a dynamic programming perspective, we propose a dual-based re-solving algorithm specifically tailored for this nonstationary environment. Our regret analysis is conducted under the regime where the resource endowment scales linearly with the total number of orders.

A key structural insight from our work is that while the system states evolve nonstationarily under our policy, the dual prices utilized by the algorithm remain relatively stable. To formalize this relationship, we introduce the concept of a $\delta$-path, which serves as a theoretical bridge between the nonstationary evolution of system states and the quasi-stationary behavior of the dual prices. Leveraging the concept of $\delta$-path, we establish uniform dual convergence results and provide a rigorous analysis of the average remaining resource process. We demonstrate that the prices generated by our algorithm closely track the optimal dual solution of the corresponding offline problem. Consequently, we prove that our proposed algorithm achieves an $O(\log^2 n)$ regret bound. Our results demonstrate that polylogarithmic regret is attainable even under significant environmental shifts and minimal data availability, bridging the gap between stationary OLP and more volatile real-world resource allocation problems.

An immediate priority for future research is to refine the $O(\log^2 n)$ regret upper bound to an $O(\log n)$ bound, as discussed in Section \ref{Section: Improve_to_logn}. Broader extensions include generalizing the framework to accommodate dependent distributions, modeling arrivals as a nonstationary Poisson process, and investigating settings where available samples are derived from estimated distributions rather than the true underlying distributions.

\bibliographystyle{plainnat}
\bibliography{reference}

\newpage
\section{Appendix}\label{Section: Appendix}
\subsection{Proof of Example \ref{Example: LinearRegret}}
\begin{proof}
    Denote $b^{\pi}(n)$ be the amount of resource left after making decisions on the order $\left\lceil\frac{n}{2}\right\rceil$. Take $\delta=\frac{1}{128}$ and $\epsilon=\frac{1}{32}$. Define $\mathcal{E}_{n}= \{\frac{(1-\epsilon)n}{4}\leq b^{\pi}(n)\leq \frac{n}{4}\}$, and denote $u_{(j),n}$ as the $j$th largest value among $\{u_{j,n}\}_{j=1}^n$ and $\tilde u_{(j),n}$ as the $j$th largest value among $\{u_{j,n}\}_{j=\left\lceil\frac{n}{2}\right\rceil+1}^n$. Since the distribution of the rewards is completely unknown, $\E_{P_{1,n}}[\mathbbm{1}_{\mathcal{E}_n}]=\E_{P_{2,n}}[\mathbbm{1}_{\mathcal{E}_n}]$. If $\E_{P_{1,n}}[\mathbbm{1}_{\mathcal{E}_n}]\geq 1-\delta$ then on event $\mathcal{E}_{n}$,
    \begin{equation*}
        R_n(\pi)\leq \frac{\epsilon}{4}n + \sum_{j=1}^{\left\lfloor\frac{n}{4}\right\rfloor}\tilde u_{(j),n}
    \end{equation*}
    Then, by the fact that the expectation of the sum of the $k$ largest values among $n$ i.i.d. $Uniform(0,1)$ random variables is $\frac{k(2n-k+1)}{2(n+1)}$,
    \begin{equation*}
        \begin{aligned}
            \E_{P_{1,n}}\left[R_n(\pi)\right] = & \E_{P_{1,n}}\left[R_n(\pi)\mathbbm{1}_{\mathcal{E}_n^c}\right] + \E_{P_{1,n}}\left[R_n(\pi)\mathbbm{1}_{\mathcal{E}_n}\right]\\
            \leq  & \delta n + \frac{\epsilon}{4}n + \E_{P_{1,n}}\left[\sum_{j=1}^{\left\lfloor\frac{n}{4}\right\rfloor}\tilde u_{(j),n}\right]\\
            \leq &\delta n + \frac{\epsilon}{4}n +\frac{\left\lfloor\frac{n}{4}\right\rfloor\left(n - \left\lfloor\frac{n}{4}\right\rfloor + 1\right)}{n+1}
        \end{aligned}
    \end{equation*}
    In addition, 
    \begin{equation*}
        \E_{P_{1,n}}[R_n^*] \geq \E_{P_{1,n}}\left[\sum_{j=1}^{\left\lfloor\frac{n}{4}\right\rfloor} u_{(j),n}\right] = \frac{\left\lfloor\frac{n}{4}\right\rfloor\left(2n - \left\lfloor\frac{n}{4}\right\rfloor + 1\right)}{2\left(n+1\right)}
    \end{equation*}
    Then,
    \begin{equation}\label{eqn: lowerboundexample_1}
        \begin{aligned}
            Regret_n(\pi,P_{1,n})\geq\frac{\left\lfloor\frac{n}{4}\right\rfloor(\left\lfloor\frac{n}{4}\right\rfloor -1 )}{2(n+1)}-\left(\delta+\frac{\epsilon}{4}\right)n\geq \frac{1}{64}n - \frac{3}{8}
        \end{aligned}
    \end{equation}
    If  $\E_{P_{2,n}}[\mathbbm{1}_{\mathcal{E}^c_n}]\geq \delta$, then on event $\mathcal{E}_{n}^c$, 
    \begin{equation*}
        R_n(\pi)\leq \frac{n}{4} - b^{\pi}(n) + \sum_{j=1}^{b^{\pi}(n)} \tilde u_{(j),n}
    \end{equation*}
    Let $\mathcal{F}_{\left\lceil\frac{n}{2}\right\rceil}$ be the sigma algebra generated by $\{u_{j,n}\}_{j=1}^{\left\lceil\frac{n}{2}\right\rceil}$.
    \begin{equation*}
        \begin{aligned}
            \E_{P_{2,n}}\left[R_n(\pi)\mathbbm{1}_{\mathcal{E}_n^c}\right] &\leq \E_{P_{2,n}}\left[\left(\frac{n}{4} - b^{\pi}(n)\right)\mathbbm{1}_{\mathcal{E}_n^c}\right] + \E_{P_{2,n}}\left[\E_{P_{2,n}}\left[\sum_{j=1}^{b^{\pi}(n)} \tilde u_{(j),n}\bigg\rvert\mathcal{F}_{\left\lceil\frac{n}{2}\right\rceil}\right]\mathbbm{1}_{\mathcal{E}_n^c}\right]\\
            &\leq \E_{P_{2,n}}\left[\left(\frac{n}{4} - b^{\pi}(n)\right)\mathbbm{1}_{\mathcal{E}_{n}^c}\right] + \E_{P_{2,n}}\left[\left(\frac{\left\lfloor\frac{n}{4}\right\rfloor\left(2\left\lfloor\frac{n}{2}\right\rfloor - \left\lfloor\frac{n}{4}\right\rfloor + 1\right)}{2\left(\left\lfloor\frac{n}{2}\right\rfloor+1\right)}+2b^{\pi}(n)\right)\mathbbm{1}_{\mathcal{E}_{n}^c}\right]
        \end{aligned}
    \end{equation*}
    In addition, 
    \begin{equation*}
        \begin{aligned}
            \E_{P_{2,n}}[R_n^*\mathbbm{1}_{\mathcal{E}^c}] &\geq \E_{P_{2,n}}\left[\sum_{j=1}^{\left\lfloor\frac{n}{4}\right\rfloor} \tilde u_{(j),n}\mathbbm{1}_{\mathcal{E}_{n}^c}\right]= \mathbb{E}_{P_{2,n}}\left[\left(\frac{\left\lfloor\frac{n}{4}\right\rfloor\left(2\left\lfloor\frac{n}{2}\right\rfloor - \left\lfloor\frac{n}{4}\right\rfloor + 1\right)}{2\left(\left\lfloor\frac{n}{2}\right\rfloor+1\right)}+\left\lfloor\frac{n}{2}\right\rfloor\right)\mathbbm{1}_{\mathcal{E}_{n}^c}\right]
        \end{aligned}
    \end{equation*}
    Thus,
    \begin{equation}\label{eqn: lowerboundexample_2}
        \begin{aligned}
            Regret(\pi,P_{2,n})\geq \mathbbm{E}_{P_{2,n}}\left[\left(\left\lfloor\frac{n}{2}\right\rfloor-\frac{n}{4}-b^{\pi}(n)\right)\mathbbm{1}_{\mathcal{E}_{n}^c}\right]\geq \frac{\epsilon\delta}{4}n - 1
        \end{aligned}
    \end{equation}
    (\ref{eqn: lowerboundexample_1}) and (\ref{eqn: lowerboundexample_2}) together imply that the $Regret_n(\pi)$ is $\Omega(n)$.
\end{proof}
\subsection{Proof of Lemma \ref{Lemma: BoundedPrice-I}}
\begin{proof}
    Since $p_{j,n}^*(d)\geq 0$,
    \begin{equation*}
        \begin{aligned}
            d^{min}e^Tp_{j,n}^*(d)\leq d^Tp^*_{j,n}(d)\leq g^*_{j,n}(p^*_{j,n}(d),d)\leq g^*_{j,n}(0,d)\leq\bar{u} 
        \end{aligned}
    \end{equation*}
\end{proof}
\subsection{Proof of Lemma \ref{Lemma: Riemann Integral}}
\begin{proof}
    Fix $p\in\mathbb{R}^m$, $\frac{1}{n}\sum_{j=1}^n\E\left[\left(u_{j,n}-{a_{j,n}}^Tp\right)^+\right]$ is a Riemann sum of $\int_{0}^{1}\mathbbm{E}\left[(u_s-{a_s}^Tp)^+\right]$. We assume the following that $a_{j,n}$'s have probability density function. For the case that $a_{j,n}$'s have probability mass function. In addition, for any $t_1,t_2\in [0,1]$,
    \begin{equation*}
        \begin{aligned}
            &\abs{\E\left[\left(u_{t_1}-{a_{t_1}}^Tp\right)^+\right]-\E\left[\left(u_{t_2}-{a_{t_2}}^Tp\right)^+\right]}\\
            \leq & \int_{\Xi_a}\int_{0}^{\bar u} (u-a^Tp)^+\abs{f(t_1,u,a)v(t_1,a)-f(t_2,u,a)v(t_2,a)}du\;da\\
            \leq & 2\bar\mu \bar L\abs{t_1-t_2}\int_{\Xi_a}\int_{0}^{\bar u} (u-a^Tp)^+du\;da\\
            \leq & 2\bar u\bar\mu\bar L \alpha^m (\bar u + \alpha p)\abs{t_1-t_2}
        \end{aligned}
    \end{equation*}
    where the Lipschitz continuity used above is given by Assumption \ref{Assumption: Nonstationarity} (c) and (d). Thus, $\mathbbm{E}\left[(u_s-{a_s}^Tp)^+\right]$ is Lipschitz continuous on $[0,1]$ with respect to $t$ and hence is Riemann integrable. Then,
    \begin{equation*}
        \lim_{T\rightarrow+\infty}\frac{1}{T}\sum_{j=1}^T\E\left[\left(u_{j,T}-{a_j}^Tp\right)^+\right]=\int_{0}^{1}\mathbbm{E}\left[(u_s-{a_s}^Tp)^+\right] dt
    \end{equation*}
    which completes the proof.
\end{proof}
\subsection{Proof of Proposition \ref{Proposition: Boundedness and Smoothness}}
\begin{proof} 
    Let $\{e_i\}_{i=1}^m$ be the standard basis of $\mathbb{R}^m$. For part (a), for any $p\in \mathbb{R}^m$, by Assumption \ref{Assumption: Nonstationarity}(d), $\mathbb{P}\left\{u_{j,n}={a_{j,n}}^Tp\right\}=0$, then
    \begin{equation*}
        \lim_{h\rightarrow 0}\frac{(u_{j,n}-{a_{j,n}}^T(p+he_i))^+ - (u_{j,n}-{a_{j,n}}^Tp)^+}{h} = -{e_i}^T a_{j,n}\mathbbm{1}\{u_{j,n}>{a_{j,n}}^Tp\}\quad\mathrm{a.s.}
    \end{equation*}
    and
    \begin{equation*}
        \abs{\frac{(u_{j,n}-{a_{j,n}}^T(p+he_i))^+ - (u_{j,n}-{a_{j,n}}^Tp)^+}{h}}\leq\alpha\quad\forall h
    \end{equation*}
    By Bounded Convergence Theorem, $\E\left[\left(u_{j,n}-{a_{j,n}}^Tp\right)^+\right]$ is differentiable on $\mathbb{R}^m$ and
    \begin{equation*}
        \nabla\E\left[\left(u_{j,n}-{a_{j,n}}^Tp\right)^+\right] = -\E\left[a_{j,n}\mathbbm{1}\{u_{j,n}>{a_{j,n}}^Tp\}\right]
    \end{equation*}    
    By Assumption \ref{Assumption: p_star}, there exists $\epsilon$ small enough such that for any $P\in\mathcal{P}$, $a^Tp\in (0,\bar u)$ if $p\in B_{2\epsilon}(p^*)\equiv\Omega_p\subseteq\mathbb{R}^{+}_{m}$ and $a\in\Xi$. Then, for $p\in\Omega_p$
    \begin{equation*}
        \begin{aligned}
            &\lim_{h\rightarrow0}\frac{\E\left[{e_i}^Ta_{j,n}\mathbbm{1}\{u_{j,n}>{a_{j,n}}^Tp\}-{e_i}^Ta_{j,n}\mathbbm{1}\{u_{j,n}>{a_{j,n}}^T(p+he_k)\}\right]}{h}\\
            =&\lim_{h\rightarrow0}\E\left[\frac{{e_i}^Ta_{j,n}\left(P(u_{j,n}>{a_{j,n}}^Tp|a_{j,n})-P(u_{j,n}>{a_{j,n}}^T(p+he_k))|a_{j,n})\right)}{h}\right]\\
            =&\E\left[({e_i}^Ta_{j,n})({e_k}^Ta_{j,n})f\left(\frac{j}{n},{a_{j,n}}^Tp,a_{j,n}\right)\right]
        \end{aligned}
    \end{equation*}
    where the last line is by Assumption \ref{Assumption: Nonstationarity}(d) and Bounded Convergence Theorem. Then, $\E\left[\left(u_{j,n}-{a_{j,n}}^Tp\right)^+\right]$ is twice differentiable on $\Omega_p$ and its Hessian is
    \begin{equation*}
        \E\left[(a_{j,n}{a_{j,n}}^T)f\left(\frac{j}{n},{a_{j,n}}^Tp,a_{j,n}\right)\right]
    \end{equation*}
    Similarly,
    \begin{equation*}
        \begin{aligned}
            &\lim_{h\rightarrow 0} \frac{\E\left[({e_i}^Ta_{j,n})(e_k^Ta_{j,n})f\left(\frac{j}{n},{a_{j,n}}^T(p+he_q),a_{j,n}\right)\right]-\E\left[({e_i}^Ta_{j,n})({e_k}^Ta_{j,n})f\left(\frac{j}{n},{a_{j,n}}^Tp,a_{j,n}\right)\right]}{h}\\
            =&\E\left[({e_i}^Ta_{j,n})({e_k}^Ta_{j,n})({e_q}^Ta_{j,n})\frac{\partial f}{\partial u}\left(\frac{j}{n},{a_{j,n}}^Tp,a_{j,n}\right)\right]
        \end{aligned}
    \end{equation*}
    Thus, $\E\left[\left(u_{j,n}-{a_{j,n}}^Tp\right)^+\right]$ is three times continuously differentiable on $\Omega_p$. In addition, Assumption \ref{Assumption: Nonstationarity} implies that
    \begin{equation*}
        \abs{\E\left[({e_i}^Ta_{j,n})({e_k}^Ta_{j,n})({e_q}^Ta_{j,n})\frac{\partial f}{\partial u}\left(\frac{j}{n},{a_{j,n}}^Tp,a_{j,n}\right)\right]}\leq\bar L\alpha^3\equiv L
    \end{equation*}
    For part (b), for each $j$ and $n$, the largest eigenvalue of the Hessian of $\E[(u_{j,n}-{a_{j,n}}^Tp)^+]$ is upper bounded by its trace, which is upper bounded by $m\alpha\bar\mu\equiv\bar\sigma$ by Assumption \ref{Assumption: Nonstationarity}. For the smallest eigenvalue of the Hessian of $\E[(u_{j,n}-{a_{j,n}}^Tp)^+]$, note that
    \begin{equation*}
        \begin{aligned}
            \E\left[(a_{j,n}{a_{j,n}}^T)f\left(\frac{j}{n},{a_{j,n}}^Tp,a_{j,n}\right)\right]= \E\left[(a_{j,n}{a_{j,n}}^T)\underline{\mu}\right] + \E\left[(a_{j,n}{a_{j,n}}^T)\left(f\left(\frac{j}{n},{a_{j,n}}^Tp,a_{j,n}\right)-\underline{\mu}\right)\right]
        \end{aligned}
    \end{equation*}
    Both of the two matrices on the right-hand-side are symmetric and positive semi-definite. Then, Assumption \ref{Assumption: Nonstationarity} implies that the smallest eigenvalue of $\E\left[a_{j,n}{a_{j,n}}^T\underline{\mu}\right]$ is greater than $\underline{\mu}\lambda_{min}\equiv\underline{\sigma}>0$. Then, together with the Weyl's Inequality, the smallest eigenvalue of the Hessian of the Hessian of $\E[(u_{j,n}-{a_{j,n}}^Tp)^+]$ is lower bounded by $\underline{\sigma}$. This immediately shows that the eigenvalues of the Hessian of $g^*_{j,n}(p,d_0)$ are in $[\underline{\sigma},\bar\sigma]$. 
    
    For part (c), because $(u_{j,n}-{a_{j,n}}^Tp)^+$ is convex in $p$ almost surely for all $j$and $n$, $g_{n}^*(\cdot,d_0)$ and $g_{\infty}^*(\cdot,d_0)$ are all convex. Together with the pointwise convergence given by Lemma \ref{Lemma: Riemann Integral}, $g_T^*(\cdot,d_0)$ converges to $g_{\infty}^*(\cdot,d_0)$ uniformly over $\left[0,\frac{2\bar u}{d^{min}_0}\right]^m$, where $d^{min}_0$ is defined in Lemma \ref{Lemma: BoundedPrice-I}. Similar arguments in the proof of Lemma \ref{Lemma: BoundedPrice-I} imply that $p^*\in \left[0,\frac{\bar u}{d^{min}_0}\right]^m$ for all $P\in\mathcal{P}$. Together with Assumption \ref{Assumption: p_star}, we can select $\epsilon$ small enough such that $\Omega_p$ is in the interior of $\left[0,\frac{2\bar u}{d^{min}_0}\right]^m$ for all $P\in\mathcal{P}$. Following similar arguments in part (a), the gradient and the Hessian of $g^*_{\infty}(p,d_0)$ on $\Omega_p$ are $d_0 -  \E\left[\int_{0}^1a_s\mathbbm{1}\{u_s>{a_s}^Tp\}ds\right]$ and $\E\left[\int_{0}^1 (a_s{a_s}^T)f(s,{a_s}^Tp, a_s) ds\right]$. In addition, the Hessian is positive definite on $\Omega_p$ with smallest eigenvalue greater than $\underline{\sigma}$. Thus, $p^*$ is the unique optimal solution. Suppose that there exists $\nu>0$ and $\{n_k\}_{k=1}^{+\infty}$ such that $\norm{p_{n_k}^*(d_0)-p^*}_2>\nu$ for all $k$ and all possible selection of $p^*_{n_k}(d_0)$. We can assume $\nu<\epsilon$. Then, the strong convexity of $g^*_{\infty}(\cdot,d_0)$ on $\Omega_p$ implies that, for any $p\in \Omega_p\backslash B_{\nu}(p^*)$,
    \begin{equation*}
        g(p)\geq\frac{\underline{\sigma}}{2}\nu^2
    \end{equation*}
    Together with the fact that the level set of $g_{\infty}^*(\cdot,d_0)$ is convex, we have
    \begin{equation*}
        \{p:g_{\infty}(p,d_0)-g_{\infty}(p^*,d_0)\leq \frac{\underline{\sigma}}{4}\nu^2\}\subseteq B_{\nu}(p^*)
    \end{equation*}
    The uniform convergence implies that when $k$ is large enough,
    \begin{equation*}
        \abs{g_{n_k}^*(p,d_0)-g^*_{\infty}(p,d_0)}<\frac{\underline{\sigma}}{12}\nu^2 \quad\forall p\in \left[0,\frac{2\bar u}{d^{min}_0}\right]^m
    \end{equation*}
    Since $p_{n_k}^*(d_0)\in \left[0,\frac{\bar u}{d^{min}_0}\right]^m$ by Lemma \ref{Lemma: BoundedPrice-I} and $p_{T_k}^*(d_0)\notin B_{\nu}(p^*)$,  when $k$ is large enough
    \begin{equation*}
        g_{T_k}^*(p_{n_k}^*(d_0),d_0)-g_{n_k}^*(p^*,d_0)\geq \frac{\underline{\sigma}}{12}\nu^2
    \end{equation*}
    which means that $p_{n_k}^*$ is not an optimal solution. This is a contradiction. Thus, there exists a selection of $p_n^*(d_0)$'s such that $p_n^*(d_0)\rightarrow p^*$ as $n\rightarrow+\infty$. For this selection, there exists $N_0\in\mathbb{N}_{+}$ such that $\norm{p_N^*(d_0)-p^*}_2<\epsilon$. Then, when $n\geq N_0$, $B_{\epsilon}(p_n^*(d_0))\subseteq\Omega_p$. In addition, $p_n^*(d_0)>0$, and the Hessian of $g^*_n(\cdot,d_0)$ at $p_n^*(d_0)$ is positive definite by part (b), which implies that $p_n^*(d_0)$ is uniquely determined. Furthermore, pick $\kappa$ to be any real number in $(0,1)$, together with Assumption \ref{Assumption: p_star}, we can select $\epsilon$ small enough such that there exists $\tilde \epsilon_p$ which holds for all $P\in\mathcal{P}$ such that $(\inf_{a\in \Xi} a^Tp^*_n(d_0), \sup_{a\in \Xi} a^Tp_n^*(d_0))\subseteq(\tilde\epsilon_{p},\bar u-\tilde\epsilon_p)$. Then, for all $j,n$
    \begin{equation*}
        \begin{aligned}
            \E[a_{j,n}\mathbbm{1}\{u_{j,n}>{a_{j,n}}^Tp_n^*(d_0)\}]\geq \E[a_{j,n}P\{u_{j,n}>{a_{j,n}}^Tp_n^*(d_0)|a_{j,n}\}]\geq \E[a_{j,n}\underline{\mu}\epsilon_p]\geq \underline{a}\underline{\mu}\tilde \epsilon_p e \equiv \epsilon_pe  
        \end{aligned}
    \end{equation*}
    For part (d), By Assumption \ref{Assumption: Nonstationarity}(c) and (d), 
    \begin{equation*}
        \begin{aligned}
            \prod_{j=0}^{n-2} \left(1 + L_{j,n}\right)\leq \prod_{j=0}^{n-2}\left(1+\frac{2\bar\mu\bar L\alpha^{m+2}\sum_{k=j+2}^n\frac{k-(j+1)}{n}}{\underline{\sigma}(n-j)(n-j-1)}\right)\leq \left(1+\frac{\bar\mu\bar L\alpha^{m+2}}{\underline{\sigma}n}\right)^n\leq  e^{\frac{\mu\bar L\alpha^{m+2}}{\underline{\sigma}}}
        \end{aligned}
    \end{equation*}
    
    \begin{equation*}
        \begin{aligned}
            &\prod_{k=0}^{n-2}\left(1+\frac{\alpha^2\sum_{j=k+2}^n\sup_{a\in\Xi,s\in [\underline{s},\bar s]}\abs{f(\frac{j}{n}, s, a)-f(\frac{k+1}{n}, s, a)}}{\underline{\sigma}(n-k)(n-k-1)}\right)\\
            \leq & \prod_{k=0}^{n-2}\left(1+\frac{\alpha^2\bar L\sum_{j=k+2}^n\frac{j-(k+1)}{n}}{\underline{\sigma}(n-k)(n-k-1)}\right)\leq \left(1+\frac{\alpha^2\bar L}{2\underline{\sigma}n}\right)^n\leq  e^{\frac{\alpha^2\bar L}{2\underline{\sigma}}}
        \end{aligned}
    \end{equation*}
\end{proof}
\subsection{Proof of Lemma \ref{Lemma: Unique_Equal_Population_Price}}
\begin{proof}
      Let $N_0$ be given in Proposition \ref{Proposition: Boundedness and Smoothness}(c), and fix $n\geq N_0$. For part (a), fix $d\in \bar{\Omega}_n$. By Proposition \ref{Proposition: Boundedness and Smoothness}(b) and its proof, $g^*_{j,n}(\cdot,d)$ is differentiable and strictly convex on $\Omega_p$ for all $j$, and $\nabla g^*_{j,n}(p,d)=d-\bar{\delta}_{j,n}(p)$. Then, by definition, there exists $p\in\Omega_p$ such that $\nabla g^*_{n}(p,d)=0$. Together with $\Omega_p\subseteq\mathbb{R}^m_{+}$, $\arg\min_{p\geq 0}g^*_{n}(p,d)$ has an unique optimal solution in $\Omega_p$, which proves that $p_n^*(d)\in\Omega_p$ is uniquely determined. Fix $1\leq j< n$, by definition of $\delta_{j,n}(d)$,
      \begin{equation}\label{eqn: delta_and_bar_delta}
          \begin{aligned}
              \delta_{j,n}(d)=\frac{\sum_{k=1}^n\mathbb{E}\left[a_{k,n}\mathbbm{1}\{u_{k,n}>{a_{k,n}}^Tp_n^*(d)\}\right]-\sum_{k=1}^j\mathbb{E}\left[a_{k,n}\mathbbm{1}\{u_{k,n}>{a_{k,n}}^Tp_n^*(d)\}\right]}{n-j}=\bar{\delta}_{j,n}(p_n^*(d))
          \end{aligned}
      \end{equation}
      which implies that $\nabla g^*_{j,n}(p_n^*(d),\delta_{j,n}(d))=0$. Following the same arguments, $p_n^*(d)$ is the unique optimal solution of $\min_{p\geq 0}g^*_{j,n}(p,\delta_{j,n}(d))$. Thus, $p^*_{j,n}(\delta_{j,n}(d))\in\Omega_p$ is uniquely determined, and $p_n^*(d)=p^*_{j,n}(\delta_{j,n}(d))$. 
      This completes the proof of part (a).

      For part (b) and (c), it is sufficient to only prove the case with $j=0$. By Proposition \ref{Proposition: Boundedness and Smoothness} (a) and (b), $\bar{\delta}_n(\cdot)$ is a continuously differentiable function on $\Omega_p$, and its derivative is invertible at each point in $\Omega_p$. Then, by Inverse Function Theorem, $\bar{\delta}_n$ is invertible and an open mapping on $\Omega_p$, and (\ref{eqn: delta_and_bar_delta}) implies that $\bar\delta_n^{-1}(d) = p_n^*(d)$. Proposition \ref{Proposition: Boundedness and Smoothness}(c) implies that $p_n^*(d_0)\in\Omega_p$. Therefore, $d_0\in\bar\Omega_n$. Since the Jacobian matrix of $\bar\delta_n$ on $\Omega_p$ is $-G_{0,n}^*(p)$. Then, part (c) is directly given by Inverse Function Theorem.
\end{proof}
\subsection{Proof of Proposition \ref{Proposition: TailProbDualConv}}
Fix $P\in\mathcal{P}$ satisfying Assumption \ref{Assumption: Nonstationarity} throughout this section. We first introduce some notations. Define
\begin{equation*}
    \nabla g_{j,n}(p,d)=d-\frac{1}{n-j}\sum_{k=j+1}^na_{k,n}\mathbbm{1}\left\{u_{k,n}>{a_{k,n}}^Tp\right\}
\end{equation*}
which is a subgradient of $g_{j,n}(\cdot,d)$ at p, and recall $G^*_{j,n}(p)$ is the Hessian of $g^*_{j,n}(\cdot,d)$ at $p$. By Proposition \ref{Proposition: Boundedness and Smoothness}(b), for any $p\in\Omega_p$, $G^*_{j,n}(p)$ has orthonormal eigenvectors $\left\{\omega_{j,n}^i(p)\right\}_{i=1}^m$ and the corresponding eigenvalues $\left\{\sigma_{j,n}^i(p)\right\}_{i=1}^m$ satisfying $\underline{\sigma}<\sigma_{j,n}^m(p)\leq\cdots\leq\sigma_{j,n}^1(p)<\bar{\sigma}$. 
The following are three useful technical Lemmas.
\begin{lem}\label{Lemma: Convexity}
    ${\nabla g_{j,n}(p,d)}^T(p_{j,n}(d)-p)\leq 0$ for all $d,\;p\in\mathbb{R}_{+}^{m}$ and $j,\;n$.
\end{lem}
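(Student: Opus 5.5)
This is the standard first-order characterization of a minimizer of a convex function, applied to $g_{j,n}(\cdot,d)$ over the convex set $\mathbb{R}^m_+$. The plan is in two steps.

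\textbf{Step 1: $\nabla g_{j,n}(p,d)$ is a subgradient.} Fix $j$, $n$, $d$ and a realization of $\{(u_{k,n},a_{k,n})\}_{k=j+1}^n$. For each $k$, the map $q\mapsto (u_{k,n}-{a_{k,n}}^Tq)^+$ is the maximum of the two affine functions $0$ and $u_{k,n}-{a_{k,n}}^Tq$. At $p$, the vector $-a_{k,n}\mathbbm{1}\{u_{k,n}>{a_{k,n}}^Tp\}$ is the gradient of an active piece: when $u_{k,n}>{a_{k,n}}^Tp$ it is the gradient of the affine piece, and otherwise $0$ is active. Hence it is a subgradient. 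At ties $u_{k,n}={a_{k,n}}^Tp$ both pieces are active, so choosing $0$ is still valid. Concretely, for every $q\geq 0$,
\begin{equation*}
(u_{k,n}-{a_{k,n}}^Tq)^+\geq (u_{k,n}-{a_{k,n}}^Tp)^+ - {a_{k,n}}^T(q-p)\mathbbm{1}\{u_{k,n}>{a_{k,n}}^Tp\}.
\end{equation*}
Averaging over $k=j+1,\dots,n$ and adding the linear term $d^Tq=d^Tp+d^T(q-p)$ gives the subgradient inequality
\begin{equation*}
g_{j,n}(q,d)\geq g_{j,n}(p,d)+{\nabla g_{j,n}(p,d)}^T(q-p)\quad\text{for all } q\geq 0.
\end{equation*}

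\textbf{Step 2: plug in the minimizer.} Take $q=p_{j,n}(d)$, which minimizes $g_{j,n}(\cdot,d)$ over $p\geq 0$. Since $p\in\mathbb{R}^m_+$ is feasible, $g_{j,n}(p_{j,n}(d),d)\leq g_{j,n}(p,d)$. Substituting into the subgradient inequality yields
\begin{equation*}
{\nabla g_{j,n}(p,d)}^T(p_{j,n}(d)-p)\leq g_{j,n}(p_{j,n}(d),d)-g_{j,n}(p,d)\leq 0.
\end{equation*}
This holds for any selection of $p_{j,n}(d)$ from the optimal set, which matches the paper's remark that the selection does not matter.

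The only delicate point is the tie case $u_{k,n}={a_{k,n}}^Tp$, where the indicator in the definition of $\nabla g_{j,n}$ selects the $0$ piece. This is handled by checking the subgradient inequality directly for that piece, as in Step 1. No probabilistic argument is needed, since the inequality holds pathwise.
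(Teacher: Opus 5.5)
Your proof is correct and follows essentially the same route as the paper: the subgradient inequality for $g_{j,n}(\cdot,d)$ at $p$, combined with the optimality of $p_{j,n}(d)$ over $p\geq 0$. The one difference is that you explicitly check the termwise subgradient inequality, including the tie case $u_{k,n}={a_{k,n}}^Tp$, whereas the paper simply asserts that $\nabla g_{j,n}(p,d)$ is a subgradient.
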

\begin{proof}
    Fix $d,\;p\in\mathbb{R}_{+}^{m}$ and $j,\;n$, because $\nabla g_{j,n}(p,d)$ is a subgradient of $g_{j,n}(\cdot,d)$ at $p$, 
    \begin{equation*}
        g_{j,n}(p_{j,n}(d),d)-g_{j,n}(p,d)\geq {\nabla g_{j,n}(p,d)}^T(p_{j,n}(d)-p)
    \end{equation*}
    Since $p_{j,n}(d)$ minimizes $g_{j,n}(\cdot,d)$ over $p\geq 0$, $g_{j,n}(p_{j,n}(d),d)-g_{j,n}(p,d)\leq 0$, which completes the proof.
\end{proof}
\begin{lem}\label{Lemma: LocateOfflinePrice}
    When $n$ is large enough, for all $j$ and $d\in\bar{\Omega}_{n}$, if $p\in\mathbb{R}^m_{+}$ and $\eta> 0$ satisfy $p+k\eta\omega_{j,n}^i(p_n^*(d))\in\mathbb{R}^m_{+}$ and
    \begin{equation*}
        \norm{\nabla g_{j,n}(p+k\eta\omega_{j,n}^i(p_n^*(d)),\delta_{j,n}(d))-k\eta\sigma_{j,n}^i(p_n^*(d))\omega_{j,n}^i(p_n^*(d))}_2\leq\frac{\eta\sigma_{j,n}^m(p_n^*(d))}{2\sqrt{m}}
    \end{equation*}
    for all $1\leq i\leq m$ and $k\in \{-1,1\}$, then 
    \begin{equation*}
        \norm{p_{j,n}(\delta_{j,n}(d))-p}_2\leq \eta\left(1+\frac{2\sqrt{m}\sigma_{j,n}^1(p_n^*(d))}{\sigma_{j,n}^m(p_n^*(d))}\right)
    \end{equation*}
\end{lem}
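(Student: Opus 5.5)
We argue by contradiction using convexity of $g_{j,n}(\cdot,\delta_{j,n}(d))$ along rays leaving $p$. Abbreviate $\omega^i=\omega^i_{j,n}(p_n^*(d))$, $\sigma^i=\sigma^i_{j,n}(p_n^*(d))$, and $\hat p=p_{j,n}(\delta_{j,n}(d))$. Write $\Delta=\hat p-p$ and suppose $\norm{\Delta}_2>\eta\bigl(1+2\sqrt m\,\sigma^1/\sigma^m\bigr)$. Since $\{\omega^i\}_{i=1}^m$ is orthonormal, expand $\Delta=\sum_i c_i\omega^i$. Some index $i^*$ satisfies $\abs{c_{i^*}}\geq\norm{\Delta}_2/\sqrt m$. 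Set $k=\mathrm{sign}(c_{i^*})$ and $q=p+k\eta\omega^{i^*}$; by hypothesis $q\in\mathbb{R}^m_+$.

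First apply Lemma \ref{Lemma: Convexity} at the point $q$. This gives $\nabla g_{j,n}(q,\delta_{j,n}(d))^T(\hat p-q)\leq 0$. Decompose the subgradient as
\begin{equation*}
\nabla g_{j,n}(q,\delta_{j,n}(d))=k\eta\sigma^{i^*}\omega^{i^*}+r,\qquad \norm{r}_2\leq\frac{\eta\sigma^m}{2\sqrt m},
\end{equation*}
where the bound on $r$ is exactly the hypothesis. Also $\hat p-q=\Delta-k\eta\omega^{i^*}$. Substituting,
\begin{equation*}
0\geq \eta\sigma^{i^*}\bigl(\abs{c_{i^*}}-\eta\bigr)+r^T(\Delta-k\eta\omega^{i^*}).
\end{equation*}
Lower-bound the right side using $\abs{c_{i^*}}\geq\norm{\Delta}_2/\sqrt m$, $\sigma^{i^*}\geq\sigma^m$, and Cauchy--Schwarz on the $r$ term. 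This yields
\begin{equation*}
0\geq \eta\sigma^m\Bigl(\frac{\norm{\Delta}_2}{\sqrt m}-\eta\Bigr)-\frac{\eta\sigma^m}{2\sqrt m}\bigl(\norm{\Delta}_2+\eta\bigr),
\end{equation*}
which simplifies to $\norm{\Delta}_2\leq 2\sqrt m\,\eta+\eta$ up to constants.

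The main obstacle is making the constants match the stated form, which involves $\sigma^1/\sigma^m$. The crude bound $\sigma^{i^*}\geq\sigma^m$ already gives $\norm{\Delta}_2\leq\eta(1+2\sqrt m)$ or similar, and since $\sigma^1/\sigma^m\geq 1$ this is at least as strong as the claim. If the arithmetic turns out tighter than expected, I would instead use $\sigma^{i^*}\leq\sigma^1$ in the $\eta\sigma^{i^*}\cdot\eta$ term and $\sigma^{i^*}\geq\sigma^m$ in the leading term. That produces exactly a $2\sqrt m\,\sigma^1/\sigma^m$ factor, which contradicts the assumed lower bound on $\norm{\Delta}_2$.

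"When $n$ is large enough" enters only so that Proposition \ref{Proposition: Boundedness and Smoothness}(b) and Lemma \ref{Lemma: Unique_Equal_Population_Price} apply. These guarantee that $p_n^*(d)\in\Omega_p$, so the eigenpairs $(\sigma^i,\omega^i)$ are well defined with $\sigma^m\geq\underline\sigma>0$.
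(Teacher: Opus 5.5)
Your argument is correct and is essentially the paper's proof. It uses the same ingredients: the subgradient inequality of Lemma \ref{Lemma: Convexity} at the probe point $p+k\eta\omega^{i}$ with $k$ matched to the sign of $c_{i}$, the bound $\max_i|c_i|\geq\|\Delta\|_2/\sqrt{m}$, and Cauchy--Schwarz on the residual. Your fallback computation, using $\sigma^{i^*}\leq\sigma^1$ on the $\eta^2$ term and $\sigma^{i^*}\geq\sigma^m$ on the leading term, is exactly how the paper obtains the $\sigma^1/\sigma^m$ factor.

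Your primary route, which keeps $\sigma^{i^*}$ attached to $(|c_{i^*}|-\eta)$, is also valid and yields the sharper bound $\eta(1+2\sqrt{m})$. However, replacing $\sigma^{i^*}$ by $\sigma^m$ there requires $|c_{i^*}|\geq\eta$, and you should state this explicitly. It does hold under your contradiction hypothesis, since $|c_{i^*}|\geq\|\Delta\|_2/\sqrt{m}>\eta$.
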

\begin{proof}
    Let $N_0$ be given by Proposition \ref{Proposition: Boundedness and Smoothness}. Fix $n\geq N_0$ and $d\in\bar\Omega_n$. By Lemma \ref{Lemma: Unique_Equal_Population_Price}, $p_n^*(d)\in\Omega_p$, which makes $\left\{\omega_{j,n}^i(p_n^*(d))\right\}_{i=1}^m$ and $\left\{\sigma_{j,n}^i(p_n^*(d))\right\}_{i=1}^m$ well defined. Let $p$ satisfy the required conditions, by Lemma \ref{Lemma: Convexity},
    \begin{equation*}
        \begin{aligned}
            0\geq &\left(p_{j,n}(\delta_{j,n}(d))-p-k\eta\omega_{j,n}^i(p_n^*(d))\right)^T\nabla g_{j,n}(p+k\eta\omega_{j,n}^i(p_n^*(d)),\delta_{j,n}(d))\\
            =&\left(p_{j,n}(\delta_{j,n}(d))-p-k\eta\omega_{j,n}^i(p_n^*(d))\right)^Tk\eta\sigma_{j,n}^i(p_n^*(d))\omega_{j,n}^i(p_n^*(d))\\
            &+\left(p_{j,n}(\delta_{j,n}(d))-p-k\eta\omega_{j,n}^i(p_n^*(d))\right)^T\left(\nabla g_{j,n}(p+k\eta\omega_{j,n}^i(p_n^*(d)),\delta_{j,n}(d))-k\eta\sigma_{j,n}^i(p_n^*(d))\omega_{j,n}^i(p_n^*(d))\right)\\
            \geq &k\eta\sigma_{j,n}^i(p_n^*(d))\left(p_{j,n}(\delta_{j,n}(d))-p\right)^T\omega_{j,n}^i(p_n^*(d))-\eta^2\sigma_{j,n}^i(p_n^*(d)){\omega_{j,n}^i(p_n^*(d))}^T\omega_{j,n}^i(p_n^*(d))\\
            &-\norm{p_{j,n}(\delta_{j,n}(d))-p-k\eta\omega_{j,n}^i(p_n^*(d))}_2\norm{\nabla g_{j,n}(p+k\eta\omega_{j,n}^i(p_n^*(d)),\delta_{j,n}(d))-k\eta\sigma_{j,n}^i(p_n^*(d))\omega_{j,n}^i(p_n^*(d))}_2
        \end{aligned}
    \end{equation*}
    where the last line is by Cauchy Inequality. Since $\omega_{j,n}^i(p_n^*(d))$'s are orthonormal and the above inequality is true for both $k=1$ and $k=-1$, 
    \begin{equation*}
        \begin{aligned}
            0\geq &\eta\sigma_{j,n}^m(p_n^*(d))\abs{\left(p_{j,n}(\delta_{j,n}(d))-p\right)^T\omega_{j,n}^i(p_n^*(d))}-\eta^2\sigma_{j,n}^1(p_n^*(d))\\
            &-\left(\norm{p_{j,n}(\delta_{j,n}(d))-p}_2+\eta\right)\frac{\eta\sigma_{j,n}^m(p_n^*(d))}{2\sqrt{m}}
        \end{aligned} 
    \end{equation*}
    $\omega_{j,n}^i(p_n^*(d))$'s are orthonormal also implies that there exists $i$ such that
    \begin{equation*}
        \abs{\left(p_{j,n}(\delta_{j,n}(d))-p\right)^T\omega_{j,n}^i(p_n^*(d))}\geq\frac{\norm{p_{j,n}(\delta_{j,n}(d))-p}_2}{\sqrt{m}}
    \end{equation*}
    Then, 
    \begin{equation*}
        \begin{aligned}
            0\geq &\frac{\eta\sigma_{j,n}^m(p_n^*(d))\norm{p_{j,n}(\delta_{j,n}(d))-p}_2}{\sqrt{m}}-\eta^2\sigma_{j,n}^1(p_n^*(d))\\
            &-\left(\norm{p_{j,n}(\delta_{j,n}(d))-p}_2+\eta\right)\frac{\eta\sigma_{j,n}^m(p_n^*(d))}{2\sqrt{m}}
        \end{aligned}
    \end{equation*}
    which is equivalent to 
    \begin{equation*}
        \norm{p_{j,n}(\delta_{j,n}(d))-p}_2\leq \eta\left(1+\frac{2\sqrt{m}\sigma_{j,n}^1(p_n^*(d))}{\sigma_{j,n}^m(p_n^*(d))}\right)
    \end{equation*}
\end{proof}
\begin{lem}\label{Lemma: EmpiricalProcess}
    Under Assumption \ref{Assumption: Asymptotic Regime}, \ref{Assumption: Nonstationarity} and Assumption \ref{Assumption: p_star}, for any $P\in\mathcal{P}$,
    \begin{enumerate}[label=(\alph*)]
        \item \begin{equation*}
        \mathbb{P}\left\{\sup_{p\in\Omega_p}\norm{\frac{1}{n-j}\sum_{k=j+1}^na_{k,n}\mathbbm{1}\{u_{k,n}>{a_{k,n}}^Tp\}-\bar\delta_{k,n}(p)}_2>\nu\right\}\leq 
        \left(1+\left\lceil\frac{6m\bar u\bar\sigma}{d^{min}_0\nu}\right\rceil\right)^mm\exp\left(-\frac{2(n-j)\nu^2}{9m\alpha^2}\right)
    \end{equation*}
    \item When $n-j$ is large enough,
    \begin{equation*}
        \mathbb{E}\left[\sup_{p\in\Omega_p}\norm{\frac{1}{n-j}\sum_{k=j+1}^Ta_{k,n}\mathbbm{1}\{u_{k,n}>{a_{k,n}}^Tp\}-\bar\delta_{k,n}(p)}^2_2\right]\leq m\left(2^{m-1}+\frac{1}{2}\left(\frac{18m\bar u\bar\sigma}{d^0_{min}}\right)^m\right) \frac{9m^2\alpha^2\log(n-j)}{2(n-j)}
    \end{equation*}
    \end{enumerate}
\end{lem}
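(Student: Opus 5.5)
The plan is a covering (bracketing) argument combined with Hoeffding's inequality, exploiting the monotonicity of the indicator $\mathbbm{1}\{u>a^Tp\}$ in $p$ and the smoothness of $\bar\delta_{j,n}$ on $\Omega_p$. Throughout I read the centering term $\bar\delta_{k,n}(p)$ in the statement as $\bar\delta_{j,n}(p)$.

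\textbf{Part (a).} First I cover $\Omega_p$. By Lemma \ref{Lemma: BoundedPrice-I} and the proof of Proposition \ref{Proposition: Boundedness and Smoothness}(c), $\Omega_p \subseteq [0,2\bar u/d^{min}_0]^m$. I cover this box by a grid of mesh $h=\nu d^{min}_0/(3m\bar u\bar\sigma)$ in each coordinate, roughly. This gives at most $(1+\lceil 6m\bar u\bar\sigma/(d^{min}_0\nu)\rceil)^m$ grid points.

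Second, I bracket each $p$. For $p\in\Omega_p$, let $\underline{p}\le p\le\bar p$ be the neighbouring grid corners. Since $a_{k,n}\ge0$, each coordinate of $a_{k,n}\mathbbm{1}\{u_{k,n}>a_{k,n}^Tp\}$ is nonincreasing in $p$. So the empirical average at $p$ lies between its values at $\bar p$ and $\underline p$. The centering $\bar\delta_{j,n}$ is also monotone, and by Proposition \ref{Proposition: Boundedness and Smoothness}(b) its Jacobian has norm at most $\bar\sigma$. Hence $\|\bar\delta_{j,n}(\underline p)-\bar\delta_{j,n}(\bar p)\|\le\bar\sigma\sqrt m\,h$, which I make at most $\nu/3$.

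A technical point arises here: grid corners may leave $\Omega_p$. I handle this either by using grid points of a slightly enlarged ball $B_{2\epsilon}$, on which the same derivative bounds hold, or by shrinking $\epsilon$.

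Third, I reduce the supremum to the grid. Coordinatewise, the deviation at $p$ is bounded by the maximum of the deviations at the two bracket points plus $\nu/3$. Therefore the supremum exceeding $\nu$ forces some coordinate at some grid point to deviate by more than $2\nu/(3\sqrt m)$.

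Fourth, I apply Hoeffding. The summands are independent (Assumption \ref{Assumption: Nonstationarity}(a)) and each coordinate lies in $[0,\alpha]$. Hoeffding gives $2\exp(-2(n-j)\cdot 4\nu^2/(9m\alpha^2))$, which is at most the stated $\exp(-2(n-j)\nu^2/(9m\alpha^2))$. A union bound over the $m$ coordinates and the grid points yields (a). The constant $m$ versus $2m$ is absorbed by the slack in the exponent.

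\textbf{Part (b).} Write $S$ for the supremum. I use $\E[S^2]=\int_0^\infty P(S>\sqrt t)\,dt$ and split at $t_0=c\log(n-j)/(n-j)$ with $c=9m\alpha^2/2$, up to constants. Below $t_0$ the integrand is at most $1$, contributing $t_0$.

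Above $t_0$ I use (a) with $\nu=\sqrt t$. The grid factor is at most $2^{m-1}(1+(C/\nu)^m)$ with $C=6m\bar u\bar\sigma/d^{min}_0$ or a similar constant. Note $\nu^{-m}\le t_0^{-m/2}$, which is polynomial in $n-j$. Meanwhile $\exp(-2(n-j)t/(9m\alpha^2))$ integrates to $\frac{9m\alpha^2}{2(n-j)}e^{-\log(n-j)\cdot(\text{const})}$. Choosing the multiple of $\log(n-j)$ in $t_0$ large enough, e.g.\ $(m/2+1)\cdot 9m\alpha^2/2$, makes the polynomial blow-up harmless once $n-j$ is large. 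This produces the stated form $m(2^{m-1}+\tfrac12(18m\bar u\bar\sigma/d^0_{min})^m)\cdot\frac{9m^2\alpha^2\log(n-j)}{2(n-j)}$ after collecting constants.

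\textbf{Main obstacle.} I expect the hardest part to be the bracketing step in part (a): keeping the grid corners inside the region where the Jacobian bound $\bar\sigma$ holds, and matching the exact constants. The tail integration in (b) is routine bookkeeping.
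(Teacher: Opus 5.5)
Your proposal is correct and follows essentially the same route as the paper. Both arguments use a monotone bracketing grid on a box containing $\Omega_p$ together with the Lipschitz bound $\bar\sigma$ on $\bar\delta_{j,n}$, then coordinatewise Hoeffding for independent bounded summands plus a union bound, and finally integrate the tail from (a) via $\E[S^2]=\int_0^\infty \mathbb{P}\{S>\sqrt t\}\,dt$; your explicit split at $t_0\asymp \log(n-j)/(n-j)$ is a cleaner version of the paper's rescaling of the integration variable.

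There is one bookkeeping slip. To obtain the stated grid count $(1+\lceil 6m\bar u\bar\sigma/(d^{min}_0\nu)\rceil)^m$, the mesh should be $\nu/(3m\bar\sigma)$, as in the paper, rather than $\nu d^{min}_0/(3m\bar u\bar\sigma)$. With that mesh your $\nu/3$ requirement on the bracket gap holds with room to spare.
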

\begin{proof}
    For part (a), by Hoeffding Inequality, for any $p\in \mathbb{R}^m$,
    \begin{equation*}
        \begin{aligned}
            \mathbb{P}\left\{\norm{\frac{1}{n-j}\sum_{k=j+1}^na_{k,n}\mathbbm{1}\{u_{k,n}>{a_{k,n}}^Tp\}-\bar\delta_{k,n}(p)}_2>\frac{\nu}{3\sqrt{m}}\right\}\leq  m\exp\left(-\frac{2(n-j)\nu^2}{9m\alpha^2}\right)
        \end{aligned}
    \end{equation*}
    We can assume $\Omega_p$ is small enough such that it is a subset of a hyper cube $\bar\Omega_p$, and Proposition \ref{Proposition: Boundedness and Smoothness} holds if we replace $\Omega_p$ in its statements by $\bar\Omega_p$. We can also assume $\bar\Omega_p\subseteq \left[0,\frac{2\bar u}{d^{min}_0}\right]^m$ where $d_0^{min}$ is the smallest entry of $d_0$. We can denote $\bar\Omega_p$ as $\otimes_{i=1}^m \left[\underline{p_i},\bar p_i\right]$ Then, we construct the following grid $\mathcal{G}$
    \begin{equation*}
        \mathcal{G} = \left\{\left(\left(\underline{p_1}+\frac{\nu}{3m\bar\sigma}k_1\right)\wedge\bar p_1,\cdots,\left(\underline{p_m}+\frac{\nu}{3m\bar\sigma}k_m\right)\wedge\bar p_m\right):0\leq k_i\leq \left\lceil\frac{3m(\bar p_i-\underline{p_i})\bar\sigma}{\nu}\right\rceil,\;1\leq i\leq m\right\}
    \end{equation*}
    Then, for any $p\in\Omega_p$, there exists $p_1,p_2\in\mathcal{G}$ such that $p_1\leq p\leq p_2$ and $\norm{p_1-p_2}_2\leq \frac{\nu}{3\sqrt{m}\bar\sigma}$. ``$\leq$'' between vectors element-wise smaller than or equal to. By Mean Value Theorem, $\norm{\bar\delta_{j,n}(p_1)-\bar\delta_{j,n}(p_2)}_2\leq\frac{\nu}{3\sqrt{m}}$. By monotonicity, $\bar\delta_{j,n}(p_2)\leq\bar\delta_{j,n}(p)\leq\bar\delta_{j,n}(p_1)$ and
    \begin{equation*}
        \begin{aligned}
            &\frac{1}{n-j}\sum_{k=j+1}^na_{k,n}\mathbbm{1}\{u_{k,n}>{a_{k,n}}^Tp_2\}\leq \frac{1}{n-j}\sum_{k=j+1}^na_{k,n}\mathbbm{1}\{u_{k,n}>{a_{k,n}}^Tp\}\leq \frac{1}{n-j}\sum_{k=j+1}^na_{k,n}\mathbbm{1}\{u_{k,n}>{a_{k,n}}^Tp_1\}\quad\mathrm{a.s.}
        \end{aligned}
    \end{equation*}
    Define
    \begin{equation*}
        \mathcal{E} = \left\{\norm{\frac{1}{n-j}\sum_{k=j+1}^Ta_{k,n}\mathbbm{1}\{u_{k,n}>{a_{k,n}}^Tp\}-\bar\delta_{j,n}(p)}_2\leq \frac{\nu}{3\sqrt{m}}:p\in\mathcal{G}\right\}
    \end{equation*}
    Let $\{e_i\}_{i=1}^m$ be the standard basis, then on $\mathcal{E}$, for any $p\in\Omega_p$,
    \begin{equation*}
        {e_i}^T\bar\delta_{j,n}(p),{e_i}^T\left(\frac{1}{n-j}\sum_{k=j+1}^na_{k,n}\mathbbm{1}\{u_{k,n}>{a_{k,n}}^Tp\}\right)\in \left({e_i}^T\bar\delta_{j,n}(p_2)-\frac{\nu}{3\sqrt{m}},{e_i}^T\bar\delta_{j,n}(p_1)+\frac{\nu}{3\sqrt{m}}\right)
    \end{equation*}
    which implies that
    \begin{equation*}
        \norm{\frac{1}{n-j}\sum_{k=j+1}^na_{k,n}\mathbbm{1}\{u_{k,n}>{a_{k,n}}^Tp\}-\bar\delta_{j,n}(p)}_2\leq\nu\quad\forall p\in\Omega_p
    \end{equation*}
    Thus,
    \begin{equation*}
        \begin{aligned}
            &\mathbb{P}\left\{\sup_{p\in\Omega_p}\norm{\frac{1}{n-j}\sum_{k=j+1}^na_{k,n}\mathbbm{1}\{u_{k,n}>{a_{k,n}}^Tp\}-\bar\delta_{j,n}(p)}_2>\nu\right\}\\
            \leq & 1 - \mathbb{P}\{\mathcal{E}\}\leq \abs{\mathcal{G}}m\exp\left(-\frac{2(n-j)\nu^2}{9m\alpha^2}\right)\leq \left(1+\left\lceil\frac{6m\bar u\bar\sigma}{d^0_{min}\nu}\right\rceil\right)^mm\exp\left(-\frac{2(n-j)\nu^2}{9m\alpha^2}\right)
        \end{aligned}
    \end{equation*}
    For part (b),
    \begin{equation*}
        \begin{aligned}
            &\mathbb{E}\left[\sup_{p\in\Omega_p}\norm{\frac{1}{n-j}\sum_{k=j+1}^na_{k,n}\mathbbm{1}\{u_{k,n}>{a_{k,n}}^Tp\}-\bar\delta_{k,n}(p)}^2_2\right]\\
            \leq &\frac{1}{2}\int_{0}^{+\infty}1\wedge\left\{\left(1+\left\lceil\frac{6m\bar u\bar\sigma}{d^0_{min}\sqrt{\nu}}\right\rceil\right)^mm\exp\left(-\frac{2(n-j)\nu}{9m\alpha^2}\right)\right\}dv\\
            \leq & 2^{m-1}m\int_0^{+\infty}\exp\left(-\frac{2(n-j)\nu}{9m\alpha^2}\right)dv+\frac{1}{2}\left(\frac{18m\bar u\bar\sigma}{d^0_{min}}\right)^mm\int_{0}^{+\infty}1\wedge \left\{v^{-\frac{m}{2}}\exp\left(-\frac{2(n-j)\nu}{9m\alpha^2}\right)\right\}dv
        \end{aligned}
    \end{equation*}
    where the second inequality is by the fact that $1+\lceil x\rceil \leq 3x$ if $x\geq 1$. To evaluate the right-hand-side. First,
    \begin{equation*}
        \int_0^{+\infty}\exp\left(-\frac{2(n-j)\nu}{9m\alpha^2}\right)dv=\frac{9m\alpha^2}{2(n-j)}
    \end{equation*}
    Secondly, let $v=\frac{9m^2\alpha^2\log(n-j)}{4(n-j)}v'$
    \begin{equation*}
        \begin{aligned}
            &\int_{0}^{+\infty}1\wedge v^{-\frac{m}{2}}\exp\left(-\frac{2(n-j)\nu}{9m\alpha^2}\right)dv\\
            =&\int_{0}^{+\infty}1\wedge \exp\left(-\frac{2(n-j)\nu}{9m\alpha^2}+\frac{m}{2}\log\left(\frac{1}{v}\right)\right)dv\\
            =&\frac{9m^2\alpha^2\log(n-j)}{4(n-j)}\int_{0}^{+\infty}1\wedge\exp\left(-\frac{m}{2}\log(n-j)v'+\frac{m}{2}\log\left(\frac{4(n-j)}{9m^2\alpha^2\log(n-j)\alpha^2v'}\right)\right)dv'
        \end{aligned}
    \end{equation*}
    When $n-j$ is large enough,
    \begin{equation*}
        \begin{aligned}
            &\int_{0}^{+\infty}1\wedge\exp\left(-\frac{m}{2}\log(n-j)v'+\frac{m}{2}\log\left(\frac{4(n-j)}{9m^2\alpha^2\log(n-j)\alpha^2v'}\right)\right)dv'\\
            \leq & 1 + \int_{1}^{+\infty}\exp\left(-\frac{m}{2}\log(n-j)v'+\frac{m}{2}\log\left(\frac{n-j}{v'}\right)\right)dv'\\
            \leq & 1 + \int_{1}^{+\infty}\exp\left(-\frac{m}{2}\log(n-j)v'+\frac{m}{2}\log\left(n-j\right)\right)dv'\\
            \leq &1 + \frac{2}{m\log (n-j)}\leq 2
        \end{aligned}
    \end{equation*}
    Summarize all the above, when $n-j$ is large enough,
    \begin{equation*}
        \begin{aligned}
            &\mathbb{E}\left[\sup_{p\in\Omega_p}\norm{\frac{1}{n-j}\sum_{k=j+1}^na_{k,n}\mathbbm{1}\{u_{k,n}>{a_{k,n}}^Tp\}-\bar\delta_{j,n}(p)}^2_2\right]\leq m\left(2^{m-1}+\frac{1}{2}\left(\frac{18m\bar u\bar\sigma}{d^0_{min}}\right)^m\right) \frac{9m^2\alpha^2\log(n-j)}{2(n-j)}
        \end{aligned}
    \end{equation*}
\end{proof}
\noindent We can now give the complete proof of Proposition \ref{Proposition: TailProbDualConv}. Fix $P\in\mathcal{P}$ and let $N_0$ be given by Proposition \ref{Proposition: Boundedness and Smoothness}(c). Fix $n\geq N_0$, by the same arguments in the proof of Lemma \ref{Lemma: Unique_Equal_Population_Price}, $p_{j,n}^*(d)\in B_{\frac{\epsilon}{2}}(p_n^*(d_0))$ for all $d\in\Omega_{j,n}$. Together with Proposition \ref{Proposition: Boundedness and Smoothness}(c), there exists $\nu<\frac{\epsilon}{2}$ that holds for all $P\in\mathcal{P}$ such that,
    \begin{equation*}
        p_{j,n}^*(d)+\frac{\nu k\omega^i_{j,n}(p_{j,n}^*(d))}{1+\frac{8\sqrt{m}\bar\sigma}{\underline{\sigma}}}\in\Omega_p\quad\forall d\in\Omega_{j,n},\;1\leq i\leq m,\;0\leq j<n
    \end{equation*}
    Denote $\eta(\nu) = \frac{\nu}{1+\frac{8\sqrt{m}\bar\sigma}{\underline{\sigma}}}$ By Lemma \ref{Lemma: Unique_Equal_Population_Price} and Lemma \ref{Lemma: LocateOfflinePrice}, for all $0\leq j < n$,
    \begin{equation}\label{Equation: TailProbDualConv_1}
        \begin{aligned}
            &\mathbb{P}\left\{\sup_{d\in\Omega_{j,n}}\norm{p_{j,n}(d)-p_{j,n}^*(d)}_2>\nu\right\}\\
            \leq & \mathbb{P}\left\{\sup_{d\in\Omega_{j,n}}\norm{p_{j,n}(d)-p_{j,n}^*(d)}_2>\left(1+\frac{2\sqrt{m}\bar\sigma}{\underline{\sigma}}\right)\eta(\nu)\right\}\\
            \leq & \mathbb{P}\left\{\sup_{d\in\Omega_{j,n}}\max_{1\leq i\leq m}\max_{k\in \{-1,1\}}\norm{\nabla g_{j,n}(p_{j,n}^*(d)+k\eta(\nu)\omega^i_{j,n}(p^*_{j,n}(d));d))-k\eta(\nu)\sigma^i_{j,n}(p_{j,n}^*(d))\omega^i_{j,n}(p_{j,n}^*(d))}_2>\frac{\underline{\sigma}\eta(\nu)}{4\sqrt{m}}\right\}
        \end{aligned}
    \end{equation} 
    By Proposition \ref{Proposition: Boundedness and Smoothness}(b),
    \begin{equation*}
        \begin{aligned}
            &\norm{\nabla g^*_{j,n}(p_{j,n}^*(d)+k\eta(\nu)\omega^i_{j,n}(p^*_{j,n}(d));d))-k\eta(\nu)\sigma^i_{j,n}(p_{j,n}^*(d))\omega^i_{j,n}(p_{j,n}^*(d))}_2\\
            =&\norm{\nabla g^*_{j,n}(p_{j,n}^*(d)+k\eta(\nu)\omega^i_{j,n}(p^*_{j,n}(d));d))-\nabla g^*_{j,n}(p_{j,n}^*(d);d))-k\eta(\nu)\sigma^i_{j,n}(p_{j,n}^*(d))\omega^i_{j,n}(p_{j,n}^*(d))}_2\\
            =&\norm{G^*_{j,n}(p_{j,n}^*(d))(k\eta(\nu)\omega^i_{j,n}(p^*_{j,n}(d)))+\mathcal{E}_{j,n}(\nu;i,k,d) -k\eta(\nu)\sigma^i_{j,n}(p_{j,n}^*(d))\omega^i_{j,n}(p_{j,n}^*(d))}_2\\
            =&\norm{\mathcal{E}_{j,n}(\nu;i,k,d)}_2
        \end{aligned}
    \end{equation*}
    where $\mathcal{E}(\cdot)$ denotes the residual term in Taylor's Theorem. Proposition \ref{Proposition: Boundedness and Smoothness}(a) guarantees that $\mathcal{E}(\cdot)$ is $o(\nu)$ uniformly over $j$, $n\geq N_0$, $i$, $k$, and $d\in\Omega_{j,n}$. In addition, we can assume that $\nu$ are small enough such that, for all $0\leq j<n$ 
    \begin{equation}\label{Equation: TailProbDualConv_2}
        \begin{aligned}
            \sup_{d\in\Omega_{j,n}}\max_{1\leq i\leq m}\max_{k\in \{-1,1\}}\norm{\nabla g^*_{j,n}(p_{j,n}^*(d)+k\eta(\nu)\omega^i_{j,n}(p^*_{j,n}(d));d))-k\eta(\nu)\sigma^i_{j,n}(p_{j,n}^*(d))\omega^i_{j,n}(p_{j,n}^*(d))}_2
        \end{aligned}\leq \frac{\eta(\nu)\underline{\sigma}}{8\sqrt{m}}
    \end{equation}
    Then, by (\ref{Equation: TailProbDualConv_1}), (\ref{Equation: TailProbDualConv_2}), and Lemma \ref{Lemma: EmpiricalProcess}, there exists $C_1$ and $C_2$ that holds for all $P\in\mathcal{P}$ such that
    \begin{equation*}
        \begin{aligned}
            &\mathbb{P}\left\{\sup_{d\in\Omega_{j,n}(\nu)}\norm{p_{j,n}(d)-p_{j,n}^*(d)}_2>\nu\right\}\\
            \leq & \mathbb{P}\left\{\sup_{d\in\Omega_{j,n}(\nu)}\max_{1\leq i\leq m}\max_{k\in \{-1,1\}}\norm{\nabla g_{j,n}(p_{j,n}^*(d)+k\eta(\nu)\omega^i_{j,n}(p^*_{j,n}(d));d))-\nabla g^*_{j,n}(p_{j,n}^*(d)+k\eta(\nu)\omega^i_{j,n}(p^*_{j,n}(d));d))}_2>\frac{\underline{\sigma}\eta(\nu)}{8\sqrt{m}}\right\}\\
            \leq & \mathbb{P}\left\{\sup_{p\in\Omega_p}\norm{\frac{1}{n-j}\sum_{k=j+1}^Ta_{k,n}\mathbbm{1}\{u_{k,n}>{a_{k,n}}^Tp\}-\bar\delta_{k,n}(p)}_2>\frac{\underline{\sigma}\eta(\nu)}{8\sqrt{m}}\right\}\leq  C_1 \exp\left(-C_2\nu^2(n-j)\right)
        \end{aligned}
    \end{equation*}
\subsection{Proof of Lemma \ref{Lemma: BoundedPrice}}
    Fix $P\in\mathcal{P}$. Let $N_0$ be given by Proposition \ref{Proposition: Boundedness and Smoothness}(c) and fix $n\geq N_0$. By definition and Assumption \ref{Assumption: Nonstationarity}, 
    \begin{equation*}
        \norm{\bar\delta_{j,n}(p)}_{\infty}\leq \frac{1}{n-j}\sum_{k=j+1}^n\norm{\E[a_{k,n}]}_{\infty}\leq\alpha
    \end{equation*}
    Proposition \ref{Proposition: Boundedness and Smoothness}(c) and Lemma \ref{Lemma: Unique_Equal_Population_Price}(a) together show that $\delta_{j,n}(d_0)>\frac{1-\kappa}{2}\epsilon_p\cdot e$ for all $j$. By Proposition \ref{Proposition: Boundedness and Smoothness}(b) and Mean Value Theorem, for any $p\in B_{\frac{\epsilon}{2}}(p_n^*(d_0))$, 
    \begin{equation*}
        \norm{\bar{\delta}_{j,n}(p)-\bar{\delta}_{j,n}(p_n^*(d_0))}\leq\frac{\bar{\sigma}\epsilon}{2}
    \end{equation*}
    We can assume $\epsilon$ is small enough such that $\frac{\bar\sigma\epsilon}{2} < \frac{1-\kappa}{2}\epsilon_p$. Thus, there exists $\underline{d}>0$ such that each entry of $d$ is greater than $\underline{d}$ when $d\in\Omega_{j,n}$ for all $j$. Then, by the same argument in the proof of Lemma \ref{Lemma: BoundedPrice-I}, we can show that $\norm{p_{j,n}(d)}_2\leq \norm{p_{j,n}(d)}_1\leq \frac{\bar u}{\underline{d}}$ almost surely for all $d\in\Omega_{j,n}$ and $j$.
\subsection{Proof of Theorem \ref{Theorem: DualConvergence}}
Fix $P\in\mathcal{P}$. Let $N_0$ be given by Proposition \ref{Proposition: Boundedness and Smoothness}(c) and fix $n\geq N_0$. Then, by Proposition \ref{Proposition: Boundedness and Smoothness}(a), (b) and Taylor's theorem, we can assume $\epsilon$ is small (so that $\Omega_p$ is small enough) enough such that for any $p\in\Omega_p$, $d\in\Omega_{j,n}$ and $0\leq j<n$,
    \begin{equation}\label{Equation: DC_1}
        (p-p_{j,n}^*(d))^TG^*_{j,n}(p_{j,n}^*(d))(p-p_{j,n}^*(d))\geq\underline{\sigma}\norm{p-p_{j,n}^*(d)}_2^2
    \end{equation}
    and
    \begin{equation}\label{Equation: DC_2}
        \begin{aligned}
            &\norm{\nabla g^*_{j,n}(p;d)-G^*_{j,n}(p_{j,n}^*(d))(p-p_{j,n}^*(d))}_2\\
            =&\norm{\nabla g^*_{j,n}(p;d)-\nabla g^*_{j,n}(p_{j,n}^*(d);d)-G^*_{j,n}(p_{j,n}^*(d))(p-p_{j,n}^*(d))}_2\\
            =&\norm{G^*_{j,n}(p_{j,n}^*(d))(p-p_{j,n}^*(d))+\mathcal{E}(p-p_{j,n}^*(d);j,n,d)-G^*_{j,n}(p_{j,n}^*(d))(p-p_{j,n}^*(d))}_2\\
            =&\norm{\mathcal{E}(p-p_{j,n}^*(d);j,n,d)}_2\\
            \leq &\frac{\underline{\sigma}\norm{p-p_{j,n}^*(d)}}{2}
        \end{aligned}
    \end{equation}
    (\ref{Equation: DC_1}) and (\ref{Equation: DC_2}) then imply that
    \begin{equation}\label{Equation: DC_3}
        \begin{aligned}
            &(p-p_{j,n}^*(d))^T\nabla g^*_{j,n}(p;d)\\
            =&(p-p_{j,n}^*(d))^TG^*_{j,n}(p_{j,n}^*(d))(p-p_{j,n}^*(d))+(p-p_{j,n}^*(d))^T(\nabla g^*_{j,n}(p;d)-G^*_{j,n}(p_{j,n}^*(d))(p-p_{j,n}^*(d)))\\
            \geq & \underline{\sigma}\norm{p-p_{j,n}^*(d)}_2^2 - \norm{\nabla g^*_{j,n}(p;d)-G^*_{j,n}(p_{j,n}^*(d))(p-p_{j,n}^*(d))}_2\norm{p-p_{j,n}^*(d)}_2\\
            \geq &\frac{\underline{\sigma}\norm{p-p_{j,n}^*(d)}_2^2}{2}
        \end{aligned}
    \end{equation}
    Let $\hat{p}_{j,n}(d)=\frac{p_{j,n}(d)+p_{j,n}^*(d)}{2}$. By Lemma \ref{Lemma: Convexity},
    \begin{equation}\label{Equation: DC_4}
        (\hat{p}_{j,n}(d)-p_{j,n}^*(d))^T\nabla g_{j,n}(\hat p_{j,n}(d),d) = (p_{j,n}(d)-\hat{p}_{j,n}(d))^T\nabla g_{j,n}(\hat{p}_{j,n}(d),d)\leq 0 
    \end{equation}
    Fix $0\leq j < n$ and $d\in\Omega_{j,n}$. Then, if $p_{j,n}(d)\in \Omega_P$, $\hat{p}_{j,n}(d)\in \Omega_p$. Together with (\ref{Equation: DC_3}) and (\ref{Equation: DC_4}), if $p_{j,n}(d)\in \Omega_P$, then
    \begin{equation*}
        \begin{aligned}
            &\norm{\hat{p}_{j,n}(d)-p_{j,n}^*(d)}_2\norm{\nabla g^*_{j,n}(\hat{p}_{j,n}(d);d)-\nabla g_{j,n}(\hat{p}_{j,n}(d);d)}_2\\
            \geq & (\hat{p}_{j,n}(d)-p_{j,n}^*(d))^T(\nabla g^*_{j,n}(\hat{p}_{j,n}(d);d)-\nabla g_{j,n}(\hat{p}_{j,n}(d);d))\\
            \geq & (\hat{p}_{j,n}(d)-p_{j,n}^*(d))^T\nabla g^*_{j,n}(\hat{p}_{j,n}(d);d)\\
            \geq &\frac{\underline{\sigma}\norm{\hat{p}_{j,n}(d)-p_{j,n}^*(d)}_2^2}{2}
        \end{aligned}
    \end{equation*}
    which implies that
    \begin{equation}\label{equation: DC_4_1}
        \norm{\nabla g^*_{j,n}(\hat{p}_{j,n}(d);d)-\nabla g_{j,n}(\hat{p}_{j,n}(d);d)}_2\geq\frac{\underline{\sigma}\norm{\hat{p}_{j,n}(d)-p_{j,n}^*(d)}_2}{2}=\frac{\underline{\sigma}\norm{p_{j,n}(d)-p_{j,n}^*(d)}_2}{4}
    \end{equation}
    Let $\nu$ be given by Proposition \ref{Proposition: TailProbDualConv}. By the definition of $\Omega_{j,n}$, $p_{j,n}^*(d)\in B_{\frac{\epsilon}{2}}(p_n^*(d_0))$. Together with the fact $\nu<\frac{\epsilon}{2}$ and Proposition \ref{Proposition: Boundedness and Smoothness}(c), if $p_{j,n}(d)\in B_{\nu}(p_{j,n}^*(d))$, then $p_{j,n}(d)\in\Omega_p$. Thus, (\ref{equation: DC_4_1}) and Lemma \ref{Lemma: EmpiricalProcess}(b) imply that, when $n-j$ is greater than some constant $N_1$ that holds for all $P\in\mathcal{P}$,
    \begin{equation}\label{Equation: DC_5}
        \begin{aligned}
            &\E\left[\sup_{d\in\Omega_{j,n}}\mathbbm{1}\{p_{j,n}(d)\in B_{\nu}(p_{j,n}^*(d))\}\norm{p_{j,n}(d)-p_{j,n}^*(d)}_2^2\right]\\
            \leq &\frac{16}{\underline{\sigma}^2}\E\left[\sup_{d\in\Omega_{j,n}}\norm{\nabla g^*_{j,n}(\hat{p}_{j,n}(d),d)-\nabla g_{j,n}(\hat{p}_{j,n}(d),d)}_2^2\right]\\
            \leq &\frac{16}{\underline{\sigma}^2}\mathbb{E}\left[\sup_{p\in\Omega_p}\norm{\frac{1}{n-j}\sum_{k=j+1}^na_{k,n}\mathbbm{1}\{u_{k,n}>{a_{k,n}}^Tp\}-\bar\delta_{k,n}(p)}_2^2\right]\leq \frac{C_3\log (n-j)}{n-j}
        \end{aligned}
    \end{equation}
    where $C_3$ is a constant that holds for all $P\in\mathcal{P}$ and does not depend on $j$ and $n$. By Proposition \ref{Proposition: TailProbDualConv} and Lemma \ref{Lemma: BoundedPrice}, 
    \begin{equation}\label{Equation: DC_6}
        \begin{aligned}
            &\E\left[\sup_{d\in\Omega_{j,n}}\mathbbm{1}\{p_{j,n}(d)\notin B_{\nu}(p_{j,n}^*(d))\}\norm{p_{j,n}(d)-p_{j,n}^*(d)}_2^2\right]\\
            \leq &\frac{4\bar{u}^2}{\underline{d}^2}\mathbb{P}\left\{\sup_{d\in\Omega_{j,n}}\norm{p_{j,n}(d)-p_{j,n}^*(d)}_2>\nu\right\}\\
            \leq &\frac{4\bar{u}^2}{\underline{d}^2}C_1 \exp\left(-C_2\nu^2(n-j)\right)
        \end{aligned}
    \end{equation}
    Lemma \ref{Lemma: BoundedPrice} also implies that $\E\left[\sup_{d\in\Omega_{j,n}}\norm{p_{j,n}(d)-p^*_{j,n}(d)}_2^2\right]\leq\frac{4\bar{u}^2}{\underline{d}^2}$ if $n-j\leq N_1$. Together with 
    (\ref{Equation: DC_5}) and (\ref{Equation: DC_6}), the proof is complete.
\subsection{Proof of Lemma \ref{Lemma: UniformCylinder}}
We first provide a Lemma that shows a uniform Lipschitz continuous property of the population price. \begin{lem}\label{Lemma: LipschitzPrice}
    Under Assumptions \ref{Assumption: Asymptotic Regime}, \ref{Assumption: Nonstationarity} and \ref{Assumption: p_star}, there exists $L_d,\;\epsilon_d>0$ such that for all $P\in\mathcal{P}$ , when $n$ is large enough,
    \begin{equation*}
        \norm{p_{j,n}^*(d)-p_{j,n}^*(\delta_{j,n}(d_0))}_2\leq L_d\norm{d-\delta_{j,n}(d_0)}_2 \quad\forall d\in B_{\epsilon_d}(\delta_{j,n}(d_0))\;\forall 0\leq j<n
    \end{equation*}
\end{lem}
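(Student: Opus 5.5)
The plan is to realize $p^*_{j,n}$ near $\delta_{j,n}(d_0)$ as the inverse of the map $\bar\delta_{j,n}$ and to bound its Jacobian uniformly. By Lemma \ref{Lemma: Unique_Equal_Population_Price}(a),(b), for $n\geq N_0$ we have $\delta_{j,n}(d_0)=\bar\delta_{j,n}(p_n^*(d_0))$ and $p^*_{j,n}(\delta_{j,n}(d_0))=p_n^*(d_0)$. Moreover, $p^*_{j,n}=\bar\delta_{j,n}^{-1}$ on $\bar\Omega_{j,n}$, with Jacobian $-G^*_{j,n}(p^*_{j,n}(d))^{-1}$ by part (c). Proposition \ref{Proposition: Boundedness and Smoothness}(b) says the eigenvalues of $G^*_{j,n}$ on $\Omega_p$ lie in $[\underline\sigma,\bar\sigma]$, so the Jacobian of $p^*_{j,n}$ has operator norm at most $1/\underline\sigma$ wherever it is defined. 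The natural choice is therefore $L_d=1/\underline\sigma$. It remains to produce a ball $B_{\epsilon_d}(\delta_{j,n}(d_0))$, with $\epsilon_d$ independent of $j,n,P$, that lies inside $\bar\Omega_{j,n}$ and is convex, so that the mean value inequality applies along segments.

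The key step is a uniform quantitative inverse function argument. I would show that $\bar\delta_{j,n}(B_{r}(p_n^*(d_0)))\supseteq B_{\underline\sigma r/2}(\delta_{j,n}(d_0))$ for a fixed $r\le\epsilon$. Because $B_\epsilon(p_n^*(d_0))\subseteq\Omega_p$ by Proposition \ref{Proposition: Boundedness and Smoothness}(c), this ball is contained in $\bar\Omega_{j,n}$.

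To prove the containment, fix a target $d$ with $\|d-\delta_{j,n}(d_0)\|_2<\underline\sigma r/2$ and minimize the strongly convex function $h(p)=g^*_{j,n}(p,d)$ over the closed ball $\bar B_r(p_n^*(d_0))$. Its gradient is $\nabla h(p)=d-\bar\delta_{j,n}(p)$, and by strong convexity with modulus $\underline\sigma$,
\[
(p-p_n^*(d_0))^T\nabla h(p)\ge \underline\sigma\|p-p_n^*(d_0)\|_2^2-\|d-\delta_{j,n}(d_0)\|_2\,\|p-p_n^*(d_0)\|_2>0
\]
on the boundary $\|p-p_n^*(d_0)\|_2=r$. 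Hence the minimizer lies in the interior and satisfies $\bar\delta_{j,n}(p)=d$. This argument needs only Proposition \ref{Proposition: Boundedness and Smoothness}(b), whose constants are uniform in $j$, $n$ and $P$, and so it yields the uniform radius $\epsilon_d=\underline\sigma r/2$. One could instead invoke a Lipschitz bound on the Hessian from part (a), but the convexity argument above avoids that.

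With the containment in hand, for $d\in B_{\epsilon_d}(\delta_{j,n}(d_0))$ the segment from $\delta_{j,n}(d_0)$ to $d$ stays in the ball, hence in $\bar\Omega_{j,n}$, where $p^*_{j,n}$ is $C^1$ with Jacobian norm at most $1/\underline\sigma$. Integrating the Jacobian along the segment gives the claimed bound with $L_d=1/\underline\sigma$.

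The main obstacle is making sure the uniqueness and identification statements of Lemma \ref{Lemma: Unique_Equal_Population_Price} are really uniform in $j$ and $P$ once $n\ge N_0$. This matters most near $j$ close to $n$, where the averages involve few terms. The eigenvalue bounds of Proposition \ref{Proposition: Boundedness and Smoothness}(b) hold for each single $k$, and averaging preserves them, so I expect no degeneration.
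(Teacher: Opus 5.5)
Your proof is correct, but it takes a different route from the paper's.

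\textbf{The paper's route.} The paper never uses the inverse-function parts (b)--(c) of Lemma~\ref{Lemma: Unique_Equal_Population_Price}. It first localizes an arbitrary minimizer $p^*_{j,n}(d)$.
\begin{itemize}
\item It applies the a priori price bound of Lemma~\ref{Lemma: BoundedPrice-I}. This bound is made uniform by the lower bound $\delta_{j,n}(d_0)>\frac{1-\kappa}{2}\epsilon_p e$ from Proposition~\ref{Proposition: Boundedness and Smoothness}(c) together with $\epsilon_d<\frac{1-\kappa}{4}\epsilon_p$.
\item This shows that the gap $g^*_{j,n}(p^*_{j,n}(d),\delta_{j,n}(d_0))-g^*_{j,n}(p_n^*(d_0),\delta_{j,n}(d_0))$ is at most $\underline{\sigma}\epsilon^2/8$.
\item A convex level-set argument then places $p^*_{j,n}(d)$ inside $\Omega_p$.
\item Finally, it adds the quadratic-growth inequalities for $g^*_{j,n}(\cdot,\delta_{j,n}(d_0))$ and $g^*_{j,n}(\cdot,d)$. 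This gives $\underline{\sigma}\norm{p^*_{j,n}(d)-p_n^*(d_0)}_2^2\le(\delta_{j,n}(d_0)-d)^T(p^*_{j,n}(d)-p_n^*(d_0))$, hence $L_d=1/\underline{\sigma}$, the same constant you get.
\end{itemize}

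\textbf{Your route.} You localize by showing that $\nabla g^*_{j,n}(\cdot,d)$ points outward on the sphere of radius $r$. Hence $\bar\delta_{j,n}(B_r(p_n^*(d_0)))$ contains a ball of radius $\underline{\sigma}r/2$, and you then integrate the Jacobian bound.

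\textbf{What each buys.} Your version needs neither Lemma~\ref{Lemma: BoundedPrice-I} nor the $\epsilon_p$-positivity of $\delta_{j,n}(d_0)$. It gives the explicit radius $\epsilon_d=\underline{\sigma}r/2$. With $r=\epsilon/2$ it proves Lemma~\ref{Lemma: UniformCylinder} directly, rather than deducing it from the Lipschitz bound. The paper's version works with any element of the argmin and derives localization and uniqueness itself. It does not import the identification $p^*_{j,n}=\bar\delta_{j,n}^{-1}$ and the global invertibility in Lemma~\ref{Lemma: Unique_Equal_Population_Price}(b), whose proof in the paper only invokes the local inverse function theorem.

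\textbf{Removing that dependence from your proof.} Your interior minimizer $\hat p$ is a stationary point, with positive entries, of the convex function $g^*_{j,n}(\cdot,d)$. So it is a minimizer over $p\ge 0$, and it is unique by strong convexity on $\Omega_p$. Strong monotonicity of $\bar\delta_{j,n}$ on $\bar B_r(p_n^*(d_0))$ then gives the Lipschitz bound in one line, with no Jacobian integration. That one line is exactly the paper's final step.
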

\begin{proof}
Fix $P\in\mathcal{P}$. Let $N_0$ be given by Proposition \ref{Proposition: Boundedness and Smoothness}(c) and fix $n\geq N_0$. Proposition \ref{Proposition: Boundedness and Smoothness}(c) and Lemma \ref{Lemma: Unique_Equal_Population_Price}(a) together show that $\delta_{j,n}(d_0)>\frac{1-\kappa}{2}\epsilon_p\cdot e$ for all $j$. Select $\epsilon_d \in (0,\min\{\frac{1-\kappa}{4}\epsilon_p,\frac{(1-\kappa)\underline{\sigma}\epsilon_p\epsilon^2}{64\bar u}\})$. Fix $d$ such that $\norm{d-\delta_{j,n}(d_0)}_2\leq \epsilon_d$
    \begin{equation}\label{eqn: Lip_Strong}
        \begin{aligned}
            0\leq &g_{j,n}^*(p_{j,n}^*(d),\delta_{j,n}(d_0))-g_{j,n}^*(p^*_{j,n}(\delta_{j,n}(d_0)),\delta_{j,n}(d_0))\\
            = & \left(g_{j,n}^*(p_{j,n}^*(d),\delta_{j,n}(d_0))- g_{j,n}^*(p_{j,n}^*(d),d)\right) + \left(g_{j,n}^*(p_{j,n}^*(d),d) - g_{j,n}^*(p_{j,n}^*(\delta_{j,n}(d_0)),d)\right)\\ 
            &+  \left(g_{j,n}^*(p_{j,n}^*(\delta_{j,n}(d_0)),d) - g_{j,n}^*(p^*_{j,n}(\delta_{j,n}(d_0)),\delta_{j,n}(d_0))\right)\\
            \leq & \abs{g_{j,n}^*(p_{j,n}^*(d),\delta_{j,n}(d_0))- g_{j,n}^*(p_{j,n}^*(d),d)} + \abs{g_{j,n}^*(p_{j,n}^*(\delta_{j,n}(d_0)),d) - g_{j,n}^*(p^*_{j,n}(\delta_{j,n}(d_0)),\delta_{j,n}(d_0))}\\
            \leq & \frac{8\bar u}{(1-\kappa)\epsilon_p}\norm{d-\delta_{j,n}(d_0)} \leq \frac{\underline{\sigma}\epsilon^2}{8}
        \end{aligned}
    \end{equation}
    where the last line is by Lemma \ref{Lemma: BoundedPrice-I}. By Proposition \ref{Proposition: Boundedness and Smoothness}(b), if $\norm{p-p_{j,n}^*(\delta_{j,n}(d_0))}_2 = \norm{p-p_{n}^*(d_0)}_2\in (\frac{\epsilon}{2},\epsilon)$, then for all $0\leq j < n$,
    \begin{equation*}
        g_{j,n}^*(p,\delta_{j,n}(d_0)) - g_{j,n}^*(p_{j,n}^*(\delta_{j,n}(d_0)),\delta_{j,n}(d_0))\geq\frac{\underline{\sigma}}{2}\norm{p-p_{j,n}^*(\delta_{j,n}(d_0))}^2_2>\frac{\underline{\sigma}\epsilon^2}{8}
    \end{equation*}
    Together with the level set of a convex function is convex, 
    \begin{equation*}
        \left\{p:g_{j,n}^*(p,\delta_{j,n}(d_0))\leq g_{j,n}^*(p_{j,n}^*(\delta_{j,n}(d_0)),\delta_{j,n}(d_0))+\frac{\underline{\sigma}\epsilon^2}{8}\right\}\subseteq \Omega_p
    \end{equation*}
    Then, (\ref{eqn: Lip_Strong}) shows that $p_{j,n}^*(d)\in\Omega_p$. By strong convexity again,
    \begin{equation*}
        \begin{aligned}
            &g_{j,n}^*(p_{j,n}^*(d),\delta_{j,n}(d_0))-g_{j,n}^*(p^*_{j,n}(\delta_{j,n}(d_0)),\delta_{j,n}(d_0))\geq \frac{\underline{\sigma}}{2}\norm{p_{j,n}^*(d)-p_{j,n}^*(\delta_{j,n}(d_0))}^2_2\\
            &g_{j,n}^*(p_{j,n}^*(\delta_{j,n}(d_0)),d)-g_{j,n}^*(p^*_{j,n}(d),d)\geq \frac{\underline{\sigma}}{2}\norm{p_{j,n}^*(d)-p_{j,n}^*(\delta_{j,n}(d_0))}^2_2
        \end{aligned}
    \end{equation*}
    Then, adding up the two inequalities
    \begin{equation*}
        \underline{\sigma}\norm{p_{j,n}^*(d)-p_{j,n}^*(\delta_{j,n}(d_0))}^2_2\leq (\delta_{j,n}(d_0)-d)^T(p_{j,n}^*(d)-p^*_{j,n}(\delta_{j,n}(d_0)))\leq \norm{d-\delta_{j,n}(d_0)}_2\norm{p_{j,n}^*(d)-p^*_{j,n}(\delta_{j,n}(d_0))}_2
    \end{equation*}
    Thus,
    \begin{equation*}
        \norm{p_{j,n}^*(d)-p_{j,n}^*(\delta_{j,n}(d_0))}_2\leq \frac{1}{\underline{\sigma}}\norm{d-\delta_{j,n}(d_0)}_2
    \end{equation*}
\end{proof}
To prove Lemma \ref{Lemma: UniformCylinder}, Fix $P\in\mathcal{P}$. Let $N_0$ be given by Proposition \ref{Proposition: Boundedness and Smoothness}(c) and fix $n\geq N_0$. We can assume $\epsilon_d$ given in Lemma \ref{Lemma: LipschitzPrice} is small enough such that $L_d\cdot\epsilon_d\leq \frac{\epsilon}{2}$. Then, Lemma \ref{Lemma: LipschitzPrice} implies that, for any $d\in B_{\epsilon_d}(\delta_{j,n}(d_0))$ and $0\leq j<n$, $p^*_{j,n}(d)\in B_{\frac{\epsilon}{2}}(p_n^*(d_0))\subseteq\Omega_p$. Then, by Proposition \ref{Proposition: Boundedness and Smoothness}(a) and (b), $\nabla g^*_{j,n}(p_{j,n}^*(d),d)=0$. Thus, $d = \bar\delta_{j,n}(p^*_{j,n}(d))$. Finally, by definition of $\Omega_{j,n}$, $\bar\delta_{j,n}(p^*_{j,n}(d))\in\Omega_{j,n}$. Thus, $d\in\Omega_{j,n}$. The proof is complete. 
\subsection{Proof of Lemma \ref{Lemma: TailProbMartingale}}
    Because $X_{j,n}$ and $Z_{j,n}$ are both measurable with respect to $\mathcal{F}_{j,n}$, $M_{j,n}$ is measurable with respect to $\mathcal{F}_{j,n}$. Also,
    \begin{equation*}
        \begin{aligned}
            \mathbb{E}\left[M_{j+1,n}\bigg\rvert\mathcal{F}_{j,n}\right]=&\mathbb{E}\left[X_{j+1,n} - \sum_{k=0}^{j}Z_{k,n}\bigg\rvert\mathcal{F}_{j,n}\right]\\
            =&\mathbb{E}\left[X_{j+1,n} \bigg\rvert\mathcal{F}_{j,n}\right] - \sum_{k=0}^{j}Z_{k,n}\\
            =& X_{j,n} + Z_{j,n} - \sum_{k=0}^{j}Z_{k,n} = X_{j,n} - \sum_{k=0}^{j-1}Z_{k,n} = M_{j,n}
        \end{aligned}
    \end{equation*}
    Thus, $\{M_{j,n}\}_{j=0}^n$ is a martingale adapted to $\{\mathcal{F}_{j,n}\}_{j=0}^n$. In addition,
    \begin{equation}\label{eqn: martingale_frozen}
        \begin{aligned}
            (M_{j+1,n} - M_{j,n})\mathbbm{1}\{j\geq \tau_n\} = &\left(X_{j+1,n} - X_{j,n} - Z_{j,n}\right)\mathbbm{1}\{j\geq \tau_n\}\\
            =&\left(X_{j+1,n}-\E[X_{j+1,n}\bigg\rvert\mathcal{F}_{j,n}]\right)\mathbbm{1}\{j\geq \tau_n\}\\
            =&X_{j+1,n}\mathbbm{1}\{j\geq \tau_n\} - \E[X_{j+1,n}\mathbbm{1}\{j\geq \tau_n\}\bigg\rvert\mathcal{F}_{j,n}]\\
            =& 2\alpha e \mathbbm{1}\{j\geq \tau_n\} - 2\alpha e \mathbbm{1}\{j\geq \tau_n\} = 0 
        \end{aligned}
    \end{equation}
    and let $M_{j,n}(i)$ and $X_{j,n}(i)$ be their $i$th entry, then
    \begin{equation*}
        (M_{j+1,n}(i) - M_{j,n}(i))\mathbbm{1}\{j< \tau_n\}\leq \abs{X_{j+1,n}(i)-X_{j,n}(i)}\mathbbm{1}\{j< \tau_n\} + \E\left[\abs{X_{j+1,n}(i)-X_{j,n}(i)}\mathbbm{1}\{j< \tau_n\}\bigg\rvert\mathcal{F}_{j,n}\right]
    \end{equation*}
    By Assumption \ref{Assumption: Nonstationarity} and Lemma \ref{Lemma: BoundedPrice}
    \begin{equation*}
        \begin{aligned}
            &\abs{X_{j+1,n}(i)-X_{j,n}(i)}\mathbbm{1}\{j< \tau_n\}\\
            \leq &\left(\abs{d_{j+1,n}(i) -d_{j,n}(i)} + \abs{\delta_{j+1,n}(d_0)(i)-\delta_{j,n}(d_0)(i)}\right)\mathbbm{1}\{j< \tau_n\}\\
            \leq &\left(\frac{{a_{j+1,n}(i)}+d_{j,n}(i)}{n-j-1}+\frac{\E[{a_{j+1,n}}(i)]+\delta_{j,n}(d_0)(i)}{n-j-1}\right)\mathbbm{1}\{t< \tau_T\}\leq  \frac{4\alpha+\epsilon_d}{n-j-1}
        \end{aligned}
    \end{equation*}
    Then, by Azuma's Inequality and Union Bound,
    \begin{equation*}
        \begin{aligned}
            P\left\{\norm{M_{j,n}}_2\geq\eta\right\}\leq 2m\exp\left(-\frac{\eta^2}{(8\alpha+2\epsilon_d)^2\sum_{k=1}^{j}\frac{1}{(n-k)^2}}\right)\leq 2m\exp\left(-\frac{\eta^2(n-j-1)}{(8\alpha+2\epsilon_d)^2}\right)
        \end{aligned}
    \end{equation*}
\subsection{Proof of Theorem \ref{Theorem: LogStoppingTime}}
We start from the following Lemma which can be viewed as a sufficient condition for Theorem \ref{Theorem: LogStoppingTime}. Let $\epsilon_{j,n}$'s and $\tau_n$ be defined in Theorem \ref{Theorem: LogStoppingTime}.
\begin{lem}\label{Lemma: O(1)StoppingTime}
    Under Assumptions \ref{Assumption: Asymptotic Regime}, \ref{Assumption: Single-sample}, \ref{Assumption: Nonstationarity}, \ref{Assumption: p_star}, if there exists $\eta>0$ and $\{\iota_n\}_{n=1}^{+\infty}$ such that for all $P\in\mathcal{P}$, $\epsilon_{n-\iota-1,n}\in (0,\epsilon_d)$ when $n$ is large enough, then
    \begin{equation*}
        \E[n-\tau_n]\leq 2m\left(\exp\left(\frac{\eta^2}{(16\alpha+4\epsilon_d)^2}\right) - 1\right)^{-1} + \iota_n + 1
    \end{equation*}
    for all $n$ large enough.
\end{lem}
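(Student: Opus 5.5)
We bound $\E[n-\tau_n]$ with the tail-sum formula: since $\tau_n\le n-\iota_n$,
\begin{equation*}
\E[n-\tau_n]=\iota_n+\E[n-\iota_n-\tau_n]=\iota_n+\sum_{j=0}^{n-\iota_n-1}P\{\tau_n\le j\}.
\end{equation*}
It remains to show that $P\{\tau_n\le j\}$ decays geometrically in $n-j$. The geometric sum then produces the constant $2m(\exp(\eta^2/(16\alpha+4\epsilon_d)^2)-1)^{-1}$, and the extra $+1$ absorbs the boundary term.

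\emph{Step 1: the radii are valid and increasing.} By construction $\epsilon_{j+1,n}=\epsilon_{j,n}+\max_{d\in B_{\epsilon_{j,n}}(\delta_{j,n}(d_0))}\norm{Z_n(j,d)}_2$, so $\epsilon_{j,n}$ is nondecreasing in $j$. The hypothesis $\epsilon_{n-\iota_n-1,n}<\epsilon_d$ therefore gives $0<\epsilon_{j,n}<\epsilon_d$ for all $j\le n-\iota_n-1$. Hence (\ref{Definition: stopping_2}) holds, provided $\iota_n\ge\lceil\alpha/\underline{d}\rceil$, which we may assume for $n$ large. This makes Lemma \ref{Lemma: TailProbMartingale} and the comparison $\bar\tau_n\ge\tau_n$ available.

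\emph{Step 2: the drift stays inside the radius.} By induction on $j$ we show that on $\{j<\tau_n\}$,
\begin{equation*}
\Big\|\sum_{k=0}^{j-1}Z_{k,n}\Big\|_2\le\epsilon_{j,n}-\eta .
\end{equation*}
For $k<\tau_n$ we have $d_{k,n}\in B_{\epsilon_{k,n}}(\delta_{k,n}(d_0))$, so by (\ref{eqn: ZRelation}) $\norm{Z_{k,n}}_2=\norm{Z_n(k,d_{k,n})}_2\le\epsilon_{k+1,n}-\epsilon_{k,n}$. For $k\ge\tau_n$, the frozen definition of $d'$ gives $Z_{k,n}=0$. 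The sum therefore telescopes to at most $\epsilon_{j\wedge\tau_n,n}-\epsilon_{0,n}\le\epsilon_{j,n}-\eta$, using $\epsilon_{0,n}=\eta$.

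\emph{Step 3: turning the drift bound into a martingale tail.} Since $X_{j,n}=M_{j,n}+\sum_{k<j}Z_{k,n}$, Step 2 gives
\begin{equation*}
\norm{X_{j,n}}_2\le\norm{M_{j,n}}_2+\epsilon_{j,n}-\eta .
\end{equation*}
The delicate point is that Step 2 only controls the drift up to $\tau_n$. So take the first exit $j=\tau_n$. Applying the bound with $j=\tau_n$ uses only $Z_{k,n}$ for $k<\tau_n$, so it is legitimate, and at the exit time $\norm{X_{\tau_n,n}}_2>\epsilon_{\tau_n,n}$. This forces $\norm{M_{\tau_n,n}}_2>\eta$, and the martingale is frozen after $\tau_n$ by (\ref{eqn: martingale_frozen}), so $\norm{M_{j,n}}_2>\eta$ for every $j\ge\tau_n$. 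Hence
\begin{equation*}
\{\tau_n\le j\}\subseteq\{\norm{M_{j,n}}_2\ge\eta\},
\end{equation*}
and Lemma \ref{Lemma: TailProbMartingale} gives $P\{\tau_n\le j\}\le 2m\exp(-\eta^2(n-j-1)/(8\alpha+2\epsilon_d)^2)$.

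\emph{Step 4: summing.} Summing this tail over $j\le n-\iota_n-1$ is a geometric series in $n-j-1$. With the stated denominator $(16\alpha+4\epsilon_d)^2$, the ratio is $\exp(-\eta^2/(16\alpha+4\epsilon_d)^2)$; that denominator is four times $(8\alpha+2\epsilon_d)^2$, so it also covers any slack lost in the comparison. The series is bounded by $2m(\exp(\eta^2/(16\alpha+4\epsilon_d)^2)-1)^{-1}$, which yields the claim.

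The main obstacle is Step 3: making the drift bound hold exactly up to and including the exit time, and confirming that the freezing makes $\{\tau_n\le j\}$ a subset of the martingale event at the fixed time $j$.
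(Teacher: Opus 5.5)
Your proposal is correct and is essentially the paper's argument. Telescoping the radius recursion bounds the drift $\sum_{k<\tau_n}\norm{Z_{k,n}}_2$ by $\epsilon_{\tau_n,n}-\eta$, so a genuine exit forces $\norm{M_{\tau_n,n}}_2\geq\eta$; the frozen martingale carries this to every $j\geq\tau_n$; and Lemma~\ref{Lemma: TailProbMartingale} plus a geometric sum (with $\iota_n\geq 1$) finishes. Your threshold $\eta$ is sharper than the paper's $\eta/2$, which is where the stated denominator $(16\alpha+4\epsilon_d)^2=4(8\alpha+2\epsilon_d)^2$ comes from.

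One side remark is false: $Z_{k,n}\neq 0$ for $k\geq\tau_n$. Only $M$ freezes, while $X_{k,n}$ is centered at the moving $\delta_{k,n}(d_0)$ and jumps at $k=\tau_n$. You never use that remark, since the drift bound is only invoked with $j\leq\tau_n$.
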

\begin{proof}
    Fix $P\in\mathcal{P}$. Let $N_0$ be given in Proposition \ref{Proposition: Boundedness and Smoothness}(c). Fix $n\geq N_0$. If $n\leq \iota_n+1$, the result is trivial. Assume $n>\iota_n + 1$. By the construction of $\{\epsilon_{j,n}\}_{j=0}^{n-\iota_n-1}$, it is an increasing sequence. Thus, $0<\epsilon_{j,n}<\epsilon_d$ for all $0\leq j\leq n-\iota_n-1$. In addition, the construction of $\{\epsilon_{j,n}\}_{t=0}^{T-1}$ implies that
    \begin{equation*}
        \epsilon_{j,n}-\sum_{k=0}^{j-1}\max_{d\in B_{\epsilon_{k,n}}(\delta_{k,n}(d_0))}\norm{Z_n(k,d)}_2\geq\frac{\eta}{2}\quad\forall j=0,\cdots,n-\iota_n-1
    \end{equation*}
    Then, by (\ref{eqn: ZRelation}),
    \begin{equation*}
        P\left(\left\{\sum_{j=1}^{n-\iota_n-1}\left(\epsilon_{j,n}-\sum_{k=0}^{j-1}\norm{Z_{k,n}}_2\right)\mathbbm{1}\{\tau_n=j\}\geq\frac{\eta}{2}\right\}\;\;\bigcup\;\;\left\{\tau_n=n-\iota_n\right\}\right)=1
    \end{equation*}
    which implies that
    \begin{equation*}
        \mathbb{P}\left(\left\{\epsilon_{\tau_n,n}-\sum_{j=0}^{\tau_n-1}\norm{Z_{j,n}}_2\geq\frac{\eta}{2}\right\}\;\;\bigcup\;\;\left\{\tau_n=n-\iota_n\right\}\right)=1
    \end{equation*}
    Because $\norm{X_{\tau_n,n}}_2\geq\epsilon_{\tau_n,n}$ almost surely, then
    \begin{equation*}
        \begin{aligned}
            \norm{M_{\tau_n,n}}_2=\norm{X_{\tau_n,n}-\sum_{j=0}^{\tau_n-1}Z_{j,n}}_2\geq  \norm{X_{\tau_n,n}}_2-\norm{\sum_{j=0}^{\tau_n-1}Z_{j,n}}_2\geq \epsilon_{\tau_n,n}-\sum_{j=0}^{\tau_n-1}\norm{Z_{j,n}}_2
        \end{aligned}
    \end{equation*}
    Thus,
    \begin{equation}\label{eqn: martingale_universe}
        \mathbb{P}\left(\left\{\norm{M_{\tau_n,n}}\geq\frac{\eta}{2}\right\}\;\;\bigcup\;\;\left\{\tau_n=n-\iota_n\right\}\right)=1
    \end{equation}
    For any $1\leq j< n-\iota_n$, $\norm{X_{j,n}}_2\geq \epsilon_{j,n}$ if and only if $\tau_n\leq j$ because $0<\epsilon_{j,n}<\epsilon_d$. Then, by (\ref{eqn: martingale_frozen}), $\norm{X_{j,n}}_2\geq \epsilon_{j,n}$ implies that $M_{\tau_n,n}=M_{j,n}$. Together with (\ref{eqn: martingale_universe}) and Lemma \ref{Lemma: TailProbMartingale},
    \begin{equation*}
        \begin{aligned}
            \mathbb{P}\left\{\norm{X_{j,n}}_2\geq\epsilon_{j,n}\right\}\leq \mathbb{P}\left\{\norm{M_{j,n}}_2\geq\frac{\eta}{2}\right\}\leq 2m\exp\left(-\frac{\eta^2(n-j-1)}{(16\alpha+4\epsilon_d)^2}\right)
        \end{aligned}
    \end{equation*}
    Then,
    \begin{equation*}
        \begin{aligned}
            \E[n-\tau_n]=&\sum_{j=1}^{n-1}\mathbb{P}\{n-\tau_n\geq j\}\\
            \leq &\sum_{j=1}^{n-\iota_n-1}\mathbb{P}\left\{\norm{X_{j,n}}_2\geq\epsilon_{j,n}\right\}+\iota_n+1\\
            \leq & \sum_{j=1}^{n-\iota_n-1}2m\exp\left(-\frac{\eta^2(n-j-1)}{(16\alpha+4\epsilon_d)^2}\right)+\iota_n+1\leq 2m\left(\exp\left(\frac{\eta^2}{(16\alpha+4\epsilon_d)^2}\right) - 1\right)^{-1} + \iota_n + 1
        \end{aligned}
    \end{equation*}
\end{proof}
\noindent We are now ready to give the complete proof of Theorem \ref{Theorem: LogStoppingTime}. We start by showing the existence of $\eta$ and $\{\iota_n\}_{n=1}^{+\infty}$ that satisfy the hypothesis in Lemma \ref{Lemma: O(1)StoppingTime}. Fix $P\in\mathcal{P}$, let $N_0$ be given in Proposition \ref{Proposition: Boundedness and Smoothness}(c) and fix $n\geq N_0$. Given $d\in B_{\epsilon_d}(\delta_{j,n}(d_0))$ and $0\leq j< n$. Define
    \begin{equation*}
        \tilde p_{j,n}(d) = \arg\min_{p\geq 0} d^Tp+\frac{1}{n-j}\sum_{k=j+1}^n\left(\tilde u_{k,n}-{\tilde a_{k,n}}^Tp\right)^{+}
    \end{equation*}
    Then, by Lemma \ref{Lemma: Unique_Equal_Population_Price}
    \begin{equation*}
        \begin{aligned}
            &Z_n(j,d)\\
            =&\frac{1}{n-j-1}\E\left[(n-j)d-a_{j+1,n}\mathbbm{1}\left\{u_{j+1,n}>{a_{j+1,n}}^T\tilde p_{j,n}(d)\right\}\right]-\bar\delta_{j+1,n}(p_n^*(d_0))-\left(d-\bar\delta_{j,n}(p_n^*(d_0))\right)\\
            =&\frac{1}{n-j-1}\E\left[a_{j+1,n}\left(\mathbbm{1}\{u_{j+1,n}>{a_{j+1,n}}^Tp_{j,n}^*(d)\}-\mathbbm{1}\{u_{j+1,n}>{a_{j+1,n}}^T\tilde p_{j,n}(d)\}\right)\right]\\
            &+\frac{1}{n-j-1}\E\left[(n-j)d-a_{j+1,n}\mathbbm{1}\left\{u_{j+1,n}>{a_{j+1,n}}^Tp^*_{j,n}(d)\right\}\right]-\bar\delta_{j+1,n}(p_n^*(d_0))-\left(d-\bar\delta_{j,n}(p_n^*(d_0))\right)
        \end{aligned}
    \end{equation*}
    For simplicity of the notation, we define $A_{j,n}(p)\equiv a_{j,n}\mathbbm{1}\{u_{j,n}>{a_{j,n}}^Tp\}$. Together with the fact that $\nabla g_{j,n}^*(p_{j,n}^*(d);d)=0$ and $p_{n}^*(d_0)=p_{j,n}^*(\delta_{j,n}(d_0))$,
    \begin{equation}\label{eqn: Znjd_Decompose}
        \begin{aligned}
            &Z_n(j,d)\\
            =&\frac{\E\left[A_{j+1,n}(p_{j,n}^*(d))-A_{j+1,n}(\tilde p_{j,n}(d))\right]}{n-j-1}+\left(\bar\delta_{j+1,n}(p_{j,n}^*(d))-\bar\delta_{j+1,n}(p_{n}^*(d_0))\right) - \left(\bar\delta_{j,n}(p_{j,n}^*(d))-\bar\delta_{j,n}(p_{n}^*(d_0))\right)\\
            =&\frac{\bar\delta_{j+1,n}(p_{j,n}^*(d))-\bar\delta_{j+1,n}(p_{n}^*(d_0))}{n-j}-\frac{\E\left[A_{j+1,n}(p_{j,n}^*(d))-A_{j+1,n}(p_{n}^*(d_0))\right]}{n-j}+\frac{\E\left[A_{j+1,n}(p_{j,n}^*(d))-A_{j+1,n}(\tilde p_{j,n}(d))\right]}{n-j-1}
        \end{aligned}
    \end{equation}
    We here assume $a_{j,n}$'s have probability density function. The proof for the case that $a_{j,n}$'s have probability mass function is the same. We first upper bound the difference between the first two terms in the last line of (\ref{eqn: Znjd_Decompose}). For any $p_1$, $p_2$ in $\Omega_p$ and $k\geq j+2$,
    \begin{equation*}
        \E[A_{k,n}(p_1)-A_{k,n}(p_2)] = \int_{\Xi} a \int_{{a}^Tp_1}^{{a}^Tp_2} f\left(\frac{k}{n},a,s\right)ds v\left(\frac{k}{n},a\right)da
    \end{equation*}
    Recall $f\left(\frac{k}{n},a,\cdot\right)$ is the conditional density of $u_{k,n}$ given $a_{k,n} = a$ and $v\left(\frac{k}{n},a\right)$ is the density of $a_{k,n}$. Then, for any $k\geq j + 2$,
    \begin{equation*}
        \begin{aligned}
            &\norm{\E[A_{k,n}(p^*_{j,n}(d))-A_{k,n}(p^*_{n}(d_0))]-\E[A_{j+1,n}(p^*_{j,n}(d))-A_{j+1,n}(p^*_{n}(d_0))]}_2\\
            =&\norm{\int_{\Xi}a\left(v\left(\frac{k}{n},a\right)\int_{a^Tp_{j,n}^*(d)}^{a^T{p_n^*(d_0)}}f\left(\frac{k}{n},a,s\right)ds - v\left(\frac{j+1}{n},a\right)\int_{a^Tp_{j,n}^*(d)}^{a^T{p_n^*(d_0)}}f\left(\frac{j+1}{n},a,s\right)ds\right)da}_2\\
            \leq&\bar\mu\alpha^{m+2}\norm{p_{j,n}^*(d)-p_n^*(d_0)}_2\left(\sup_{a\in\Xi}\sup_{s\in [\underline{s},\bar s]}\abs{f\left(\frac{k}{n},a,s\right)-f\left(\frac{j+1}{n},a,s\right)}+\sup_{a\in\Xi}\abs{v\left(\frac{k}{n},a\right)-v\left(\frac{j+1}{n},a\right)}\right)
        \end{aligned}
    \end{equation*}
    where $\underline{s}=\inf_{a\in\Xi_a,p\in\Omega_p}a^Tp$ and $\bar{s}=\sup_{a\in\Xi_a,p\in\Omega_p}a^Tp$. By Lemma \ref{Lemma: Unique_Equal_Population_Price}, Proposition \ref{Proposition: Boundedness and Smoothness}(b) and by Mean Value Theorem, there exists $\tilde d$ in between $d$ and $\delta_{j,n}(d_0)$ such that
    \begin{equation*}
        \begin{aligned}
            \norm{p_{j,n}^*(d)-p_{n}^*(d_0)}_2 = \norm{p_{j,n}^*(d)-p_{j,n}^*(\delta_{j,n}(d_0))}_2\leq \norm{G^*_{j,n}(p_{j,n}^*(\tilde d))^{-1}}_2\norm{d-\delta_{j,n}(d_0)}_2\leq\frac{\norm{d-\delta_{j,n}(d_0)}_2}{\underline{\sigma}}
        \end{aligned}
    \end{equation*}
    Thus,
    \begin{equation*}
        \begin{aligned}
            &\norm{\frac{\bar\delta_{j+1,n}(p_{j,n}^*(d))-\bar\delta_{j+1,n}(p_{n}^*(d_0))}{n-j}-\frac{\E\left[A_{j+1,n}(p_{j,n}^*(d))-A_{j+1,n}(p_{n}^*(d_0))\right]}{n-j}}_2\\
            \leq & \frac{\alpha^{m+2}\sum_{k=j+2}^{n}\left(\sup_{a\in\Xi}\sup_{s\in [\underline{s},\bar s]}\abs{f\left(\frac{k}{n},a,s\right)-f\left(\frac{j+1}{n},a,s\right)}+\sup_{a\in\Xi}\abs{v\left(\frac{k}{n},a\right)-v\left(\frac{j+1}{n},a\right)}\right)}{\underline{\sigma}(n-j)(n-j-1)}\norm{d-\delta_{j,n}(d_0)}_2
        \end{aligned}
    \end{equation*}
    To simplify the notation, we define
    \begin{equation*}
        L_{j,n} = \frac{\bar\mu\alpha^{m+2}\sum_{k=j+2}^{n}\left(\sup_{a\in\Xi}\sup_{s\in [\underline{s},\bar s]}\abs{f\left(\frac{k}{n},a,s\right)-f\left(\frac{j+1}{n},a,s\right)}+\sup_{a\in\Xi}\abs{v\left(\frac{k}{n},a\right)-v\left(\frac{j+1}{n},a\right)}\right)}{\underline{\sigma}(n-j)(n-j-1)}
    \end{equation*}
    For the last term in (\ref{eqn: Znjd_Decompose}), let $\nu$ be given in Proposition \ref{Proposition: TailProbDualConv},
    \begin{equation*}
        \begin{aligned}
            \norm{\E\left[A_{j+1,n}(p^*_{j,n}(d))-A_{j+1,n}(\tilde p_{j,n}(d))\right]}_2\leq &\E\left[\norm{\E\left[A_{j+1,n}(p^*_{j,n}(d))-A_{j+1,n}(\tilde p_{j,n}(d))\bigg\rvert \tilde p_{j,n}(d)\right]}_2\mathbbm{1}\{\norm{\tilde p_{j,n}(d)-p^*_{j,n}(d)}_2\leq \nu\}\right]\\
            &+\alpha\mathbb{P}\left\{\norm{\tilde p_{j,n}(d)-p^*_{j,n}(d)}_2> \nu\right\}
        \end{aligned}
    \end{equation*}
    By Proposition \ref{Proposition: TailProbDualConv},
    \begin{equation*}
        \alpha\mathbb{P}\left\{\norm{\tilde p_{j,n}(d)-p^*_{j,n}(d)}_2> \nu\right\}\leq \alpha C_1 \exp\left(-C_2\nu^2(n-j)\right)
    \end{equation*}
    Because $(u_{j+1,n},a_{j+1,n})$ and $\tilde p_{j,n}(d)$ are independent by Assumption \ref{Assumption: Single-sample}(c), Proposition \ref{Proposition: Boundedness and Smoothness}(b) and Mean Value Theorem imply that,
    \begin{equation*}
        \begin{aligned}
            &\norm{\E\left[a_{j+1,n}\left(\mathbbm{1}\{u_{j+1,n}>{a_{j+1,n}}^Tp_{j,n}^*(d)\}-\mathbbm{1}\{u_{j+1,n}>{a_{j+1,n}}^T\tilde p_{j,n}(d)\}\right)\bigg\rvert \tilde p_{j,n}(d)\right]}_2\mathbbm{1}\{\norm{\tilde p_{j,n}(d)-p^*_{j,n}(d)}_2\leq \nu\}\\
            \leq &\bar\sigma\norm{p_{j,n}^*(d)-\tilde p_{j,n}(d)}_2
        \end{aligned}
    \end{equation*}
    Together with Theorem \ref{Theorem: DualConvergence}, there exists a constant $C_{3}$ that holds for all $P\in\mathcal{P}$ and does not depend on $j,\;n$ and $d$ such that
    \begin{equation*}
        \norm{\E\left[A_{j+1,n}(p^*_{j,n}(d))-A_{j+1,n}(\tilde p_{j,n}(d))\right]}_2\leq C_3\sqrt{\frac{\log n}{n - j -1}}
    \end{equation*}
    for all $n\geq N_{0}$. Define $e_{0,n} = \eta$, and
    \begin{equation*}
        e_{j+1,n} = e_{j,n} + L_{j,n}e_{j,n} + C_3\sqrt{\log n}(n-j-1)^{-\frac{3}{2}}\quad 0\leq j\leq n-2
    \end{equation*}
    Then, by Proposition \ref{Proposition: Boundedness and Smoothness}(d), 
    \begin{equation*}
        \begin{aligned}
            e_{n-\iota-1,n} = &\eta\prod_{j=0}^{n-\iota-2}(1+L_{j,n}) + C_3\sqrt{\log n}\sum_{j=1}^{n-\iota-1}\left(\prod_{q=j}^{n-\iota-2}(1+L_{q,n})(n-j)^{-\frac{3}{2}}\right)\\
            \leq & \eta e^{\frac{\mu\bar L\alpha^{m+2}}{\underline{\sigma}}} + C_3 e^{\frac{\mu\bar L\alpha^{m+2}}{\underline{\sigma}}}\sqrt{\log n}\sum_{j=\iota+1}^{n-1}j^{-\frac{3}{2}}
        \end{aligned}
    \end{equation*}
    In addition,
    \begin{equation*}
        \begin{aligned}
            \sum_{j=\iota+1}^{T-1}j^{-\frac{3}{2}}\leq\int_{\iota}^{+\infty}s^{-\frac{3}{2}}ds=\frac{2}{\sqrt{\iota}}
        \end{aligned}
    \end{equation*}
    Thus, there exists $\eta>0$ and constants $C_4$ hold for all $P\in\mathcal{P}$ such that, when $n\geq N_0$, $\iota_n\equiv \left\lceil C_4\log n\right\rceil\geq\left\lceil\frac{\alpha}{\underline{d}}\right\rceil$ and $e_{n-\iota_n-1,n}(\eta)<\epsilon_d$. In addition, $\epsilon_{0,n}(\eta)=e_{0,n}$. Suppose that $\epsilon_{j,n}\leq e_{j,n}$, then 
    \begin{equation*}
        \max_{d\in B_{\epsilon_{j,n}}(\delta_{j,n}(d_0))}\norm{Z(j,d)}_2\leq L_{j,n}e_{j,n} + C_3(n-j-1)^{-\frac{3}{2}} 
    \end{equation*}
    which implies that $\epsilon_{j+1,n}\leq e_{j+1,n}$. By induction, $\epsilon_{n-\iota_n-1,n}<\epsilon_d$. Thus, the hypothesis in Lemma \ref{Lemma: O(1)StoppingTime} is satisfied by $\eta$ and $\iota_n$'s, and $\tau_n$ defined by $\eta$ and $\iota_n$ satisfies (\ref{Definition: stopping_2}). Therefore, $\tau_n\leq \bar\tau_n$ almost surely, and
    \begin{equation*}
        \begin{aligned}
            \E[n-\bar\tau_n]\leq\E[n-\tau_n]\leq 2m\left(\exp\left(\frac{\eta^2}{(16\alpha+4\epsilon_d)^2}\right) - 1\right)^{-1} + \left\lceil C_4\log n\right\rceil + 1\equiv C_{Res}\log n
        \end{aligned}
    \end{equation*}
    when $n$ is large enough. The proof is complete.
\subsection{Proof of Lemma \ref{Lemma: MyopicRegret}}
We start from a Lemma that characterize the difference between $p_{j,n}(d_{j,n})$ and $p^*_{j-1,n}(d_{_{j-1,n}})$.
\begin{lem}\label{Lemma: Locatep_jn}
    Under Assumptions \ref{Assumption: Asymptotic Regime}, \ref{Assumption: Nonstationarity} , \ref{Assumption: p_star}, for any $P\in\mathcal{P}$, when $n$ is large enough,
    \begin{equation*}
        \norm{p_{j,n}(d_{j,n})-p_{j-1,n}^*(d_{j-1,n})}_2\mathbbm{1}\{j<\bar{\tau}_n\}\leq  \left(\sup_{d\in\Omega_{j,n}} \norm{p_{j,n}(d)-p_{j,n}^*(d)}_2+\frac{2\alpha}{\underline{\sigma}(n-j)}\right)\mathbbm{1}\{j<\bar{\tau}_n\}
    \end{equation*}
\end{lem}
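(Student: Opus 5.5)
The plan is to split the difference with the triangle inequality, passing through the population price at the current state:
\begin{equation*}
\norm{p_{j,n}(d_{j,n})-p_{j-1,n}^*(d_{j-1,n})}_2\leq \norm{p_{j,n}(d_{j,n})-p^*_{j,n}(d_{j,n})}_2+\norm{p^*_{j,n}(d_{j,n})-p^*_{j-1,n}(d_{j-1,n})}_2 .
\end{equation*}
On the event $\{j<\bar\tau_n\}$ we have $d_{j,n}\in\Omega_{j,n}$ by the definition of $\bar\tau_n$. Hence the first term is bounded by $\sup_{d\in\Omega_{j,n}}\norm{p_{j,n}(d)-p_{j,n}^*(d)}_2$, which is exactly the first term of the claim. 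The remaining work is to show that the second, purely deterministic-in-form, term is at most $\frac{2\alpha}{\underline{\sigma}(n-j)}$ on this event.

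For the second term, set $q=p^*_{j-1,n}(d_{j-1,n})$. Since $j-1<\bar\tau_n$, we have $d_{j-1,n}\in\Omega_{j-1,n}$. By Lemma \ref{Lemma: Unique_Equal_Population_Price}(b), with $\Omega_p$ replaced by $B_{\frac{\epsilon}{2}}(p_n^*(d_0))$, this gives $q\in B_{\frac{\epsilon}{2}}(p_n^*(d_0))$ and $d_{j-1,n}=\bar\delta_{j-1,n}(q)$. I then introduce the auxiliary state $d'=\bar\delta_{j,n}(q)\in\Omega_{j,n}$, for which $p^*_{j,n}(d')=q$. This reduces the second term to $\norm{p^*_{j,n}(d_{j,n})-p^*_{j,n}(d')}_2$, a comparison of two population prices for the same index $j$.

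Next I bound $\norm{d_{j,n}-d'}_2$ by direct computation. We have $(n-j)d_{j,n}=(n-j+1)d_{j-1,n}-a_{j,n}x^{\pi}_{j,n}$. Similarly, $(n-j)\bar\delta_{j,n}(q)=(n-j+1)\bar\delta_{j-1,n}(q)-\E[a_{j,n}\mathbbm{1}\{u_{j,n}>a_{j,n}^Tq\}]$. Since $d_{j-1,n}=\bar\delta_{j-1,n}(q)$, subtracting gives
\begin{equation*}
d_{j,n}-d'=\frac{\E[a_{j,n}\mathbbm{1}\{u_{j,n}>a_{j,n}^Tq\}]-a_{j,n}x^{\pi}_{j,n}}{n-j}.
\end{equation*}
Bounding each of the two terms in the numerator by $\alpha$, using the boundedness of $a_{j,n}$ in Assumption \ref{Assumption: Nonstationarity}, gives $\norm{d_{j,n}-d'}_2\leq \frac{2\alpha}{n-j}$.

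It remains to show that $p^*_{j,n}$ is $\frac{1}{\underline{\sigma}}$-Lipschitz between $d_{j,n}$ and $d'$. This is the step I expect to be the main obstacle. The natural route through the Jacobian $-G^*_{j,n}(\cdot)^{-1}$ in Lemma \ref{Lemma: Unique_Equal_Population_Price}(c) needs the segment from $d'$ to $d_{j,n}$ to lie in $\Omega_{j,n}$, and $\Omega_{j,n}$ need not be convex.

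To avoid this, I will work on the price side, which is convex. Both $p_1=p^*_{j,n}(d_{j,n})$ and $p_2=q$ lie in the ball $B_{\frac{\epsilon}{2}}(p_n^*(d_0))\subseteq\Omega_p$, and they satisfy $\bar\delta_{j,n}(p_1)=d_{j,n}$ and $\bar\delta_{j,n}(p_2)=d'$. Since $\nabla g^*_{j,n}(p,d)=d-\bar\delta_{j,n}(p)$ and the Hessian $G^*_{j,n}$ has eigenvalues at least $\underline{\sigma}$ on the convex set $\Omega_p$ (Proposition \ref{Proposition: Boundedness and Smoothness}(b)), integrating the Hessian along the segment $[p_1,p_2]$ yields
\begin{equation*}
(\bar\delta_{j,n}(p_2)-\bar\delta_{j,n}(p_1))^T(p_1-p_2)\geq\underline{\sigma}\norm{p_1-p_2}_2^2 .
\end{equation*}
Applying Cauchy--Schwarz then gives $\norm{p_1-p_2}_2\leq\frac{1}{\underline{\sigma}}\norm{d_{j,n}-d'}_2\leq\frac{2\alpha}{\underline{\sigma}(n-j)}$. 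This is the same strong-monotonicity argument as in Lemma \ref{Lemma: LipschitzPrice}. Combining this with the first paragraph and multiplying through by $\mathbbm{1}\{j<\bar\tau_n\}$ completes the proof for $n\geq N_0$.
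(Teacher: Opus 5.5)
Your proof is correct and follows the paper's structure. You split through $p^*_{j,n}(d_{j,n})$, recognize $p^*_{j-1,n}(d_{j-1,n})$ as $p^*_{j,n}$ evaluated at the shifted state $d'=\frac{(n-j+1)d_{j-1,n}-\E[a_{j,n}\mathbbm{1}\{u_{j,n}>a_{j,n}^Tq\}]}{n-j}$ (your $\bar\delta_{j,n}(q)$), bound $\norm{d_{j,n}-d'}_2$ by $O(\frac{1}{n-j})$, and finish with a $\frac{1}{\underline{\sigma}}$-Lipschitz estimate for $p^*_{j,n}$.

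The one real difference is how you get that Lipschitz estimate. The paper applies the Mean Value Theorem in $d$, using the Jacobian $-G^*_{j,n}(p^*_{j,n}(d))^{-1}$ from Lemma \ref{Lemma: Unique_Equal_Population_Price}(c). This tacitly needs the whole segment from $d'$ to $d_{j,n}$ to stay in $\bar\Omega_{j,n}$, which is not checked and is not automatic when $n-j$ is small. Your strong-monotonicity argument on the convex ball $B_{\frac{\epsilon}{2}}(p_n^*(d_0))$, as in Lemma \ref{Lemma: LipschitzPrice}, uses only that both endpoints are prices in that ball. It therefore holds uniformly in $j$ and is the cleaner justification.

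One caveat, shared with the paper: $\Xi\subseteq[0,\alpha]^m$ gives only $\norm{a_{j,n}}_2\leq\sqrt{m}\,\alpha$, not $\alpha$. Each coordinate of $\E[a_{j,n}\mathbbm{1}\{u_{j,n}>a_{j,n}^Tq\}]-a_{j,n}x^{\pi}_{j,n}$ lies in $[-\alpha,\alpha]$, so what your computation actually yields is $\norm{d_{j,n}-d'}_2\leq\frac{\sqrt{m}\,\alpha}{n-j}$. Hence the constant $2\alpha$ is justified only when $m\leq 4$; in general it should be $\sqrt{m}\,\alpha$. This is harmless downstream, where only the $O(\frac{1}{n-j})$ rate is used.
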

\begin{proof}
    Fix $P\in\mathcal{P}$. Let $N_0$ be given in Proposition \ref{Proposition: Boundedness and Smoothness}(c), and fix $n\geq N_0$ and $1\leq j\leq n-1$. Then, define $EA_{j,n}(p) = \mathbb{E}\left[a_{j,n}\mathbbm{1}\{u_{j,n}>{a_{j,n}}^Tp\}\right]$, and by Lemma \ref{Lemma: Unique_Equal_Population_Price},
    \begin{equation*}
        p_{j-1,n}^*(d_{j-1,n})\mathbbm{1}\{j<\bar\tau_n\} = p_{j,n}^*\left(\frac{(n-j+1)d_{j-1,n}-EA_{j,n}(p_{j-1,n}^*(d_{j-1,n}))}{n-j}-d_{j,n}+d_{j,n}\right)\mathbbm{1}\{j<\bar\tau_n\}
    \end{equation*}
    Since on the event $\{j<\bar\tau_n\}$
    \begin{equation*}
        \begin{aligned}
            \norm{\frac{(n-j+1)d_{j-1,n}-EA_{j,n}(p_{j-1,n}^*(d_{j-1,n}))}{n-j}-d_{j,n}}_2\leq \norm{\frac{a_{j,n}\mathbbm{1}\{u_{j,n}>{a_{j,n}}^Tp_{j-1,n}^{\pi}\}-A_{j,n}(p_{j-1,n}^*(d_{j-1,n}))}{n-j}}_2\leq\frac{2\alpha}{n-j}
        \end{aligned}
    \end{equation*}    
    Lemma \ref{Lemma: Unique_Equal_Population_Price} and Mean Value Theorem implies that
    \begin{equation*}
        \begin{aligned}
            \norm{p^*_{j,n}(d_{j,n})-p_{j-1,n}^*(d_{j-1,n})}_2\mathbbm{1}\{j<\bar{\tau}_n\}\leq \frac{2\alpha}{\underline{\sigma}(n-j)}\mathbbm{1}\{j<\bar{\tau}_n\}
        \end{aligned}
    \end{equation*}
    Then, by triangular inequality,
    \begin{equation*}
        \begin{aligned}
            \norm{p_{j,n}(d_{j,n})-p_{j-1,n}^*(d_{j-1,n})}_2\mathbbm{1}\{j<\bar{\tau}_n\}\leq \left(\sup_{d\in\Omega_{j,n}} \norm{p_{j,n}(d)-p_{j,n}^*(d)}_2+\frac{2\alpha}{\underline{\sigma}(n-j)}\right)\mathbbm{1}\{j<\bar{\tau}_n\}
        \end{aligned}
    \end{equation*}    
\end{proof}
\noindent We can now give the complete proof of Lemma \ref{Lemma: MyopicRegret}. Fix $P\in\mathcal{P}$. Let $N_0$ be given in Proposition \ref{Proposition: Boundedness and Smoothness}(c), and fix $n\geq N_0$ and $1\leq j\leq n-1$. Define 
\begin{equation*}
    \begin{aligned}
        &\xi_{j,n,1} = \left(\sup_{d\in\Omega_{j,n}} \norm{p_{j,n}(d)-p_{j,n}^*(d)}_2+\frac{2\alpha}{\underline{\sigma}(n-j)}\right)\cdot e\\
        &\xi_{j-1,n,2} = \left(\sup_{d\in\Omega_{j-1,n}} \norm{\tilde p_{j-1,n}(d)-p_{j-1,n}^*(d)}_2\right)\cdot e
    \end{aligned}
\end{equation*}
where $e$ is the vector with all entries equal to $1$ and $\tilde p_{j,n}(d)$ is defined the same as in the proof of Theorem \ref{Theorem: LogStoppingTime}. Then, by Lemma \ref{Lemma: Locatep_jn}, on the event $\{j<\bar\tau_n\}$
\begin{equation*}
    \begin{aligned}
        {a_{j,n}}^Tp_{j,n}(d_{j,n}) = &{a_{j,n}}^Tp^*_{j-1,n}(d_{j-1,n}) + {a_{j,n}}^T\left(p_{j,n}(d_{j,n})-{a_{j,n}}^Tp^*_{j-1,n}(d_{j-1,n})\right)\\
        \geq & {a_{j,n}}^Tp^*_{j-1,n}(d_{j-1,n}) - {a_{j,n}}^T\xi_{j,n,1} = {a_{j,n}}^T\left(p^*_{j-1,n}(d_{j-1,n}) - \xi_{j,n,1}\right)
    \end{aligned}
\end{equation*}
In addition, on the event $\{j<\bar\tau_n\}$
\begin{equation*}
    \begin{aligned}
        {a_{j,n}}^Tp_{j-1,n}^{\pi} = &{a_{j,n}}^Tp^*_{j-1,n}(d_{j-1,n}) + {a_{j,n}}^T\left(p_{j-1,n}^{\pi}-{a_{j,n}}^Tp^*_{j-1,n}(d_{j-1,n})\right)\\
        \leq & {a_{j,n}}^Tp^*_{j-1,n}(d_{j-1,n}) + {a_{j,n}}^T\xi_{j,n,2} = {a_{j,n}}^T\left(p^*_{j-1,n}(d_{j-1,n}) + \xi_{j-1,n,2}\right)
    \end{aligned}
\end{equation*}
For simplicity of the notation, define $\Delta_{j,n}(p) = u_{j,n} - {a_{j,n}}^Tp$. Then,
\begin{equation}\label{eqn: Myopic}
    \begin{aligned}
        &\E\left[\Delta_{j,n}(p_{j,n}(d_{j,n}))\left(\mathbbm{1}\{\Delta_{j,n}(p_{j,n}(d_{j,n}))>0\}-\mathbbm{1}\{\Delta_{j,n}(p_{j-1,n}^{\pi})>0\}\right)\mathbbm{1}\{j<\bar\tau_n\}\right]\\
        =&\E\left[\Delta_{j,n}(p_{j,n}(d_{j,n}))\left(\mathbbm{1}\{\Delta_{j,n}(p_{j,n}(d_{j,n}))>0\}-\mathbbm{1}\{\Delta_{j,n}(p_{j-1,n}^{\pi})>0\}\right)\mathbbm{1}\{\Delta_{j,n}(p_{j,n}(d_{j,n}))>0\}\mathbbm{1}\{j<\bar\tau_n\}\right]\\
        &+\E\left[\Delta_{j,n}(p_{j,n}(d_{j,n}))\left(\mathbbm{1}\{\Delta_{j,n}(p_{j,n}(d_{j,n}))>0\}-\mathbbm{1}\{\Delta_{j,n}(p_{j-1,n}^{\pi})>0\}\right)\mathbbm{1}\{\Delta_{j,n}(p_{j,n}(d_{j,n}))<0\}\mathbbm{1}\{j<\bar\tau_n\}\right]
    \end{aligned}
\end{equation}
For the first term on the right-hand-side of (\ref{eqn: Myopic}),
\begin{equation*}
    \begin{aligned}
        &\E\left[\Delta_{j,n}(p_{j,n}(d_{j,n}))\left(\mathbbm{1}\{\Delta_{j,n}(p_{j,n}(d_{j,n}))>0\}-\mathbbm{1}\{\Delta_{j,n}(p_{j-1,n}^{\pi})>0\}\right)\mathbbm{1}\{\Delta_{j,n}(p_{j,n}(d_{j,n}))>0\}\mathbbm{1}\{j<\bar\tau_n\}\right]\\
        \leq & \E\left[\Delta_{j,n}(p^*_{j-1,n}(d_{j-1,n})-\xi_{j,n,1})\left(\mathbbm{1}\{\Delta_{j,n}(p^*_{j-1,n}(d_{j-1,n})-\xi_{j,n,1})>0\}\right.\right.\\
        &\left.\left.\qquad\qquad\qquad\qquad\qquad\qquad\qquad-\mathbbm{1}\{\Delta_{j,n}(p^*_{j-1,n}(d_{j-1,n}) + \xi_{j-1,n,2})>0\}\right)\mathbbm{1}\{j<\bar\tau_n\}\right]\\
        \leq &\E\left[{a_{j,n}}^T(\xi_{j,n,1}+\xi_{j,n,2})\left(\mathbbm{1}\{\Delta_{j,n}(p^*_{j-1,n}(d_{j-1,n})-\xi_{j,n,1})>0\}-\mathbbm{1}\{\Delta_{j,n}(p^*_{j-1,n}(d_{j-1,n}) + \xi_{j-1,n,2})>0\}\right)\mathbbm{1}\{j\leq\bar\tau_n\}\right]\\
        \leq & \E\left[\norm{\xi_{j,n,1}+\xi_{j-1,n,2}}_2\norm{\E\left[DA_{j,n}\bigg\rvert d_{j-1,n},\xi_{j,n,1},\xi_{j-1,n,2}\right]\mathbbm{1}\{j\leq \bar\tau_n\}}_2\right]
    \end{aligned}
\end{equation*}
where
\begin{equation*}
    DA_{j,n} = A_{j,n}(p^*_{j-1,n}(d_{j-1,n})-\xi_{j,n,1})-A_{j,n}(p^*_{j-1,n}(d_{j-1,n}) + \xi_{j-1,n,2})
\end{equation*}
and $A_{j,n}(p)$ is defined the same as the proof of Theorem \ref{Theorem: LogStoppingTime}. Let $\nu$ be given in Proposition \ref{Proposition: TailProbDualConv}. We can assume $\nu$ is small enough and that there exits $\tilde\iota$ not depending on $P$, $n$ and $j$ is large enough such that $\sup_{d\in\Omega_{j,n}} \norm{p_{j,n}(d)-p_{j,n}^*(d)}_2\leq \nu$ and $\sup_{d\in\Omega_{j-1,n}} \norm{\tilde p_{j-1,n}(d)-p_{j-1,n}^*(d)}_2\leq \nu$ imply that both $p^*_{j-1,n}(d_{j-1,n})-\xi_{j,n,1})$ and $p^*_{j-1,n}(d_{j-1,n})+\xi_{j-1,n,2})$ are in $\Omega_p$ when $j\leq \bar\tau_n\wedge n-\tilde\iota$. Let 
\begin{equation*}
    \mathcal{E}_{j,n} = \left\{\sup_{d\in\Omega_{j,n}} \norm{p_{j,n}(d)-p_{j,n}^*(d)}_2\leq \nu\right\}\cap \left\{\sup_{d\in\Omega_{j-1,n}} \norm{\tilde p_{j-1,n}(d)-p_{j-1,n}^*(d)}_2\leq \nu\right\}    
\end{equation*}
Then, by Proposition \ref{Proposition: Boundedness and Smoothness}(b) and Mean Value Theorem,
\begin{equation*}
    \begin{aligned}
        &\norm{\E\left[DA_{j,n}\bigg\rvert d_{j-1,n},\xi_{j,n,1},\xi_{j-1,n,2}\right]\mathbbm{1}_{\mathcal{E}_{j,n}}\mathbbm{1}\{j\leq \bar\tau_n\wedge n - \tilde\iota\}}_2\leq \bar\sigma \norm{\xi_{j,n,1}+\xi_{j-1,n,2}}_2
    \end{aligned}
\end{equation*}
Thus, if $j < n-\tilde\iota$, by Theorem \ref{Theorem: DualConvergence}
\begin{equation*}
    \begin{aligned}
        &\E\left[\norm{\xi_{j,n,1}+\xi_{j-1,n,2}}_2\norm{\E\left[DA_{j,n}\bigg\rvert d_{j-1,n},\xi_{j,n,1},\xi_{j-1,n,2}\right]\mathbbm{1}_{\mathcal{E}_{j,n}}\mathbbm{1}\{j\leq \bar\tau_n\}}_2\right]\\
        \leq &\E\left[\norm{\xi_{j,n,1}+\xi_{j-1,n,2}}_2^2\right]\leq\frac{C_4\log n}{n-j}
    \end{aligned}
\end{equation*}
where $C_4$ holds for all $P\in\mathcal{P}$, $n$ and $j$. In addition, Proposition \ref{Proposition: TailProbDualConv}, Lemma \ref{Lemma: BoundedPrice} and Assumption \ref{Assumption: Nonstationarity} show that
\begin{equation*}
    \E\left[\norm{\xi_{j,n,1}+\xi_{j-1,n,2}}_2\norm{\E\left[DA_{j,n}\bigg\rvert d_{j-1,n},\xi_{j,n,1},\xi_{j-1,n,2}\right]\mathbbm{1}_{\mathcal{E}_{j,n}^c}\mathbbm{1}\{j\leq \bar\tau_n\wedge n-\tilde\iota\}}_2\right] = o\left(\frac{1}{n-j}\right)
\end{equation*}
where the little-o term hold uniformly over $P\in\mathcal{P}$ and $n$. Then, we can assume $\tilde\iota$ is large enough such that there exists a constant $C_{Myopic}$ holds for all all $P\in\mathcal{P}$, $n$ and $j$ such that when $j < n-\tilde\iota$
\begin{equation*}
    \begin{aligned}
        \E\left[\Delta_{j,n}(p_{j,n}(d_{j,n}))\left(\mathbbm{1}\{\Delta_{j,n}(p_{j,n}(d_{j,n}))>0\}-\mathbbm{1}\{\Delta_{j,n}(p_{j-1,n}^{\pi})>0\}\right)\mathbbm{1}\{\Delta_{j,n}(p_{j,n}(d_{j,n}))>0\}\mathbbm{1}\{j<\bar\tau_n\}\right]\leq\frac{C_{Myopic} \log n}{n-j}
    \end{aligned}
\end{equation*}
Same analysis can be applied to the second term on the right-hand-side of (\ref{eqn: Myopic}). Thus, the proof is complete.
\end{document}